\documentclass[11pt]{article}
\usepackage[T1]{fontenc}
\usepackage[utf8]{inputenc}
\usepackage{lmodern}
\usepackage[margin=1.05in]{geometry}
\usepackage{microtype}
\usepackage{amsmath,amssymb,amsthm,mathtools,bm}
\usepackage{booktabs,longtable,array}
\usepackage{enumitem}
\usepackage[dvipsnames]{xcolor}
\usepackage[colorlinks=true,linkcolor=MidnightBlue,citecolor=MidnightBlue,urlcolor=MidnightBlue]{hyperref}
\usepackage[nameinlink,noabbrev]{cleveref}
\usepackage{xcolor}
\usepackage{subcaption}
\usepackage{authblk}
\usepackage{soul}
\usepackage{makecell}
\usepackage{natbib}
\usepackage{thmtools}
\usepackage{thm-restate}

\numberwithin{equation}{section}

\theoremstyle{plain}
\newtheorem{theorem}{Theorem}[section]
\newtheorem{proposition}[theorem]{Proposition}
\newtheorem{lemma}[theorem]{Lemma}
\newtheorem{corollary}[theorem]{Corollary}

\newtheorem{remark}[theorem]{Remark}
\newtheorem{observation}[theorem]{Observation}
\newtheorem{definition}[theorem]{Definition}

\newcommand{\C}{\mathbb{C}}
\newcommand{\Enc}{\mathcal{E}}
\newcommand{\Dec}{\mathcal{D}}
\newcommand{\calE}{\mathcal{E}}
\newcommand{\calD}{\mathcal{D}}
\newcommand{\eps}{\varepsilon}
\newcommand{\ket}[1]{|#1\rangle}
\newcommand{\bra}[1]{\langle #1|}
\newcommand{\braket}[2]{\langle #1\,|\,#2\rangle}
\newcommand{\ketbra}[2]{|#1\rangle\!\langle #2|}
\newcommand{\proj}[1]{\ket{#1}\!\bra{#1}}
\newcommand{\norm}[1]{\left\lVert #1\right\rVert}
\newcommand{\abs}[1]{\left\lvert #1\right\rvert}
\newcommand{\krank}{\operatorname{rank}_{\mathrm{Kraus}}}

\newcommand{\tcb}[1]{\textcolor{black}{#1}}
\newcommand{\myBlue}{\color{black}}
\DeclareMathOperator{\Tr}{Tr}
\DeclareMathOperator{\rank}{rank}
\DeclareMathOperator{\ran}{ran}
\DeclareMathOperator{\im}{im}
\DeclareMathOperator{\supp}{supp}
\DeclareMathOperator{\Span}{span}

\DeclareMathOperator{\Ad}{Ad}

\AtBeginDocument{%
  \bibpunct{[}{]}{;}{a}{,}{,}%
}
\DeclareRobustCommand{\citeMy}[1]{\citep{#1}}
\title{Toward the Goldilocks blind compression of quantum states}

\author[1]{Hyunho Cha \thanks{ovalavo@snu.ac.kr}}
\author[2,3]{Chae-Yeun Park \thanks{chae-yeun@yonsei.ac.kr}}
\author[1]{Jungwoo Lee \thanks{junglee@snu.ac.kr}}

\affil[1]{NextQuantum and Department of Electrical and Computer Engineering, Seoul National University, Seoul 08826, Republic of Korea}
\affil[2]{School of Integrated Technology, Yonsei University, Seoul 03722, Republic of Korea}
\affil[3]{Department of Quantum Information, Yonsei University, Incheon 21983, Republic of Korea}

\date{}

\begin{document}

\maketitle

\begin{abstract}
Quantum autoencoders (QAEs) are learning architectures that compress quantum data into a low-dimensional latent state while preserving the information needed for reconstruction. We study blind single-copy compression of quantum states through a $k$-qubit bottleneck and investigate the minimal circuit width required to attain the information-theoretic optimum under average infidelity.
Between the conventional architecture, which is narrow but nonuniversal, and fully general \emph{completely positive and trace preserving} (CPTP) realizations, which are universal but overparameterized, we identify a \emph{Goldilocks} regime.
We prove that for every distribution of pure $n$-qubit states, there exists a QAE with exactly $k$ encoder ancillas and $n$ decoder ancillas that achieves the optimal fidelity over all CPTP encoder--decoder pairs.
The encoder-side statement is sharp in that we construct source families for which every optimal scheme necessarily uses at least $k$ encoder ancillas, thereby determining the universal encoder threshold exactly.
On the decoder side, we show that isometric decoders are exactly optimal for several analytically tractable source families, but we also exhibit an explicit counterexample demonstrating that decoder isometry is not universally sufficient.
Nevertheless, numerical experiments indicate that the performance gap is practically negligible.
\end{abstract}

\section{Introduction}

The rapid development of controllable quantum platforms has brought into focus a central challenge in quantum physics, namely how researchers can manipulate and preserve information encoded in exponentially large Hilbert spaces with finite and noisy resources. Manipulation of quantum states serves as the structural foundation for diverse fields, ranging from quantum simulation of many-body dynamics and chemistry to quantum computation, communication, and sensing \citeMy{Feynman1982-pa, lloyd1996universal, nielsen2010quantum, bernien2017probing, degen2017quantum, preskill2018quantum, bharti2022noisy}.
Yet the same exponential growth of state space that fuels quantum advantage also makes quantum memory and coherent processing increasingly scarce commodities.
For near-term devices in particular, the available number of high-fidelity qubits and coherent operations remains a dominant bottleneck \citeMy{preskill2018quantum, bharti2022noisy}.
This necessitates architectures that use the quantum hardware at an appropriate scale by being neither underpowered for the task nor overprovisioned in ways that amplify noise.

Compression is the canonical response to resource scarcity in classical information theory \citeMy{shannon1948mathematical}, and quantum information theory has long established that quantum data can likewise be compressed. The quantum noiseless coding theorem shows that an i.i.d. source described by a density matrix $\rho$ can be asymptotically and faithfully compressed to $S(\rho):=-\Tr(\rho\log\rho)$ qubits per signal \citeMy{jozsa1994new, schumacher1995quantum, dur2001visible, koashi2001compressibility, koashi2001teleportation, winter2002compression, khanian2022general}, with refinements that address converses and general fidelity limits \citeMy{barnum1996general}, universality with minimal prior knowledge \citeMy{jozsa1998universal, hayashi2002quantum}, and extensions to mixed-state sources and different notions of side information or visibility of the signal identity \citeMy{horodecki1998limits, hayden2002trading}.
Experimental demonstrations of this compression have also been realized \citeMy{mitsumori2003experimental}. At the same time, many emerging use cases operate far from the asymptotic i.i.d. regime and often call for single-copy, distribution-dependent compression under experimentally meaningful distortion measures \citeMy{barnum2000quantum, datta2012quantum, datta2013one, wilde2013quantum, salek2018quantum, hadiashar2020entanglement}.

A parallel set of ideas has emerged from machine learning. Classical autoencoders learn low-dimensional latent representations by jointly training an encoder and decoder to minimize reconstruction error, enabling data compression, denoising, and feature extraction \citeMy{hinton2006reducing, vincent2008extracting, bengio2013representation}. Variational autoencoders extend this paradigm into probabilistic latent-variable models that support sampling and regularization of representations \citeMy{kingma2013auto, rezende2014stochastic, zhao2017infovae}. These concepts have begun to influence quantum information processing through variational quantum algorithms and quantum machine learning models that optimize physically motivated cost functions using parameterized quantum circuits \citeMy{schuld2015introduction, biamonte2017quantum, mcclean2018barren, benedetti2019parameterized, khoshaman2019quantum, sim2019expressibility, cerezo2021variational, holmes2022connecting}.

Within this landscape, quantum autoencoders (QAEs) were proposed as a route to compression of quantum data that may not admit efficient classical descriptions \citeMy{romero2017quantum}. In a QAE, a trainable encoder maps an $n$-qubit input state into a $k$-qubit latent register ($k<n$) plus discarded degrees of freedom. A decoder then aims to reconstruct the original state from the latent register.
QAEs have been studied theoretically and deployed in diverse physical contexts \citeMy{romero2017quantum, ding2019experimental, lamata2019quantum, pepper2019experimental, bondarenko2020quantum, huang2020realization, cao2021noise, ma2023compression}.
These developments demonstrate that QAE-style compression is becoming an architectural primitive for quantum technologies.

A major open issue, however, concerns resource optimality. Most QAE implementations adopt a unitary--trace architecture with a circuit acting on the system and a small number of ancillas (or auxiliary quantum systems), which can severely restrict the set of realizable encoder and decoder channels. From the perspective of quantum channel theory, the most general encoder and decoder are \emph{completely positive and trace preserving} (CPTP) maps \citeMy{stinespring1955positive, kraus1971general, jamiolkowski1972linear, jiang2013channel}.
Indeed, recent fully quantum variational autoencoder frameworks have proposed universal constructions capable of representing arbitrary CPTP maps by introducing sufficiently large ancillas \citeMy{wang2025quantum}. While attractive in principle, this ``maximally expressive'' design collides with noisy reality and high training costs \citeMy{mcclean2018barren, preskill2018quantum, bharti2022noisy}.
This motivates a fundamental question: for a given objective, what is the minimal circuit width required to achieve the information-theoretic optimum?

In this work, we investigate this question for blind, single-copy quantum state compression under the \emph{infidelity} loss \citeMy{mahler2013adaptive, huang2020realization, cerezo2021variational}.
In essence, for the infidelity objective there is a \emph{Goldilocks} architecture wide enough to reach the information-theoretic optimum, yet narrow enough to avoid the redundant ancilla overhead.

On the encoder side, we determine the ancilla cost sharply. Specifically, we prove matching sufficient and necessary conditions on the encoder ancilla size. In this sense, the encoder ancilla requirement is characterized exactly. This singles out a natural minimal encoder architecture that is expressive enough to attain the optimum while avoiding unnecessary overhead.

We further provide both analytical and numerical evidence for a conceptually appealing practical principle. Although our counterexample demonstrates that, in blind single-copy compression with infidelity loss, an isometric decoder need not always attain the exact theoretical optimum, numerical results suggest that restricting the decoder to be isometric nevertheless yields empirically near-optimal performance. We prove exact optimality of isometric decoders for several sources with closed-form distributions, and we find consistent numerical support on datasets constructed from MNIST-encoded quantum states \citeMy{lecun2002gradient}, where the observed performance gap is negligible. This makes isometric decoders particularly attractive in practice.
They offer near-optimal performance while reducing model size and computational overhead.

\section{Preliminaries and prior work}
For a positive integer $d$, let $H_d \cong \mathbb{C}^d$ be a $d$-dimensional Hilbert space and let $L(H_d)$ denote the linear operators on $H_d$. A qubit is a two-dimensional quantum system, so an $n$-qubit register has Hilbert space $H_{2^n} \cong (\mathbb{C}^2)^{\otimes n}$ and hence dimension $2^n$. For quantum registers $A,B,\ldots$, we write $H_A,H_B,\ldots$ for the associated Hilbert spaces and $H_{AB} := H_A \otimes H_B$ for the composite system. If the same physical space is viewed under two factorizations, we write for example $AB \equiv CD$. We use the computational basis $\{ |x\rangle : x \in \{0,1\}^n\}$ of an $n$-qubit system (equivalently $\{ |j\rangle : 0 \le j \le 2^n-1\}$ after fixing the binary-to-integer identification). In Dirac notation, $|\psi\rangle$ is a \emph{ket}, $\langle \psi| := |\psi\rangle^{\dagger}$ is the corresponding \emph{bra}, and $\langle \phi|\psi\rangle$ is the inner product. State vectors are always taken to have unit norm.

A pure state is represented by a unit vector $|\psi\rangle$, or equivalently by the rank-one \emph{density operator} $|\psi\rangle\langle \psi|$. More generally, a mixed state on a Hilbert space $H$ is a density operator $\rho \in L(H)$, meaning $\rho \ge 0$ and $\Tr(\rho)=1$. Tensor products are used for spaces, state vectors, and operators. For a bipartite state $\rho_{AB} \in L(H_{AB})$, the reduced state on $A$ is $\rho_A := \Tr_B(\rho_{AB})$, where $\Tr_B$ is the partial trace. Operationally, this is exactly the operation of discarding subsystem $B$. Likewise $\rho_B := \Tr_A(\rho_{AB})$.

A \emph{superoperator} is a linear map between operator spaces. The physical maps considered here are quantum channels, namely CPTP maps $\Phi : L(H_A) \to L(H_B)$. Trace preservation means $\Tr[\Phi(X)] = \Tr[X]$, while complete positivity means that $\Phi \otimes \mathrm{id}_R$ sends positive operators to positive operators for every auxiliary register $R$. Closed-system evolution is unitary: if $U \in \mathrm{U}(d)$, the unitary group on $H_d$, then the induced channel is the conjugation map
\[
\Ad_U(\rho) := U\rho U^{\dagger}.
\]
More generally, every CPTP map admits a \emph{Kraus representation}
\[
\Phi(\rho) = \sum_{i=1}^r K_i \rho K_i^{\dagger},
\qquad
\sum_{i=1}^r K_i^{\dagger} K_i = I,
\]
and, equivalently, a Stinespring realization by adjoining an ancilla in a fixed state, applying a joint unitary, and partially tracing out an environment. The smallest possible number $r$ of Kraus operators is the \emph{Kraus rank} of $\Phi$.

A more detailed mathematical formulation can be found in Appendix~\ref{sec:appendix_prelim}.

\begin{figure}
    \centering
    \includegraphics[width=0.6\linewidth]{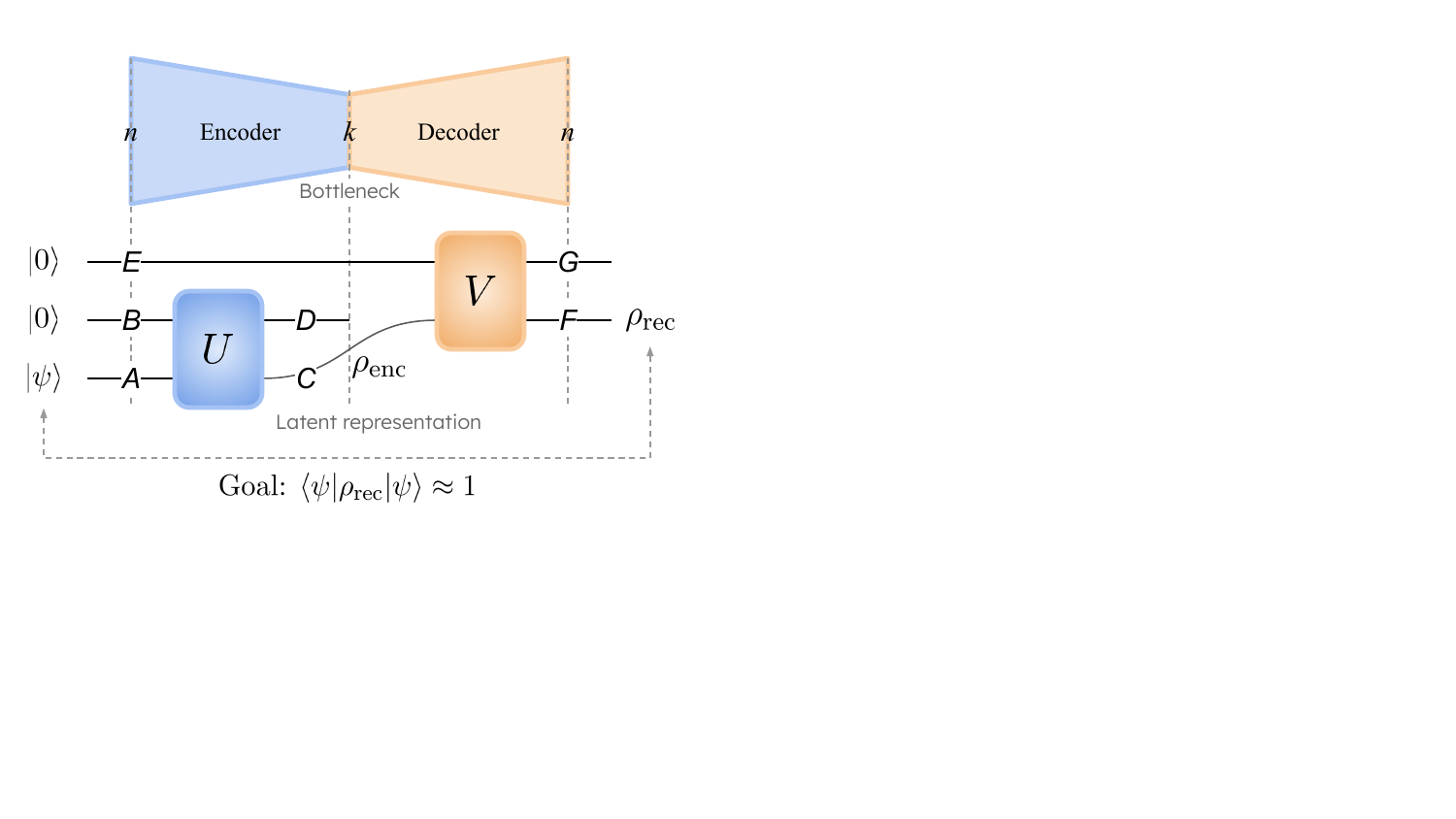}
    \caption{General structure of a QAE that compresses an $n$-qubit input state $|\psi\rangle$ into a $k$-qubit latent system and subsequently reconstructs the state using a decoder. \tcb{Labels A--G identify the individual quantum registers.}}
    \label{fig:schematic}
\end{figure}

Blind single-copy compression means that the encoder receives one copy of an unknown source state $|\psi\rangle$, drawn from a source distribution $\mu$, and must compress it to a smaller $k$-qubit register without being given a classical label for $|\psi\rangle$. The decoder then attempts to reconstruct the original state from that single compressed quantum register. When a pure state is said to be Haar-random, it is drawn from the unitarily invariant probability measure on the unit sphere of the relevant Hilbert space (or subspace) \citeMy{haar1933massbegriff}. A schematic depiction of the blind compression setting is provided in Figure~\ref{fig:schematic}.

\paragraph{Prior architectures.}
Among the many QAE constructions studied in the literature, two architectures are especially relevant for the resource question considered here. The \emph{conventional QAE} introduced in \citeMy{romero2017quantum, huang2020realization, bravo2021quantum, cao2021noise, ngairangbam2022anomaly, srikumar2022clustering, wu2024quantum} uses an $n$-qubit unitary on the input register, discards $n-k$ qubits to obtain a $k$-qubit latent register, and reconstructs by adjoining $n-k$ fresh ancillas followed by another $n$-qubit unitary. Anticipating the $(n,k,n_B,n_E)$ notation introduced in Section~\ref{sec:main_results}, this is the $(n,k,0,n-k)$ architecture. Its main virtue is its small circuit width, since both the encoder and decoder act on only $n$ qubits. However, it constrains the encoder to be a unitary-on-system followed by a partial trace with no additional encoder ancilla, and it constrains the decoder to be an isometric channel. Accordingly, it is not universal for blind single-copy compression through a $k$-qubit bottleneck.

At the opposite end of the expressivity spectrum, \citeMy{wang2025quantum} proposed the \emph{$\zeta$-QVAE}, a quantum variational autoencoder with mixed latent representations and an explicit latent-space regularization term. In that framework, both the encoder and decoder are modeled as general CPTP maps via \emph{Stinespring dilations} \citeMy{stinespring1955positive} with sufficiently large ancillas, so the construction is universal. Specialized to the compression setting studied here, one convenient realization corresponds to $n_B=n+2k$ encoder ancillas and $n_E=2n+k$ decoder ancillas, so that both associated unitaries act on $2(n+k)$ qubits. Thus the conventional QAE and the $\zeta$-QVAE represent the two extremes relevant to our analysis. The former is resource-efficient but nonuniversal, whereas the latter is universal but substantially wider than may be necessary. This contrast motivates the main question of the present work, namely whether there exists a strictly smaller universal architecture for blind single-copy compression under the infidelity objective.

\section{Main results}
\label{sec:main_results}

\myBlue
\paragraph{Overview.} In this section, we establish four primary results. First, $k$ encoder ancilla qubits and $n$ decoder ancilla qubits always suffice to match the best possible CPTP encoder--decoder pair. Second, the encoder count $k$ is unavoidable in the worst case. Third, the smallest physical decoder, an isometric decoder with $n-k$ ancilla qubits, is not always optimal. Fourth, despite this counterexample, isometric decoders are near-optimal when the source is concentrated near a $2^k$-dimensional subspace.
\color{black}

For an $n$-qubit source distribution $\mu$ over pure states $\ket{\psi}\in H_{2^n}$ and integers $n>k\ge 1$, we evaluate an encoder $\Enc:L(H_{2^n})\to L(H_{2^k})$ and a decoder $\Dec:L(H_{2^k})\to L(H_{2^n})$ by the average fidelity
\begin{equation}\label{eq:objective}
F_\mu(\Enc,\Dec)
:=
\mathbb{E}_{\ket{\psi}\sim \mu}\bigl[\bra{\psi}(\Dec\circ \Enc)(\proj{\psi})\ket{\psi}\bigr].
\end{equation}
The corresponding average infidelity is $L_\mu(\Enc,\Dec):=1-F_\mu(\Enc,\Dec)$.

\begin{definition}[$(n,k,n_B,n_E)$-QAE]\label{def:qae}
Fix $n>k\ge 1$. An $(n,k,n_B,n_E)$-QAE is specified by two unitaries $(U,V)$ and acts as follows.
\begin{enumerate}[label=(\roman*),leftmargin=2.2em]
\item Prepare the input state $\ket{\psi}$ on an $n$-qubit register $A$ and $\ket{0}^{\otimes n_B}$ on an $n_B$-qubit ancilla register $B$. Apply an $(n+n_B)$-qubit unitary $U$ and reinterpret $AB\equiv CD$, where $C$ has $k$ qubits and $D$ has $n+n_B-k$ qubits. The encoded state is
\[
\rho_{\mathrm{enc}}:=\Tr_D\bigl(U(\proj{\psi}_A\otimes \proj{0^{n_B}}_B)U^\dagger\bigr)\in L(H_C).
\]
\item Prepare $\ket{0}^{\otimes n_E}$ on an $n_E$-qubit ancilla register $E$. Apply a $(k+n_E)$-qubit unitary $V$ to $CE$ and reinterpret $CE\equiv FG$, where $F$ has $n$ qubits and $G$ has $k+n_E-n$ qubits. The reconstructed state is
\[
\rho_{\mathrm{rec}}:=\Tr_G\bigl(V(\rho_{\mathrm{enc}}\otimes \proj{0^{n_E}}_E)V^\dagger\bigr)\in L(H_F).
\]
\end{enumerate}
\end{definition}

More generally, if the encoder ancilla $B$ and decoder ancilla $E$ are initialized to
fixed density operators $\sigma_B \in L(H_B)$ and $\tau_E \in L(H_E)$, respectively,
then the induced encoder and decoder channels are
\[
\Enc_{U,\sigma_B}(\rho_A)
:=
\operatorname{Tr}_D\!\left[
U(\rho_A \otimes \sigma_B)U^\dagger
\right],
\qquad
\Dec_{V,\tau_E}(\rho_C)
:=
\operatorname{Tr}_G\!\left[
V(\rho_C \otimes \tau_E)V^\dagger
\right].
\]

\begin{restatable}{lemma}{AncillaAlwaysPure}
\label{lem:ancilla-state-reduction}
Fix $n > k \ge 1$, ancilla counts $n_B \ge 0$ and $n_E \ge n-k$, and a source
distribution $\mu$ over pure $n$-qubit states. Then
\[
\max_{U,V,\sigma_B,\tau_E}
F_\mu\!\left(\Enc_{U,\sigma_B}, \Dec_{V,\tau_E}\right)
=
\max_{U,V}
F_\mu\!\left(
\Enc_{U,|0^{n_B}\rangle\langle 0^{n_B}|},
\Dec_{V,|0^{n_E}\rangle\langle 0^{n_E}|}
\right),
\]
where on the left the maximum is over all unitaries $U,V$ of the appropriate sizes
and all density operators $\sigma_B \in L(H_B)$ and $\tau_E \in L(H_E)$, and on the
right the maximum is over the ordinary $(n,k,n_B,n_E)$-QAEs of Definition~\ref{def:qae}.
\end{restatable}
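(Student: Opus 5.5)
The plan is to show the two inequalities separately. The direction ``$\ge$'' is immediate, because the right-hand side is the special case $\sigma_B=\proj{0^{n_B}}$, $\tau_E=\proj{0^{n_E}}$ of the left-hand side. Both maxima exist: the objective $F_\mu$ is continuous in $(U,V,\sigma_B,\tau_E)$, and it is maximised over a compact set, namely a product of unitary groups and sets of density operators. The hypothesis $n_E\ge n-k$ only ensures that $G$ has a nonnegative number of qubits, so that the decoder is well defined.

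For ``$\le$'', fix a maximiser $(U,V,\sigma_B,\tau_E)$ of the left-hand side. The key observation is that $F_\mu$ is affine in each ancilla state separately:
\begin{itemize}
\item the map $\sigma_B\mapsto \Enc_{U,\sigma_B}(\rho_A)$ is linear;
\item the map $\tau_E\mapsto\Dec_{V,\tau_E}(\rho_C)$ is linear in $\tau_E$, and it is linear in $\rho_C$;
\item the expectation over $\mu$ and the map $X\mapsto\bra{\psi}X\ket{\psi}$ are linear.
\end{itemize}
Consequently, for fixed $U,V,\tau_E$ the function $\sigma_B\mapsto F_\mu(\Enc_{U,\sigma_B},\Dec_{V,\tau_E})$ is linear on the convex set of density operators. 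Decompose $\sigma_B=\sum_j p_j\proj{\phi_j}$ spectrally. Then the value equals $\sum_j p_j F_\mu(\Enc_{U,\proj{\phi_j}},\Dec_{V,\tau_E})$, so some pure $\ket{\phi_j}$ does at least as well. Repeating the same argument for $\tau_E$, with $\sigma_B$ now replaced by that pure state, yields a pure $\ket{\chi}$ on $E$. Neither step lowers the value.

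It remains to turn pure ancillas into $\ket{0\cdots0}$. Choose unitaries $W_B$ and $W_E$ on $B$ and $E$ with $W_B\ket{0^{n_B}}=\ket{\phi_j}$ and $W_E\ket{0^{n_E}}=\ket{\chi}$. Then set
\[
U' := U(I_A\otimes W_B), \qquad V' := V(I_C\otimes W_E).
\]
These are again unitaries of the correct sizes, and they satisfy $\Enc_{U',\proj{0^{n_B}}}=\Enc_{U,\proj{\phi_j}}$ and $\Dec_{V',\proj{0^{n_E}}}=\Dec_{V,\proj{\chi}}$. Hence the right-hand side is at least the left-hand maximum.

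The only point needing care is the order of the two reductions. $F_\mu$ is bilinear in $(\sigma_B,\tau_E)$, not jointly linear, so I reduce one variable at a time, holding the other fixed. Linearity in the second variable has to be checked once the first has already been replaced by a pure state. Since $\Dec$ is linear in its input, this check goes through.
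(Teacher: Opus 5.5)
Your proposal is correct and follows the paper's proof: the trivial inequality, spectral decomposition of the ancilla states using separate linearity of $F_\mu$, and absorbing the state-preparation unitaries $W_B, W_E$ into $U$ and $V$. The only cosmetic difference is that you purify $\sigma_B$ and $\tau_E$ one at a time, whereas the paper expands both at once as $\sum_{r,s} p_r q_s\, F_\mu(\Enc_r,\Dec_s)$ and picks the best pair $(r_\ast,s_\ast)$; the two are equivalent.
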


\begin{proof}
See Appendix~\ref{app:proof_pure_ancillas_suffice}.
\end{proof}

Lemma~\ref{lem:ancilla-state-reduction} implies that allowing arbitrary fixed mixed ancilla states does not change the optimal average fidelity.
Thus, without loss of generality, we may fix the encoder and decoder ancilla states to all-zero states for the purpose of infidelity minimization. We maintain this convention throughout the remainder of this work.

\begin{remark}
\label{rem:no-contradiction-mixed-reference}
At first glance, \citeMy{ma2024quantum} may appear to contradict Lemma~\ref{lem:ancilla-state-reduction}, since they report an advantage
from using mixed reference states. There is, however, no contradiction, because the optimization problem studied in \citeMy{ma2024quantum} is different from the problem considered here.
Lemma~\ref{lem:ancilla-state-reduction} applies to our separately linear objective \(F_\mu(\Enc,\Dec)\), whereas \citeMy{ma2024quantum} optimizes a nonlinear cost function. Moreover, their protocol is not blind, because it explicitly uses input-dependent reference states, effectively employing state-dependent side information.
\end{remark}

Our first theorem identifies a universal Goldilocks architecture by showing that $k$ ancillas on the encoder side and $n$ ancillas on the decoder side always suffice.

\begin{restatable}[Universal sufficiency of $(k,n)$ ancillas]{theorem}{UniversalKNSufficient}\label{thm:sufficiency}
For every distribution $\mu$ of pure $n$-qubit states, there exists an $(n,k,k,n)$-QAE attaining
$
\max_{\Enc,\Dec} F_\mu(\Enc,\Dec),
$
where the maximum ranges over all CPTP encoders $\Enc:L(H_{2^n})\to L(H_{2^k})$ and decoders $\Dec:L(H_{2^k})\to L(H_{2^n})$.
\end{restatable}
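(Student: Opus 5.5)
Our plan rests on two observations: the objective is linear in each channel separately, and every channel can be realized with a Stinespring dilation whose environment dimension equals its Kraus rank. Concretely, an encoder $\Enc:L(H_{2^n})\to L(H_{2^k})$ with Kraus rank $r$ can be realized on $n+n_B$ qubits, with $k$ output qubits kept, provided $2^{n+n_B}\ge 2^k r$. Likewise a decoder $\Dec:L(H_{2^k})\to L(H_{2^n})$ with Kraus rank $r'$ needs $2^{k+n_E}\ge 2^n r'$. Since the Kraus rank of any channel from $L(H_{d_{\rm in}})$ to $L(H_{d_{\rm out}})$ is at most $d_{\rm in}d_{\rm out}$, the decoder condition is automatic. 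Indeed $r'\le 2^{k}2^{n}$, so $2^{k+n}\ge 2^n r'$ holds with $n_E=n$. Hence the choice $n_E=n$ lets the decoder be an \emph{arbitrary} CPTP map, and the decoder side needs no optimality argument at all.

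For the encoder with $n_B=k$, the requirement $2^{n+k}\ge 2^k r$ says $r\le 2^n$. A generic encoder can have Kraus rank up to $2^{n+k}$, so here we must show that some optimal encoder has Kraus rank at most $2^n$. We would proceed as follows.
\begin{enumerate}[label=(\arabic*),leftmargin=2.2em]
\item Existence of a maximizer follows from compactness of the set of CPTP pairs and continuity of $F_\mu$. Fix an optimal pair $(\Enc^\star,\Dec^\star)$.
\item With $\Dec^\star$ fixed, $F_\mu(\Enc,\Dec^\star)=\Tr[J(\Enc)\,W]$ for the Choi operator $J(\Enc)$ and some positive operator $W$ on $H_{2^k}\otimes H_{2^n}$. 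Explicitly, $W$ is built from $\Dec^{\star\dagger}$ applied to $\mathbb{E}_\mu[\proj{\psi}\otimes\overline{\proj{\psi}}]$. Thus maximizing over $\Enc$ is a semidefinite program over $\{J\ge0,\ \Tr_{\rm out}J=I_{2^n}\}$, and $\Enc^\star$ is one of its optimizers.
\item Take an extreme point of the (compact, convex) optimal face. It is still an optimizer, so it still attains the global optimum together with $\Dec^\star$.
\item Apply Choi's characterization of extreme channels: if $\{K_i\}_{i=1}^r$ is a minimal Kraus family of an extreme channel, then the $r^2$ operators $K_i^\dagger K_j$ are linearly independent in $L(H_{2^n})$. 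This forces $r^2\le 2^{2n}$, i.e.\ $r\le 2^n$, which is exactly what is needed.
\end{enumerate}
The extreme-point step is where care is needed, and it could be replaced by a direct argument. Extreme points of the optimal face are extreme points of the full set of CPTP maps, because a face of a convex set has its extreme points among those of the whole set. So Choi's theorem applies directly to the chosen optimizer.

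Finally, we would assemble the circuit. Realize the extreme optimal encoder through a Stinespring isometry $H_{2^n}\to H_{2^k}\otimes H_{2^n}$, which is possible since $r\le 2^n$. Then extend this isometry to a unitary $U$ on $n+k$ qubits acting on the input tensored with $\ket{0^k}$, and take $C$ to be the $k$ output qubits. Do the same for $\Dec^\star$, using $k+n$ qubits with $F$ being $n$ of them. The resulting $(n,k,k,n)$-QAE of Definition~\ref{def:qae} attains $\max_{\Enc,\Dec}F_\mu$. Lemma~\ref{lem:ancilla-state-reduction} is not even needed, since our ancillas are already pure.

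Of these steps, the main obstacle is the Kraus-rank reduction for the encoder (steps 2–4), and in particular applying Choi's extremality criterion correctly on the right space. For this we note that the trace-preservation constraint lives on the $2^n$-dimensional input.
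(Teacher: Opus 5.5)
\textbf{Gap: the decoder side.} Your encoder argument (steps 1--4) is sound and is essentially the paper's route: linearity in the Choi matrix, passing to an extreme optimizer, and Choi's bound $r\le d_{\rm in}=2^n$. The decoder side, however, rests on an arithmetic error. With $n_E=n$, the discarded register $G$ of Definition~\ref{def:qae} has $k+n_E-n=k$ qubits. Your own condition $2^{k+n_E}\ge 2^n r'$ therefore reads $r'\le 2^k$. This is \emph{not} implied by $r'\le 2^k2^n$, which only gives $2^n r'\le 2^{2n+k}$. So $n_E=n$ does not let the decoder be arbitrary, and your step ``do the same for $\Dec^\star$'' fails whenever the fixed optimal decoder has Kraus rank above $2^k$.

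This case actually arises. For the Haar prior with $n>2k$, the optimal pair of Proposition~\ref{prop:haar-opt} has decoder $\sigma\mapsto\sigma\otimes I_B/2^{n-k}$, whose Kraus rank is $2^{n-k}>2^k$. Its encoder $\Tr_B$ is already extreme, so your step 3 may leave the pair unchanged. That decoder then cannot be realized in an $(n,k,k,n)$-QAE.

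\textbf{Repair.} Run your extremality argument a second time, now on the decoder.
\begin{enumerate}
\item After replacing $\Enc^\star$ by an extreme optimizer $\Enc'$ (with $\Dec^\star$ fixed), fix $\Enc'$.
\item Maximize the linear functional $\Dec\mapsto F_\mu(\Enc',\Dec)$ over the Choi set $\{J\ge 0,\ \Tr_{\rm out}J=I_{2^k}\}$, and take an extreme optimizer $\Dec'$.
\item The pair $(\Enc',\Dec')$ is still globally optimal, since $F_\mu(\Enc',\Dec')\ge F_\mu(\Enc',\Dec^\star)$.
\item Choi's criterion, applied now with the decoder's $2^k$-dimensional input, gives $\krank(\Dec')\le 2^k$.
\end{enumerate}
A Kraus rank of at most $2^k$ is exactly the room provided by the $k$-qubit register $G$. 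This is how the paper proceeds: Lemma~\ref{lem:extreme-optimum} and Theorem~\ref{thm:extreme-choi-rank} are applied to both channels. In other words, $n_E=n$ suffices only because of extremality; an arbitrary decoder would need $n_E=2n$. With this repair, your proof coincides with the paper's.
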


\noindent
\emph{Proof sketch.}
By a Choi-matrix reformulation, the fidelity functional is separately linear in $\Enc$ and $\Dec$. Hence an optimum can be chosen with both channels extreme. Extreme encoders $L(H_{2^n})\to L(H_{2^k})$ have Choi rank at most $2^n$, and extreme decoders $L(H_{2^k})\to L(H_{2^n})$ have Choi rank at most $2^k$. Padded Kraus lists of these lengths can be embedded into isometries
\[
\widetilde U:H_{2^n}\to H_{2^k}\otimes H_{2^n},
\qquad
\widetilde V:H_{2^k}\to H_{2^n}\otimes H_{2^k},
\]
which extend to unitaries on $n+k$ qubits. The full proof is developed throughout Appendix~\ref{sec:choi-proof}, concluding in Appendix~\ref{sec:universal_kn_sufficient_proof}.

The universal sufficiency theorem is sharp on the encoder side. The source family below already forces $k$ encoder ancillas.

\myBlue
\begin{theorem}[Informal]\label{thm:encoder-lb-informal}
For every choice of input size $n$ and bottleneck size $k<n$, there is a source
distribution of pure $n$-qubit states for which fewer than $k$ encoder ancillas
are insufficient to reach the best possible reconstruction fidelity.
More precisely, for this source, no $(n,k,n_B,n_E)$-QAE with $n_B<k$ can attain
the optimum value of $F_\mu(\Enc,\Dec)$ over all CPTP encoders and decoders
through the $k$-qubit bottleneck. Hence every globally optimal QAE for this
source must satisfy
$
n_B \ge k.
$
In other words, the $k$ encoder ancillas used in the universal construction are necessary in the worst case.
\end{theorem}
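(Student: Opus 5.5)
The plan is to make the informal statement precise through a Kraus-rank bound, and then to exhibit a source whose optimal encoders all have large Kraus rank.

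\emph{Reduction to Kraus rank.} In an $(n,k,n_B,n_E)$-QAE the encoder is $\rho\mapsto\Tr_D[U(\rho\otimes\proj{0^{n_B}})U^\dagger]$ with $D$ of $n+n_B-k$ qubits. Its Kraus operators are $(I_C\otimes\bra{j}_D)U(\,\cdot\,\otimes\ket{0^{n_B}})$, so its Kraus rank is at most $2^{n+n_B-k}$. For $n_B\le k-1$ this gives $\krank(\Enc)\le 2^{n-1}$. It therefore suffices to construct a source $\mu$ for which every encoder $\Enc$ that is part of some optimal pair $(\Enc,\Dec)$ over all CPTP maps satisfies $\krank(\Enc)>2^{n-1}$. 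Ideally we would show $\krank(\Enc)=2^n$, i.e.\ that the Choi matrix of $\Enc$ has full rank. By Lemma~\ref{lem:ancilla-state-reduction} the choice of ancilla states does not matter. Theorem~\ref{thm:sufficiency} shows that $n_B=k$ attains the optimum, so the bound is tight.

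\emph{Choice of source and computation of the optimum.} I would take $\mu$ to be the Haar measure on $H_{2^n}$, or a unitary $2$-design, which has the same second moments. Write $d=2^n$ and $m=2^k$. The standard twirl identity gives
\[
F_\mu(\Enc,\Dec)=\frac{d\,F_e(\Dec\circ\Enc)+1}{d+1},
\]
where $F_e$ is the entanglement fidelity on the maximally entangled state $\ket{\Phi_d}$. The problem becomes maximizing $\bra{\Phi_d}(\Dec\circ\Enc\otimes\mathrm{id})(\proj{\Phi_d})\ket{\Phi_d}$ through an $m$-dimensional bottleneck. I would prove the upper bound $F_e\le m^2/d^2$ by the usual argument: $(\Enc\otimes\mathrm{id})(\Phi_d)$ is supported on $C\otimes R$ with $\dim C=m$, and the overlap of the decoder output with $\Phi_d$ is bounded using $\Tr_R$ together with the operator inequality $\rho_{CR}\le m\,(I_C\otimes\rho_R)$, where $\rho_R=I/d$. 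I would then show the bound is attained, for instance by the encoder that sends $\ket{\Phi_d}$'s $R$-marginal-preserving purification to $\ket{\Phi_m}\otimes(\text{junk})$, i.e.\ a channel built by splitting $H_d\cong H_m\otimes H_{d/m}$ and discarding the second factor.

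\emph{Equality analysis (main obstacle).} The hard step is to characterize all optimizers and show that each optimal $\Enc$ has full Choi rank $d$. The trace-out encoder just mentioned has Kraus rank only $d/m$, so the Haar source cannot be the right one, and this is exactly where I expect the difficulty. The equality conditions only force the encoder output to be correlated with $R$ in a specific way; they do not force it to be spread out.

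I would therefore modify the source to break the degeneracy. One option is a mixture of Haar-random states with a source that rewards ``blurring'', such as a uniform distribution over a set of states whose optimal compression requires the encoder to use all $d$ input directions in superposition with orthogonal environment states. The aim is that, at the optimum, the Choi matrix $J_\Enc$ must satisfy a complementary-slackness condition. Fixing an optimal $\Dec$, linearity in $\Enc$ gives the condition $(\Lambda\otimes I)J_\Enc=0$ with a dual certificate $\Lambda=Y\otimes I_C-\Omega_\Dec\ge0$ (the encoder-side SDP dual). The goal is to design $\mu$ so that $\Lambda$ has kernel of dimension only $d$ arranged so that $J_\Enc$ must have rank $d$; for instance, the kernel is spanned by $d$ vectors $\ket{x}\otimes\ket{c(x)}$ with distinct $x$, all of which must appear in the support of $J_\Enc$ because $\Tr_C J_\Enc=I_d$. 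I would try this first for $n=2,k=1$, check numerically that every optimum has Kraus rank $4$, and then lift the construction by tensoring with a product structure to general $(n,k)$.
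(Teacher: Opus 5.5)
There is a genuine gap. Your reduction is correct and is the one the paper uses. An encoder built from $n_B$ ancillas has Kraus rank at most $2^{n+n_B-k}\le 2^{n-1}$ when $n_B<k$, so it suffices to exhibit a source all of whose optimal encoders have Kraus rank exceeding $2^{n-1}$. Full Choi rank $d$ is more than you need: the paper's source forces exactly $d-m+1$, which already exceeds $2^{n-1}$ because $m\le 2^{n-1}$.

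The proposal stops there, however. You correctly discard the Haar source, since the partial-trace pair of Proposition~\ref{prop:haar-opt} is optimal with encoder Kraus rank $d/m$. After that you never name a source, never determine its optimal pairs, and never prove a rank bound. The complementary-slackness design, the numerical check at $(n,k)=(2,1)$, and the lift are all placeholders. The lift is also not harmless. Optimal compression of a product source need not factor. Worse, padding an $n_0$-qubit source with fixed $\ket{0}$ qubits makes it perfectly compressible as soon as $k\ge n_0$, and then the conventional $(n,k,0,n-k)$-QAE is already optimal. So the step that carries the entire content of the theorem is missing.

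The missing idea is the source $\mu_{1,\eps}$ of states $\sqrt{1-\eps}\ket{0}+\sqrt{\eps}\ket{\varphi}$, with $\ket{\varphi}$ Haar on $\ket{0}^\perp$, analyzed perturbatively in $\eps$. The argument runs as follows.
\begin{enumerate}
\item The fidelity is a quadratic polynomial in $\eps$ whose constant term is $\bra{0}\Phi(\proj{0})\ket{0}$, so near-optimal channels must fix $\ket{0}$.
\item Let $\Phi=\Dec\circ\Enc$ factor through $\C^m$ and fix $\ket{0}$. Each Kraus operator then has block form $\bigl(\begin{smallmatrix}\alpha_i & v_i^\dagger\\ 0 & B_i\end{smallmatrix}\bigr)$, with $\rank B_i\le m-1$ whenever $\alpha_i\neq 0$. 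A Cauchy--Schwarz argument gives the first-order infidelity coefficient $c(\Phi)\ge (d-m)/(d-1)$ (Lemma~\ref{lem:universal-first-order}).
\item The equality cases are exactly the reset channels $\Phi_P(\rho)=P\rho P+\Tr((I-P)\rho)\proj{0}$ (Proposition~\ref{prop:classification-reset}). The key step there is the dimension count $\dim\im(\Phi)\le m^2$, which rules out distinct projections $P_i$.
\item Any factorization of $\Phi_P$ through $\C^m$ forces the decoder to be the inverse of an isometry onto $\im P$. Hence $\Enc$ is unitarily equivalent to $\Phi_P$ and has Kraus rank exactly $d-m+1$ (Proposition~\ref{prop:reset-factorization}).
\item The set of pairs with $\krank(\Enc)\le R$ is compact. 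Lemma~\ref{lem:compact-first-order} therefore turns the strict first-order gap into exact suboptimality for all sufficiently small $\eps$.
\end{enumerate}

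This is the ``blurring'' you were after. Each of the $d-m$ directions that do not fit through the bottleneck must be reset to $\ket{0}$ by its own Kraus operator $\ketbra{0}{e_j}$, while one more Kraus operator keeps coherence on $\im P$. What forces this is the explicit equality analysis, not a generic SDP dual certificate.
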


\noindent
\emph{Proof sketch.}
The proof uses a one-parameter source family $\{\mu_{1,\eps}\}_{0<\eps<1}$, whose construction is given in Appendix~\ref{sec:mu1-lb}. Intuitively, the states are all close to one fixed reference state, but their small
differences point in many possible directions. A compression scheme that is optimal must therefore preserve the reference state perfectly while also
handling these hidden directions as well as the $k$-qubit bottleneck allows.
The optimal behavior has a simple operational form. The channel keeps the part
of the input lying in a subspace of the same size as the bottleneck and sends
the remaining part back to the reference state. In other words, the best channel
for this source is a ``reset'' channel.
The key point is that implementing this reset behavior through the bottleneck requires a sufficiently rich encoder. It can be shown that any architecture
with $n_B<k$ cannot be optimal for this source. A detailed restatement and full proof are given in Appendix~\ref{sec:mu1-lb}.
\color{black}

\begin{corollary}[Sharp universal encoder threshold]\label{cor:sharp-encoder}
For blind single-copy compression under the infidelity objective, $k$ encoder ancillas are universally sufficient and, in the worst case, necessary.
\end{corollary}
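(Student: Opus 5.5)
The corollary combines two results already stated, and the plan is to verify that their hypotheses and conclusions fit together exactly. Nothing beyond those results needs to be constructed.

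\emph{Sufficiency.} Fix any source distribution $\mu$ of pure $n$-qubit states and any $n>k\ge 1$. Theorem~\ref{thm:sufficiency} gives an $(n,k,k,n)$-QAE attaining $\max_{\Enc,\Dec}F_\mu(\Enc,\Dec)$ over all CPTP pairs. We also need $n_B=k$ to be sufficient on its own, in the sense that extra encoder ancillas are never needed. Any QAE with $n_B=k$ can be viewed with $n_B'>k$ by letting $U$ act trivially on the extra ancillas and absorbing them into $D$, so the optimum stays attainable. By Lemma~\ref{lem:ancilla-state-reduction}, all-zero ancilla initialization loses nothing, so the statement does not depend on the ancilla-state convention. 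This settles the ``universally sufficient'' half: for every $\mu$, $k$ encoder ancillas together with $n\ (\ge n-k)$ decoder ancillas reach the CPTP optimum.

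\emph{Necessity.} Theorem~\ref{thm:encoder-lb-informal}, formalized in Appendix~\ref{sec:mu1-lb}, supplies for each $(n,k)$ a source (the family $\mu_{1,\eps}$) for which no $(n,k,n_B,n_E)$-QAE with $n_B<k$ attains the CPTP optimum, and this holds for every $n_E$. So no encoder ancilla count below $k$ can be universal: the quantifier ``for all $\mu$'' fails at $\mu_{1,\eps}$. Combined with sufficiency, the minimal universal encoder ancilla count is exactly $k$.

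The only step needing care is the quantifier bookkeeping. Necessity has to hold regardless of decoder width, which the theorem asserts for all $n_E$. Sufficiency uses the specific decoder width $n$. So the threshold is for $n_B$ with $n_E$ unconstrained, namely $n_E$ free or at least $n$, and I would state the corollary in that sense.
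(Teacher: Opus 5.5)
Your proposal is correct and matches the paper's argument: the corollary follows by combining Theorem~\ref{thm:sufficiency} (an $(n,k,k,n)$-QAE attains the CPTP optimum for every $\mu$) with the encoder lower bound for $\mu_{1,\eps}$ from Appendix~\ref{sec:mu1-lb}. That lower bound holds for every $n_E$, because the compact set $K_R$ there places no restriction on the decoder. Your remarks on monotonicity in $n_B$ and on Lemma~\ref{lem:ancilla-state-reduction} are harmless but not needed.
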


For the source \(\mu_{1,\eps}\) with sufficiently small $\eps$, as well as the sources discussed in Appendix~\ref{sec:haar-appendix}, the optimal fidelity is always achieved by an isometric decoder. Furthermore, as we will demonstrate in Section~\ref{sec:experiments}, isometric decoders also appear sufficient for the heuristic datasets constructed from real-world data (MNIST). In particular, there is no noticeable difference between the settings \((n_B,n_E)=(k,n-k)\) and \((n_B,n_E)=(k,n-k+1)\). Taken together, these observations might suggest that isometric decoders always suffice. A natural remaining question is whether the decoder ancilla count can \emph{always} be reduced to \(n-k\), equivalently, whether the decoder may always be restricted to an isometric channel. The answer is negative. In what follows, we establish this by constructing an explicit counterexample.

\begin{restatable}[Decoder isometry is not sufficient]{proposition}{DecoderCounterExample}\label{prop:counterexample}
Let $n=2$, $k=1$,
\begin{equation}
\label{eq:dec_counter_source}
\ket{\psi_\phi}=\frac{\ket{0}+e^{i\phi}\ket{1}}{\sqrt 2},
\qquad
\ket{\Psi_\phi}=\ket{\psi_\phi}^{\otimes 2},
\qquad
\phi\in[0,2\pi),
\end{equation}
and let $\mu_{\mathrm{ph}}$ be the uniform distribution on $\phi$. Then
\begin{equation}
\label{eq:dec_counter_optimum_cptp}
\sup_{\Enc,\,V^\dagger V=I_2} F_{\mu_{\mathrm{ph}}}(\Enc,\Dec_V)
<
\frac34
\le
\sup_{\Enc,\Dec} F_{\mu_{\mathrm{ph}}}(\Enc,\Dec),
\end{equation}
where $\Dec_V(\rho)=V\rho V^\dagger$.
\end{restatable}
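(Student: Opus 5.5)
The plan is to prove the two inequalities in \eqref{eq:dec_counter_optimum_cptp} separately. The right inequality needs an explicit CPTP pair. The left inequality needs an eigenvalue bound for isometric decoders, followed by an analysis showing that this bound cannot be attained.

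First, expand the source in charge sectors:
\[
\ket{\Psi_\phi}=\tfrac12\ket{00}+\tfrac{1}{\sqrt2}e^{i\phi}\ket{s}+\tfrac12 e^{2i\phi}\ket{11},
\qquad \ket{s}=(\ket{01}+\ket{10})/\sqrt2 .
\]
The average state is then
\[
\bar\rho=\mathbb{E}_\phi\proj{\Psi_\phi}=\tfrac14\proj{00}+\tfrac12\proj{s}+\tfrac14\proj{11}.
\]
The fidelity $F_{\mu_{\mathrm{ph}}}$ depends on $\Dec\circ\Enc$ only through its action on the operators $\ketbra{a}{b}$, with $a,b\in\{00,s,11\}$, weighted by Fourier factors $e^{i\phi(q_a-q_b)}$. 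Here $q_a$ is the charge of $a$, namely the number of $1$s. Since $F_{\mu_{\mathrm{ph}}}$ is linear in the composite channel and the source is invariant under the phase rotation $Z_\theta^{\otimes2}$, I would twirl both channels over this $U(1)$ action. This reduces the problem to phase-covariant encoders and decoders, with the latent qubit carrying charges $0$ and $1$; the latent phase generator may be shifted by a constant. Covariance leaves only a few real parameters, namely the Kraus operators that shift or preserve charge.

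For the lower bound $\tfrac34$, I would use a covariant ansatz. The encoder has Kraus operators
\[
K_1=\ketbra{0}{00}+\tfrac{1}{\sqrt2}\ketbra{1}{s},\qquad
K_2=\tfrac{1}{\sqrt2}\ketbra{0}{s}+\ketbra{1}{11},
\]
together with a third Kraus operator that sends the singlet to $\ket0$. This is trace preserving, and it transfers each phase difference $e^{i\phi}$ between adjacent charges to the latent coherence. The decoder has Kraus operators
\[
L_1=a\ketbra{00}{0}+b\ketbra{s}{1},\qquad L_2=c\ketbra{s}{0}+d\ketbra{11}{1},
\]
with $a^2+c^2=b^2+d^2=1$. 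Its two Kraus operators have orthogonal ranges, so it is CPTP but of Kraus rank $2$, hence not isometric. I would then compute $F$ as an explicit quadratic function of $(a,b,c,d)$ and of the encoder weights, and optimize it. The target is a value of at least $\tfrac34$. If this particular ansatz falls short, I would instead solve the covariant problem numerically as an SDP in the Choi matrix of $\Dec$ for the optimal covariant $\Enc$, and exhibit the resulting rational-valued solution.

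For the upper bound, fix an isometry $V$ and let $P_S=VV^\dagger$ be the projector onto its two-dimensional range $S$. Every output $\Dec_V(\cdot)$ is supported on $S$, so $\bra{\Psi}\Dec_V(\cdot)\ket{\Psi}\le\bra{\Psi}P_S\ket{\Psi}$. Averaging over $\phi$ gives $F\le\Tr(P_S\bar\rho)\le\tfrac12+\tfrac14=\tfrac34$, by Ky Fan's maximum principle. By compactness, the supremum over $(\Enc,V)$ is attained, so strictness reduces to ruling out the equality case.

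The equality analysis is the main obstacle. Equality forces $S=\Span\{\ket{s},\alpha\ket{00}+\beta\ket{11}\}$, because the eigenvalue $\tfrac14$ is degenerate. It also forces $\Dec_V\Enc(\proj{\Psi_\phi})$ to equal the pure state $P_S\ket{\Psi_\phi}/\norm{P_S\Psi_\phi}$ for almost every $\phi$. Consequently, $\Enc(\proj{\Psi_\phi})$ must be the pure latent state $V^\dagger P_S\ket{\Psi_\phi}$ after normalization. However, $\Enc(\proj{\Psi_\phi})$ is a trigonometric polynomial in $\phi$ of degree at most $2$, with coefficients that are fixed operators and a trace that is constant in $\phi$. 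The normalized targets $(\tfrac{\bar\alpha+\bar\beta e^{2i\phi}}{2}\ket{0}'+\tfrac{e^{i\phi}}{\sqrt2}\ket{1}')/\norm{\cdot}$ have a $\phi$-dependent norm whenever $\alpha\beta\neq0$. In the case $\alpha\beta=0$, the weight is $\tfrac14$ on one charge sector only, and purity cannot be matched. I would compare Fourier coefficients, using the division by the norm to produce an infinite Fourier series, and show that no $\Enc$ satisfies these constraints. This contradiction gives the strict inequality.
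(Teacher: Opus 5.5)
Your lower bound and your Ky Fan upper bound are the same as in the paper. Your decoder ansatz contains the paper's decoder at $a=d=1/\sqrt3$, $b=c=\sqrt{2/3}$. Your encoder acts as $\Tr_2$ on the symmetric subspace, so this pair gives fidelity exactly $\tfrac34$ for every $\phi$; state that instead of hedging with an SDP. The claimed reduction to covariant encoders \emph{and} decoders by twirling is not justified: the objective is bilinear, and averaging a pair does not preserve its value. Nothing in the proof depends on it, though.

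For the equality case with $\alpha\beta\neq 0$, your route is genuinely different from the paper's and simpler. The paper reduces $\gamma$ to $0$ and applies Chefles' Gram-matrix criterion separately for $\sin 2t<1$ and $\sin 2t=1$. You only need one observation. With $\ket{1'}:=V^\dagger\ket{s}$, the quantity $\bra{1'}\Enc(\proj{\Psi_\phi})\ket{1'}$ is a trigonometric polynomial, but it would have to equal $2/\bigl(3+2\Re(\alpha\bar\beta e^{2i\phi})\bigr)$, which is not one.

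The gap is the case $\alpha\beta=0$, i.e.\ $S=\Span\{\ket{00},\ket{s}\}$ or $S=\Span\{\ket{11},\ket{s}\}$. Your sentence about weight $\tfrac14$ in one charge sector and purity is not an argument, and your Fourier idea fails here. The normalization is constant, and the targets are themselves a degree-one trigonometric polynomial, e.g.\ $\ket{\xi_\phi}=\tfrac{1}{\sqrt3}\ket{0'}+\sqrt{2/3}\,e^{i\phi}\ket{1'}$ with $\ket{0'}:=V^\dagger\ket{00}$. Matching Fourier coefficients of $\Enc(\proj{\Psi_\phi})$ and $\proj{\xi_\phi}$ therefore gives only linear constraints, which some trace-preserving linear map does satisfy. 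The contradiction must come from complete positivity.

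One way to supply it:
\begin{itemize}
\item Write $\ket{\Psi_\phi}=\sum_{a=0}^{2}e^{ia\phi}\ket{v_a}$ with $\ket{v_0}=\tfrac12\ket{00}$, $\ket{v_1}=\tfrac{1}{\sqrt2}\ket{s}$, $\ket{v_2}=\tfrac12\ket{11}$.
\item Take a Stinespring isometry $W$ of $\Enc$ and set $X^a:=W\ket{v_a}=\ket{0'}\otimes A_a+\ket{1'}\otimes B_a$.
\item Purity of the output forces $(\bra{\xi^\perp_\phi}\otimes I)W\ket{\Psi_\phi}=0$ for all $\phi$, where $\ket{\xi^\perp_\phi}=\sqrt{2/3}\,\ket{0'}-\tfrac{1}{\sqrt3}e^{i\phi}\ket{1'}$.
\item Matching the modes $e^{-i\phi},1,e^{i\phi},e^{2i\phi}$ gives $B_0=0$, $B_1=\sqrt2A_0$, $B_2=\sqrt2A_1$, $A_2=0$.
\item Then $\norm{X^0}^2=\norm{A_0}^2=\tfrac14$ and $\norm{X^2}^2=2\norm{A_1}^2=\tfrac14$ force $\norm{X^1}^2=\norm{A_1}^2+2\norm{A_0}^2=\tfrac58\neq\tfrac12=\norm{v_1}^2$. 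This contradicts $W$ being an isometry.
\end{itemize}
The case $\Span\{\ket{11},\ket{s}\}$ follows from the same computation, or from the symmetry $X^{\otimes 2}\ket{\Psi_\phi}=e^{2i\phi}\ket{\Psi_{-\phi}}$. Alternatively, as in the paper, Chefles' criterion at $\phi\in\{0,\pi/2,\pi,3\pi/2\}$ forces a unit-diagonal matrix with eigenvalue $-\tfrac15$ on $(1,-1,1,-1)^\top$.
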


\noindent
\emph{Proof sketch.}
A concrete encoder--decoder pair with a decoder of Kraus rank 2 attains fidelity $3/4$ exactly. For an isometric decoder, averaging over $\phi$ produces a rank-$2$ projector optimization problem whose value is at most $3/4$. Equality would force a CPTP map sending the input family to a family of pure projected states. A Gram-matrix criterion shows that no such exact pure-state transformation exists. The full argument is given in Appendix~\ref{sec:counterexample-appendix}.

Because the gap is visually negligible, we employ two distinct methods for greater precision. The first uses a training set of 2000 states and a test set of 1000 states. The second is a more computationally efficient method that replaces the continuous integral over the phase distribution with an exact evaluation at a finite set of points, justified by the following observation.

\myBlue
\begin{observation}[Informal]
\label{obs:finite_eval}
The average fidelity for the source $\mu_{\mathrm{ph}}$ can be computed exactly from only five phases.
\end{observation}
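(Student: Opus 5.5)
The plan is to show that, for any fixed pair $(\Enc,\Dec)$, the integrand $f(\phi):=\bra{\Psi_\phi}(\Dec\circ\Enc)(\proj{\Psi_\phi})\ket{\Psi_\phi}$ is a trigonometric polynomial in $\phi$ of degree at most $4$. The exactness of an equally spaced five-point quadrature rule on the circle then finishes the argument. The precise statement I would prove is
\[
F_{\mu_{\mathrm{ph}}}(\Enc,\Dec)=\frac{1}{5}\sum_{j=0}^{4} f\!\left(\phi_0+\tfrac{2\pi j}{5}\right)
\]
for every offset $\phi_0$ and every CPTP pair $(\Enc,\Dec)$.

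\textbf{Step 1: Fourier structure of the source.} In the computational basis, $\ket{\Psi_\phi}=\tfrac12\sum_{x\in\{0,1\}^2} e^{i|x|\phi}\ket{x}$, where $|x|$ is the Hamming weight. Hence every amplitude carries a phase $e^{im\phi}$ with $m\in\{0,1,2\}$. Consequently each matrix entry of $\proj{\Psi_\phi}$ has the form $\tfrac14 e^{i(|x|-|y|)\phi}$, whose frequency lies in $\{-2,\dots,2\}$.

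\textbf{Step 2: Degree count.} Set $\Phi:=\Dec\circ\Enc$. This map is linear and does not depend on $\phi$, so each entry of $\Phi(\proj{\Psi_\phi})$ is a fixed linear combination of the input entries. Its frequencies therefore still lie in $\{-2,\dots,2\}$. Sandwiching between $\bra{\Psi_\phi}$ and $\ket{\Psi_\phi}$ multiplies each term by $e^{i(|y'|-|x'|)\phi}$, which shifts the frequency by at most $\pm2$. Thus
\[
f(\phi)=\sum_{m=-4}^{4} c_m e^{im\phi},
\]
where the coefficients $c_m$ depend only on $\Phi$. This can also be phrased via the Choi reformulation already used in the paper: $f$ is linear in the Choi matrix and quadratic in $\proj{\Psi_\phi}$.

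\textbf{Step 3: Quadrature exactness.} The uniform average is $\tfrac{1}{2\pi}\int_0^{2\pi} f=c_0$. For $N$ equally spaced points, $\tfrac1N\sum_j e^{im(\phi_0+2\pi j/N)}$ equals $e^{im\phi_0}$ when $N\mid m$ and $0$ otherwise. With $N=5$, no nonzero $m$ with $|m|\le 4$ is divisible by $5$. The five-point average therefore equals $c_0$ exactly, for every offset $\phi_0$.

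It is also worth noting that five points is minimal for this rule. With four points, the $m=\pm4$ terms alias onto the constant term, and there are channels with $c_{\pm4}\neq0$. I would mention this only as a remark.

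The main obstacle, which is mild, is Step 2. One must check carefully that $\Phi$ cannot raise the frequency content beyond that of its input, and that the two sandwiching factors contribute degree at most $2$ combined and not more. Linearity of $\Phi$ and its independence of $\phi$ settle this, so I would state that explicitly.
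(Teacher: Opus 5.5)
Your proposal is correct and follows essentially the same route as the paper. The paper also expands $\ket{\Psi_\phi}$ into frequency-$0,1,2$ components, uses linearity of the $\phi$-independent map $\Phi$ to show the integrand is a trigonometric polynomial of degree at most $4$, and applies the fifth-root-of-unity identity so that the five-point average picks out exactly the zero mode; your arbitrary offset $\phi_0$ and your remark that four equally spaced points can alias the $m=\pm 4$ modes (which is true even for factorized pairs) are harmless, correct extensions of the paper's $\phi_0=0$ statement.
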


A detailed restatement and full proof are given in Appendix~\ref{sec:proof_finite_eval}.
\color{black}

\begin{figure}
    \centering
    \begin{subfigure}[b]{0.15\linewidth}
        \centering
        \includegraphics[width=\linewidth]{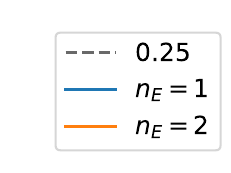}
    \end{subfigure}
    
    \begin{subfigure}[b]{0.48\linewidth}
        \centering
        \includegraphics[width=\linewidth]{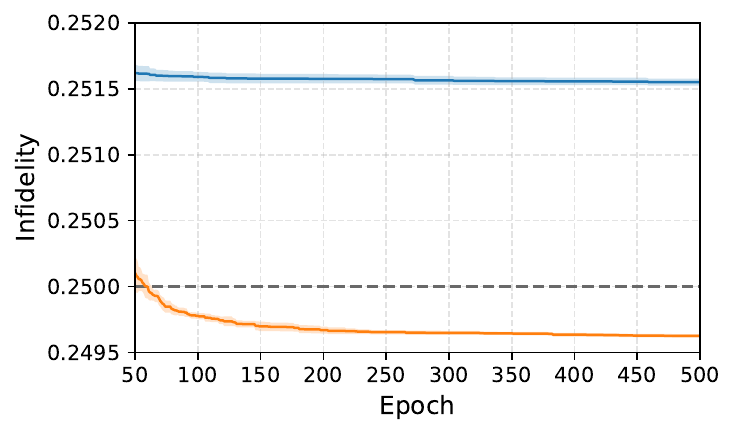}
        \caption{}
    \end{subfigure}
    \begin{subfigure}[b]{0.48\linewidth}
        \centering
        \includegraphics[width=\linewidth]{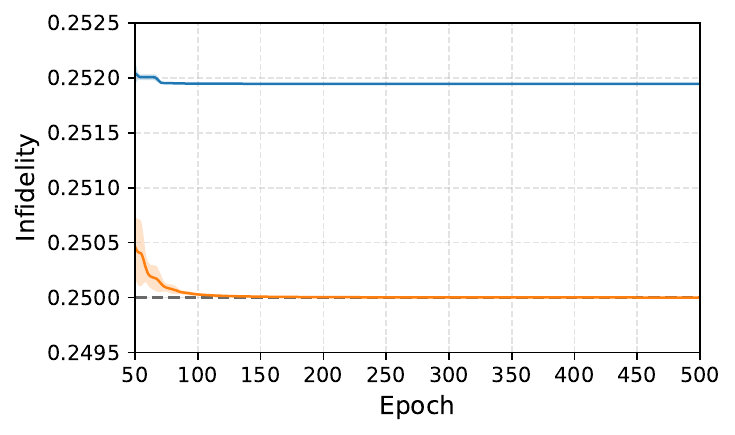}
        \caption{}
    \end{subfigure}
    \caption{Infidelity curves for an isometric decoder (\(n_E=1\)) and a non-isometric decoder (\(n_E=2\)) with $n_B=1$ encoder ancilla qubit, under the source distribution in Eq.~\eqref{eq:dec_counter_source} (with 2-sigma confidence intervals). The dashed line represents the theoretical upper bound of \(1/4\) for the minimum infidelity over all CPTP pairs \((\mathcal{E},\mathcal{D})\), implied by Eq.~\eqref{eq:dec_counter_optimum_cptp}. The displayed panels represent results obtained via (a) training with 2000 training states and 1000 test states, and (b) exact evaluation at discrete points as justified by Observation~\ref{obs:finite_eval}. While the converged losses are nearly indistinguishable on the original scale, the magnified view makes clear that the isometric decoder fails to reach the optimal average fidelity of \(3/4\).}
    \label{fig:dec_counter}
\end{figure}

In all cases, we adopt the exponential parameterization for the trainable unitaries, $U = \exp(iH)$, where $H$ is a trainable Hermitian operator.
The distinction between isometric and non-isometric decoders is mathematically clear but quite subtle, as illustrated in Figure~\ref{fig:dec_counter}.

\begin{remark}
\label{rem:haar-remark}
The negative result of Proposition~\ref{prop:counterexample} is genuinely source-dependent. Appendix~\ref{sec:haar-appendix} shows that for the Haar prior on pure states, and more generally for certain mixtures with perfectly compressible priors, a rank-one decoder is already optimal.
\end{remark}

The universal sufficiency theorem is sharp on the decoder side when $k=1$.
Specifically, we show that the phase-family counterexample of Proposition~\ref{prop:counterexample} extends from
$(n,k) = (2,1)$ to every pair $(n,1)$ with $n \ge 2$.

For $n \ge 2$, let $R$ denote an $(n-2)$-qubit register, let
\[
|0_R\rangle := |0\rangle^{\otimes (n-2)},
\qquad
|\Omega_{\phi}\rangle := |\Psi_{\phi}\rangle \otimes |0_R\rangle,
\]
where $|\Psi_{\phi}\rangle$ is the two-qubit phase family from Eq.~\eqref{eq:dec_counter_source}, and
let $\mu_{\mathrm{ph}}^{(n)}$ be the distribution induced by the uniform choice of
$\phi \in [0,2\pi)$.

\begin{restatable}[Sharp decoder threshold for $k=1$]{theorem}{DecoderSharpKOne}
\label{thm:decoder-k1-threshold}
For every $n \ge 2$,
\[
\sup_{\Enc,\,V^{\dagger}V=I_2} F_{\mu_{\mathrm{ph}}^{(n)}}(\Enc,\Dec_V)
< \frac{3}{4}
\le
\sup_{\Enc,\Dec} F_{\mu_{\mathrm{ph}}^{(n)}}(\Enc,\Dec),
\]
where $\Enc : L(H_{2^n}) \to L(H_2)$, $\Dec : L(H_2) \to L(H_{2^n})$, and
$\Dec_V(\rho) = V\rho V^{\dagger}$. Consequently, every optimal
$(n,1,n_B,n_E)$-QAE for the source $\mu_{\mathrm{ph}}^{(n)}$ satisfies
$
n_E \ge n.
$
\end{restatable}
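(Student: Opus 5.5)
The plan is to reduce Theorem~\ref{thm:decoder-k1-threshold} to Proposition~\ref{prop:counterexample} by moving channels back and forth between the $n$-qubit and two-qubit settings, using that $\ket{0_R}$ is a fixed, known product factor.

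\emph{Lower bound.} Take the encoder--decoder pair $(\Enc_2,\Dec_2)$ from Proposition~\ref{prop:counterexample} that attains $3/4$ for $(n,k)=(2,1)$. Define the $n$-qubit encoder by
\[
\Enc(\rho)=\Enc_2(\Tr_R\rho),
\]
and the decoder by
\[
\Dec(\sigma)=\Dec_2(\sigma)\otimes\proj{0_R}.
\]
Both maps are CPTP. Since $\Tr_R\proj{\Omega_\phi}=\proj{\Psi_\phi}$, the fidelity satisfies
\[
\bra{\Omega_\phi}(\Dec\circ\Enc)(\proj{\Omega_\phi})\ket{\Omega_\phi}=\bra{\Psi_\phi}(\Dec_2\circ\Enc_2)(\proj{\Psi_\phi})\ket{\Psi_\phi}.
\]
Averaging over $\phi$ gives $3/4\le\sup F_{\mu^{(n)}_{\mathrm{ph}}}$.

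\emph{Upper bound.} Fix any encoder $\Enc:L(H_{2^n})\to L(H_2)$ and any isometry $V:H_2\to H_{2^n}$. I want to exhibit a two-qubit encoder together with a two-qubit isometric decoder whose fidelity on $\mu_{\mathrm{ph}}$ is at least the $n$-qubit value. Then Proposition~\ref{prop:counterexample} gives the bound $<3/4$, but only after handling uniformity, discussed below.
\begin{itemize}[leftmargin=2em]
\item Set the two-qubit encoder to $\Enc'(\rho)=\Enc(\rho\otimes\proj{0_R})$, which is CPTP.
\item The fidelity is $\bra{\Omega_\phi}V\Enc(\cdot)V^\dagger\ket{\Omega_\phi}$. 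Write $(I\otimes\bra{0_R})V=:W$, an operator $H_2\to H_4$ with $W^\dagger W\le I$. The fidelity then equals $\bra{\Psi_\phi}W\sigma_\phi W^\dagger\ket{\Psi_\phi}$, where $\sigma_\phi=\Enc'(\proj{\Psi_\phi})$.
\item Next I replace $W$ by an isometry with no loss. I would show that the isometric-decoder value is at least the value for any contraction $W$. One route is the averaged rank-$2$ projector formulation mentioned in the proof sketch of Proposition~\ref{prop:counterexample}: for a fixed encoder, the fidelity is linear in $WW^\dagger$-type data, namely $\Tr[W^\dagger M W\cdot]$. Maximizing over contractions $0\le W^\dagger W\le I$ attains its maximum at isometries, by convexity and extreme points of the contraction ball, whose extreme points are isometries when the target dimension $\ge$ source dimension.
\end{itemize}
Hence
\[
\sup_{n\text{-qubit}}\le\sup_{2\text{-qubit, isometric}}<3/4.
\]

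\emph{Main obstacle: uniformity of the strict inequality.} The fidelity is convex rather than linear in $W$: it is $\Tr[W^\dagger\Pi_\phi W\sigma_\phi]$. Extreme points still suffice because a convex function on a compact convex set attains its max at an extreme point. To be safe, I would phrase the bound directly as in the appendix proof of Proposition~\ref{prop:counterexample}. That proof bounds the fidelity of an isometric decoder through a rank-$2$ projector problem, and the same bound holds for contractions since $W\sigma W^\dagger\le$ a positive operator of trace $\le1$ supported on $\operatorname{ran}W$, a subspace of dimension at most $2$. Strictness of the supremum also requires compactness: the set of encoders and isometries is compact and the objective is continuous, so the sup is attained and hence strictly below $3/4$.

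\emph{Consequence.} If $n_E<n$ then $1+n_E-n\le0$, so $G$ is trivial (with $n_E=n-1$, the minimal case) and the decoder is isometric. Such a decoder is suboptimal, since $3/4$ is achievable, so every optimal QAE needs $n_E\ge n$.
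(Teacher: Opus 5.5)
Your proof is correct. The lower bound and the $n_E\ge n$ consequence match the paper, but for the isometric upper bound you take a different route.

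\textbf{The paper's route.} It first shows $F\le\Tr(P\,\overline{\rho}^{(n)}_{\mathrm{ph}})\le 3/4$ for every pair. It then examines only the equality case. Saturation forces $\operatorname{im}(P)\subseteq(\C^2)^{\otimes 2}\otimes\Span\{\ket{0_R}\}$, so $V=W\otimes\ket{0_R}$ with $W$ a genuine isometry, which contradicts Corollary~\ref{cor:isometric-upper-not-attained}. Strictness of the supremum then needs a separate compactness argument over $n$-qubit encoders and isometries.

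\textbf{Your route.} You compress every $V$ to the contraction $W=(I_4\otimes\bra{0_R})V$. The map $W\mapsto\mathbb{E}_\phi\norm{\sigma_\phi^{1/2}W^\dagger\ket{\Psi_\phi}}^2$ is convex, so its maximum over the operator-norm unit ball of $L(\C^2,\C^4)$ is attained at an extreme point, that is, an isometry. Your first bullet's claim that the objective is \emph{linear} is inaccurate; the convexity you state later is what the argument needs. This gives
$\sup_{\Enc,V}F_{\mu^{(n)}_{\mathrm{ph}}}(\Enc,\Dec_V)\le\sup_{\Enc',W}F_{\mu_{\mathrm{ph}}}(\Enc',\Dec_W)$
for all pairs, not just at equality. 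Combined with the embedding $V=W\otimes\ket{0_R}$, this is actually an equality, so the isometric optimum, and hence the gap below $3/4$, is independent of $n$. Strictness then follows uniformly from Proposition~\ref{prop:counterexample}, so your extra compactness remark is unnecessary.

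\textbf{What each buys.} Your route yields the stronger statement that the gap does not depend on $n$, at the cost of the extreme-point characterization of the contraction ball. The paper's route avoids that characterization and needs only the spectral equality condition, but it gives only the qualitative strict inequality.
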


\begin{proof}
See Appendix~\ref{subsec:k1-extension}.
\end{proof}

\begin{corollary}[Sharp universal decoder threshold for $k=1$]
\label{cor:decoder-k1-threshold}
For blind single-copy compression through a one-qubit bottleneck under the
infidelity objective, $n$ decoder ancillas are universally sufficient and, in
the worst case, necessary.
\end{corollary}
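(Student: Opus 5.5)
The corollary has two halves, and both follow from results already stated. Universal sufficiency is the case $k=1$ of Theorem~\ref{thm:sufficiency}. For every distribution $\mu$ of pure $n$-qubit states, that theorem gives an $(n,1,1,n)$-QAE attaining $\max_{\Enc,\Dec}F_\mu(\Enc,\Dec)$ over all CPTP pairs through a one-qubit bottleneck. So $n_E=n$ decoder ancillas suffice for every source.

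I also need monotonicity in $n_E$: any $n_E'\ge n$ still suffices. To see this, append idle ancillas to the register $G$ and let $V$ act trivially on them. The induced decoder channel is unchanged, so the optimum is still attained. By Lemma~\ref{lem:ancilla-state-reduction}, fixing the ancillas in $\ket{0}$ loses nothing, so the statement about ancilla counts can be read directly in the $(n,k,n_B,n_E)$ language.

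Worst-case necessity comes from Theorem~\ref{thm:decoder-k1-threshold}. For each $n\ge 2$, the source $\mu_{\mathrm{ph}}^{(n)}$ forces $n_E\ge n$ in every optimal $(n,1,n_B,n_E)$-QAE. I would make explicit why the strict inequality in that theorem rules out every $n_E<n$, not only $n_E=n-1$.

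\begin{itemize}
\item Definition~\ref{def:qae} requires $k+n_E\ge n$, i.e.\ $n_E\ge n-1$, so that $F$ fits inside $CE$. Hence $n_E=n-1$ is the only admissible value below $n$.
\item When $n_E=n-1$, the register $G$ is trivial, since $k+n_E-n=0$.
\item The decoder is therefore $\rho\mapsto V(\rho\otimes\proj{0^{n-1}})V^\dagger$, an isometric channel $\Dec_{V'}$ with $V'=V(\,\cdot\,\otimes\ket{0^{n-1}})$ and $V'^\dagger V'=I_2$.
\end{itemize}

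Theorem~\ref{thm:decoder-k1-threshold} bounds the fidelity of every such scheme strictly below $3/4$, uniformly over all CPTP encoders. This in particular covers every $n_B$. The CPTP optimum is at least $3/4$, so no QAE with $n_E<n$ can be optimal for this source.

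There is no real obstacle. The one point needing care is the identification of $n_E=n-1$ with the isometric-decoder class, including the observation that the supremum over isometric decoders in the theorem dominates whatever the QAE achieves for any $n_B$. Combining the two halves gives the corollary for every $n\ge 2$.
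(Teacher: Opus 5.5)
Your proposal is correct and follows the paper's route. Sufficiency is the $k=1$ instance of Theorem~\ref{thm:sufficiency}, and necessity is Theorem~\ref{thm:decoder-k1-threshold}, whose proof ends with exactly your observation that $n_E<n$ forces $n_E=n-1$ and a trivial register $G$, hence an isometric decoder that stays strictly below $3/4$ for every encoder (the monotonicity remark and the appeal to Lemma~\ref{lem:ancilla-state-reduction} are unnecessary but harmless).
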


\begin{table}[t]
\centering
\caption{Ancilla requirements and status of several architectures for blind single-copy compression under the infidelity objective. The row labeled ``reset channel'' refers to the explicit optimal construction for the family $\mu_{1,\eps}$ from Theorem~\ref{thm:encoder-lb-informal}. \tcb{Decoder ancillas $n-k$ correspond to an isometric decoder, because the decoder has just enough qubits to expand the $k$-qubit latent register to $n$ output qubits and has no leftover environment to discard.}}
\label{tab:resources}
\begin{tabular}{@{}lcccc@{}}
\toprule
Scheme & \makecell[c]{Encoder\\ancillas} & \makecell[c]{Decoder\\ancillas} & \makecell[c]{Unitary\\widths} & \makecell[c]{Status\\(infidelity)} \\
\midrule
Conventional QAE & $0$ & $n-k$ & $n$, $n$ & not universal \\
$\zeta$-QVAE \citeMy{wang2025quantum} & $n+2k$ & $2n+k$ & $2(n+k)$, $2(n+k)$ & universal \\
Theorem~\ref{thm:sufficiency} & $k$ & $n$ & $n+k$, $n+k$ & universal \\
Reset channel for $\mu_{1,\eps}$ & $k$ & $n-k$ & $n+k$, $n$ & optimal for small $\eps$ \\
Decoder isometry for $\mu_{\mathrm{ph}}$ & -- & $n-k$ & --, $n$ & suboptimal \\
\bottomrule
\end{tabular}
\end{table}

Table~\ref{tab:resources} summarizes the ancilla requirements and circuit dimensions of the compression architectures considered in this work.

% \myBlue
The next theorem shows that, although isometric decoders need not be exactly optimal in general, they remain near-optimal whenever the average source state is well concentrated on its top $2^k$ eigenspaces.

\begin{theorem}[Informal]
\label{thm:informal_multiplicative}
For blind single-copy compression, restricting the decoder to be an isometry is near-optimal whenever the source is close to an $m$-dimensional subspace, where $m=2^k$ is the latent dimension. More precisely, if $\eta_m$ denotes the weight of the average source state outside its top $m$ eigenspaces, then the best isometric decoder achieves at least a fraction
$
1-\bigl(1-\tfrac{1}{m}\bigr)\eta_m
$
of the unrestricted optimum average fidelity.
\end{theorem}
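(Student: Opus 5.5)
Let $\bar\rho:=\mathbb{E}_{\psi\sim\mu}\proj{\psi}$ with eigenvalues $\lambda_1\ge\cdots\ge\lambda_{2^n}$, let $P_m$ be the projector onto the top-$m$ eigenspace, and set $\eta_m:=1-\Tr(P_m\bar\rho)=\sum_{i>m}\lambda_i$. The plan is to sandwich the two optima between explicit quantities. Write $F^\star:=\max_{\Enc,\Dec}F_\mu(\Enc,\Dec)$ and $F^\star_{\rm iso}$ for the same maximum with $\Dec=\Dec_V$ isometric. The target is
\[
F^\star_{\rm iso}\ \ge\ \bigl(1-(1-\tfrac1m)\eta_m\bigr)F^\star .
\]

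\textbf{Step 1 (lower bound for isometric decoders).} Use the ``project-or-reset'' scheme. The encoder measures $\{P_m,I-P_m\}$. On outcome $P_m$ it transfers the projected state into the $k$-qubit register via a unitary identification $W:\ran P_m\to H_{2^k}$. On the other outcome it outputs a fixed latent state, which I choose to be $W\ket{e_1}$, where $\ket{e_1}$ is the top eigenvector. The decoder is $\Dec_V$ with $V=W^\dagger$ viewed as an isometry into $H_{2^n}$. The fidelity for a given $\psi$ is then
\[
\bra\psi P_m\ket\psi^2+|\braket{e_1}{\psi}|^2\,\bigl(1-\bra\psi P_m\ket\psi\bigr).
\]
Averaging and using $\mathbb{E}\,x^2\ge(\mathbb{E}x)^2$ gives at least $(1-\eta_m)^2+(\text{cross term})$. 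This may not directly yield the stated multiplicative form, so I keep the option of choosing the reset state as the maximally mixed latent state. In that case the reset branch contributes $\tfrac1m\Tr(P_m(\ldots))$-type terms, which is where a factor $\tfrac1m$ naturally appears.

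\textbf{Step 2 (upper bound for arbitrary CPTP pairs).} I would bound $F^\star\le\sum_{i\le m}\lambda_i=1-\eta_m$, or a related quantity. The composition $\Dec\circ\Enc$ factors through an $m$-dimensional system. Via Choi/Kraus arguments, $\mathbb{E}\bra\psi\Phi(\proj\psi)\ket\psi\le\mathbb{E}\sum_j|\bra\psi K_j\ket\psi|^2$ with $\sum_j K_j^\dagger K_j=I$ and each $K_j$ of rank at most $m$ after absorbing the bottleneck structure. This is bounded by a Ky Fan-type inequality $\Tr(P\bar\rho)\le 1-\eta_m$ plus a correction. I would aim for the cleaner inequality $F^\star\le 1-\eta_m+\eta_m/m\cdot(\cdots)$ so that the ratio works out.

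\textbf{Step 3 (combine).} The ratio of the bounds from Steps 1 and 2 should reduce to $1-(1-\tfrac1m)\eta_m$ after elementary algebra, for example using $(1-\eta)^2+\eta(1-\eta)/m$ against $1-\eta+\eta/m$. If a direct product bound is cleaner, I would instead show that $F^\star_{\rm iso}\ge F^\star-(1-\tfrac1m)\eta_m F^\star$ by taking an optimal pair $(\Enc,\Dec)$ from Theorem~\ref{thm:sufficiency}, replacing $\Dec$ by an isometry built from its dominant Kraus operator restricted to $\ran P_m$, and controlling the loss by the weight outside $P_m$.

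\textbf{Main obstacle.} Matching constants is the hard part: getting an upper bound on $F^\star$ and a lower bound on $F^\star_{\rm iso}$ whose quotient is exactly $1-(1-\tfrac1m)\eta_m$. I expect the key inequality to be the CPTP upper bound in Step 2, namely that the output fidelity on the out-of-subspace component can gain at most a $\tfrac1m$ fraction. I would attempt it first via the Choi-matrix reformulation of Theorem~\ref{thm:sufficiency}'s appendix together with the rank-$m$ constraint imposed by the bottleneck.
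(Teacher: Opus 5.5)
\textbf{Gap: the converse $F^\star\le 1-\eta_m$ is not proved, and the proposal misidentifies what is needed.}

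Your achievability half works once you use the maximally mixed reset. With $p=\bra\psi P_m\ket\psi$, the per-state fidelity is $p^2+p(1-p)/m$. Jensen then gives
\[
F^\star_{\rm iso}\ge\bigl(1-\tfrac1m\bigr)(1-\eta_m)^2+\tfrac1m(1-\eta_m)=(1-\eta_m)\bigl(1-(1-\tfrac1m)\eta_m\bigr),
\]
which is exactly the paper's lower bound on $B_\mu$. The paper resets to the top eigenvector of $Y_P=\mathbb{E}[(1-p)P\proj\psi P]$, which dominates your $\Tr(Y_P)/m$.

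The stated ratio follows only if $F^\star\le 1-\eta_m$ holds exactly, with no correction term. Your Step~3 divides this lower bound by $1-\eta_m+\eta_m/m$, which yields only the ratio $1-\eta_m$, not $1-(1-\tfrac1m)\eta_m$. Likewise, your ``main obstacle'' paragraph puts the factor $1/m$ in the wrong place: it arises only on the achievability side.

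The converse route you sketch also cannot work as stated. Knowing only that each Kraus operator $D_iE_j$ of $\Dec\circ\Enc$ has rank at most $m$ does not give $s_m$. The dephasing channel $\rho\mapsto\sum_i\proj{i}\rho\proj{i}$ has rank-one Kraus operators, yet it reconstructs a uniformly random computational basis state perfectly, giving fidelity $1>s_m=m/d$.

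What must be used is that all decoder Kraus operators share one $m$-dimensional input space. Equivalently, the decoder's Stinespring isometry $W:\C^m\to\C^d\otimes H_E$ has a rank-$m$ range projector $P=WW^\dagger$. The paper proceeds as follows:
\begin{itemize}
\item Each per-state fidelity is bounded by $\sup_\tau\Tr[P(\proj\psi\otimes\tau)]$.
\item The maximizing $\tau_\psi$ depends on $\psi$, so averaging gives a separable extension $\Omega=\mathbb{E}[\proj\psi\otimes\tau_\psi]$ of $\bar\rho$, with $F\le\Tr(P\Omega)\le\sum_{i\le m}\lambda_i(\Omega)$. Separability is essential here: a purification of $\bar\rho$ would have top eigenvalue $1$.
\item A Schur-product argument then shows $\sum_{i\le m}\lambda_i(\Omega)\le s_m$. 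The Gram matrix of $\Omega$ is $K\circ H$, where $H$ is the Gram matrix of the ensemble for $\bar\rho$ and $K$ is a correlation matrix. The map $X\mapsto K\circ X$ is unital and trace preserving, so Ky Fan sums can only decrease.
\end{itemize}
Without this argument, or an equivalent majorization of separable states by their marginals, your proposal does not establish the theorem.
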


A detailed restatement and full proof are given in Appendix~\ref{subsec:isometric-multiplicative}.
% \color{black}

\section{Experiments}
\label{sec:experiments}

\subsection{Setup}
\label{sec:data-prep}

We perform experiments on \(\mu_{1,0.1}\) to examine whether \(k\) encoder ancillas are necessary.
The choice of $\mu_{1,0.1}$ is sufficient to illustrate Corollary~\ref{cor:sharp-encoder}.
To this end, we compare (i) \((n,k,0,n-k)\)-QAE, (ii) \((n,k,0,n)\)-QAE, and (iii) \((n,k,k,n)\)-QAE. If we observe a clear performance gain both from (i) to (ii) and from (ii) to (iii), this would suggest that when the number of encoder ancillas is smaller than \(k\), even enlarging the decoder ancilla count to \(n\) (which is already sufficient for the decoder itself under the infidelity loss) remains insufficient to attain the optimum.

We also introduce $\mu_2$, an empirical distribution induced by the MNIST dataset (version 3.0.1, CC BY-SA 3.0 license).
More precisely, $\mu_2$ is the distribution over quantum states obtained by encoding MNIST images into normalized state vectors in $\mathbb{C}^d$ via the data-preparation procedure described in Appendix~\ref{sec:appendix_mnist_data_prep}.
\tcb{This construction intentionally creates quantum states whose average source state has most of its mass in a low-dimensional subspace, matching the regime where Theorem~\ref{thm:informal_multiplicative} predicts that isometric decoders should be near-optimal.}

For the experiments reported in Sections~\ref{sec:results_mu_1} and \ref{sec:results_mu_2}, we generated 2000 training samples and 1000 test samples. The encoder and decoder unitaries were trained for 500 epochs using the Adam optimizer with a learning rate of $10^{-3}$ and a batch size of 64. All computations were performed on a 32-core Intel Xeon Processor (Skylake, 2.5 GHz) and an NVIDIA H100 NVL GPU running CUDA 12.9, with 62 GB of memory. Training time information is provided in Appendix~\ref{sec:training-time}.

% mu1 Results
\begin{figure}
    \centering
    \begin{subfigure}[b]{0.27\linewidth}
        \centering
        \includegraphics[width=\linewidth]{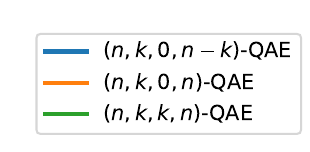}
    \end{subfigure}
    \\\vspace{1em}
    \begin{subfigure}[b]{0.32\linewidth}
        \centering
        \includegraphics[page=1, width=\linewidth]{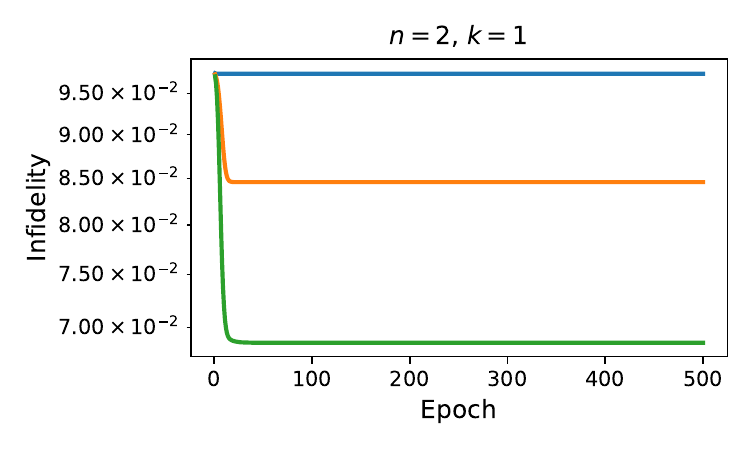}
    \end{subfigure}
    \hfill
    \begin{subfigure}[b]{0.32\linewidth}
        \centering
        \includegraphics[page=2, width=\linewidth]{figures/mu1/mu1_merged_tight.pdf}
    \end{subfigure}
    \hfill
    \begin{subfigure}[b]{0.32\linewidth}
        \centering
        \includegraphics[page=3, width=\linewidth]{figures/mu1/mu1_merged_tight.pdf}
    \end{subfigure}
    \\\vspace{1em}
    \begin{subfigure}[b]{0.32\linewidth}
        \centering
        \includegraphics[page=4, width=\linewidth]{figures/mu1/mu1_merged_tight.pdf}
    \end{subfigure}
    \hfill
    \begin{subfigure}[b]{0.32\linewidth}
        \centering
        \includegraphics[page=5, width=\linewidth]{figures/mu1/mu1_merged_tight.pdf}
    \end{subfigure}
    \hfill
    \begin{subfigure}[b]{0.32\linewidth}
        \centering
        \includegraphics[page=6, width=\linewidth]{figures/mu1/mu1_merged_tight.pdf}
    \end{subfigure}
    \caption{Test infidelity curves for source $\mu_{1,0.1}$, where $n \in \{2, 3, 4\}$ and $k \in \{1, \dots, n-1\}$ (with 2-sigma confidence intervals). Across all settings, the performance differences among $(n,k,0,n-k)$-QAE, $(n,k,0,n)$-QAE, and $(n,k,k,n)$-QAE are clearly visible. The converged values for $(2,1,0,1)$-QAE and $(2,1,1,2)$-QAE are consistent with the first-order bounds derived in Appendix~\ref{subsec:conventional-special-case}.}
    \label{fig:mu1_results}
\end{figure}

% MNIST Results
\begin{figure}
    \centering
    \begin{subfigure}[b]{0.3\linewidth}
        \centering
        \includegraphics[width=\linewidth]{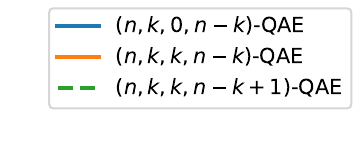}
    \end{subfigure}
    \\\vspace{1em}
    \begin{subfigure}[b]{0.32\linewidth}
        \centering
        \includegraphics[page=1, width=\linewidth]{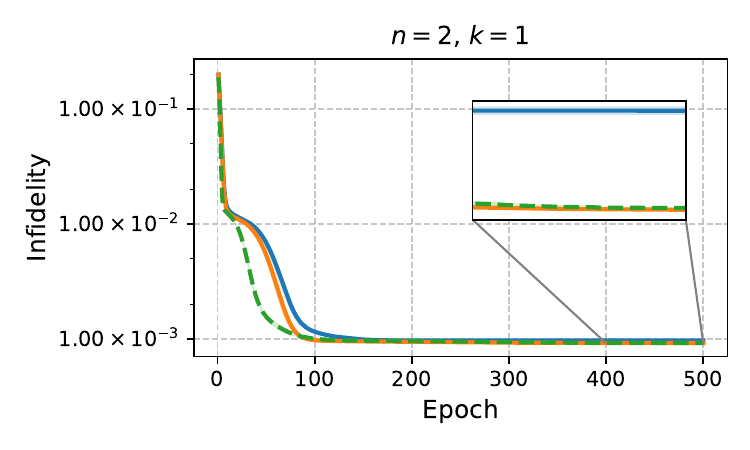}
    \end{subfigure}
    \hfill
    \begin{subfigure}[b]{0.32\linewidth}
        \centering
        \includegraphics[page=2, width=\linewidth]{figures/MNIST/mu2_merged_tight.pdf}
    \end{subfigure}
    \hfill
    \begin{subfigure}[b]{0.32\linewidth}
        \centering
        \includegraphics[page=3, width=\linewidth]{figures/MNIST/mu2_merged_tight.pdf}
    \end{subfigure}
    \\\vspace{1em}
    \begin{subfigure}[b]{0.32\linewidth}
        \centering
        \includegraphics[page=4, width=\linewidth]{figures/MNIST/mu2_merged_tight.pdf}
    \end{subfigure}
    \hfill
    \begin{subfigure}[b]{0.32\linewidth}
        \centering
        \includegraphics[page=5, width=\linewidth]{figures/MNIST/mu2_merged_tight.pdf}
    \end{subfigure}
    \hfill
    \begin{subfigure}[b]{0.32\linewidth}
        \centering
        \includegraphics[page=6, width=\linewidth]{figures/MNIST/mu2_merged_tight.pdf}
    \end{subfigure}
    \caption{Test infidelity curves for MNIST source $\mu_2$, where $n \in \{2, 3, 4\}$ and $k \in \{1, \dots, n-1\}$ (with 2-sigma confidence intervals). Across all settings, the performance difference between $(n,k,k,n-k)$-QAE and $(n,k,k,n-k+1)$-QAE is negligible relative to the gap between $(n,k,k,n-k)$-QAE and $(n,k,0,n-k)$-QAE.}
    \label{fig:mnist_results}
\end{figure}

\subsection{Numerical results for $\mu_{1,0.1}$}
\label{sec:results_mu_1}

We compare three architectures using the $\mu_{1,0.1}$ dataset: (i) the $(n,k,0,n-k)$-QAE (conventional), (ii) the $(n,k,0,n)$-QAE, and (iii) the $(n,k,k,n)$-QAE.
The results are shown in Figure~\ref{fig:mu1_results}.
We observe clear performance gaps among (i), (ii), and (iii).
This indicates that increasing the decoder ancilla count to $n$ is insufficient to attain the optimum over all CPTP encoder--decoder pairs when the number of encoder ancillas is smaller than $k$.

\subsection{Numerical results for $\mu_2$}
\label{sec:results_mu_2}

We compare three architectures using the $\mu_2$ dataset: (i) the $(n,k,0,n-k)$-QAE (conventional), (ii) the $(n,k,k,n-k)$-QAE, and (iii) the $(n,k,k,n-k+1)$-QAE.
The results are shown in Figure~\ref{fig:mnist_results}.
We observe a clear performance gap between (i) and (ii), whereas there is no noticeable difference between (ii) and (iii).
This indicates that increasing the decoder ancilla from $n-k$ to $n-k+1$ provides no practical advantage, at least within the scope of our experiments.
While this result may appear to contradict Proposition~\ref{prop:counterexample}, we note again that the performance gap between $n_E=1$ and $n_E=2$ described in the counterexample is analytically valid but numerically negligible.

\section{Discussion}

\textbf{Limitations.}
This work restricts attention to blind single-copy compression of pure states, with performance measured solely by reconstruction infidelity. Accordingly, our results should not be interpreted as statements about other objectives, such as latent space regularization, representation learning quality, robustness, or trainability. Such objectives may be valuable in practice, but they lie beyond the scope of the present theory and may modify the optimal resource counts identified here. The conclusions should also be distinguished from results in the literature that optimize different loss functions or assume different informational access. Furthermore, our empirical validation uses an artificially engineered quantum dataset, which is useful for probing the low-dimensional concentration regime but does not adequately represent the entanglement structures of physical quantum data, such as many-body ground states.

Our main results are resource-theoretic statements about blind single-copy compression under the infidelity objective. We identify an architecture that is wide enough to attain the optimum over all CPTP encoder--decoder pairs, yet narrower than fully general models. In particular, $k$ encoder ancillas and $n$ decoder ancillas are universally sufficient, and the encoder threshold is sharp. The decoder side remains less completely characterized. Specifically, the tight universal decoder ancilla lower bound as a function of $n$ and $k$ remains open. Because infidelity is the primary objective in many compression tasks, the ancilla counts identified here provide a natural baseline. A promising direction is to study how additional objectives, including explicit regularization and other desiderata for representation learning, alter the architecture or lead to genuinely different optimal tradeoffs.

% \clearpage
\bibliographystyle{plainnat}
\bibliography{main}

\clearpage
\appendix

\section{List of symbols and logical framework}

\begingroup
\small
\renewcommand{\arraystretch}{1.12}
\setlength{\LTleft}{0pt}
\setlength{\LTright}{0pt}

\begin{longtable}{@{}>{\raggedright\arraybackslash}p{0.24\textwidth}%
                    >{\raggedright\arraybackslash}p{0.72\textwidth}@{}}
% \caption{}
\label{tab:symbols}\\
\toprule
\textbf{Symbol} & \textbf{Meaning} \\
\midrule
\endfirsthead

\multicolumn{2}{@{}l}{\textit{Continued.}}\\
\toprule
\textbf{Symbol} & \textbf{Meaning} \\
\midrule
\endhead

\bottomrule
\endfoot

\addlinespace[0.3em]
\multicolumn{2}{@{}l}{\textbf{Hilbert-space and channel notation}}\\
$H_d \cong \mathbb{C}^d$ & A $d$-dimensional Hilbert space. \\
$L(H)$ & Linear operators on the Hilbert space $H$. \\
$H_A,H_B,\ldots$ & Hilbert spaces associated with registers $A,B,\ldots$ \\
$H_{AB}:=H_A\otimes H_B$ & Composite Hilbert space of registers $A$ and $B$. \\
$AB \equiv CD$ & The same physical system viewed under two different factorizations. \\
$\{|x\rangle : x\in\{0,1\}^n\}$ & Computational basis of an $n$-qubit register. \\
$\{|j\rangle : 0\le j\le 2^n-1\}$ & Computational basis of an $n$-qubit register. \\
$|v\rangle^\perp$ & Subspace orthogonal to the vector $|v\rangle$. \\
$\operatorname{Tr}_B$ & Partial trace over subsystem $B$. \\
$\Phi : L(H_A)\to L(H_B)$ & Generic quantum channel. \\
$\operatorname{Ad}_U(\rho)=U\rho U^\dagger$ & Unitary channel induced by the unitary $U$. \\
$\{K_i\}$ & Kraus operators of a channel. \\
$\operatorname{rank}_{\mathrm{Kraus}}(\Phi)$ & Kraus rank (equivalently, Choi rank) of $\Phi$. \\
$\Lambda(\Phi)$ & Choi matrix of the linear map $\Phi$. \\
$\Lambda_1 \star \Lambda_2$ & Link product of two Choi matrices. \\
$\rho^T$, $T_B$ & Transpose of $\rho$ and partial transpose over subsystem $B$. \\

\addlinespace[0.4em]
\multicolumn{2}{@{}l}{\textbf{Compression setting and QAE architecture}}\\
$d:=2^n$ & Input/output Hilbert-space dimension. \\
$m:=2^k$ & Latent-space dimension (bottleneck dimension). \\
$F_\mu(\Enc,\Dec)$ & Average reconstruction fidelity for source distribution $\mu$. \\
$L_\mu(\Enc,\Dec):=1-F_\mu(\Enc,\Dec)$ & Average infidelity. \\
$(n,k,n_B,n_E)$-QAE & QAE architecture with $n_B$ encoder ancillas and $n_E$ decoder ancillas. \\
$\rho_{\mathrm{enc}}$ & Encoded state on the latent register $C$. \\
$\rho_{\mathrm{rec}}$ & Reconstructed state on the output register $F$. \\

\addlinespace[0.4em]
\multicolumn{2}{@{}l}{\textbf{Source families}}\\
$\mu_{\mathrm{Haar}}$ & Haar distribution on pure states in $\mathbb{C}^d$. \\
$H^\perp := |0\rangle^\perp \subset \mathbb{C}^d$ & Orthogonal complement of the reference basis vector $|0\rangle$. \\
$|\psi_\varepsilon(\varphi)\rangle$ & Perturbed source state $|\psi_\varepsilon(\varphi)\rangle
=
\sqrt{1-\varepsilon}\,|0\rangle+\sqrt{\varepsilon}\,|\varphi\rangle$, $|\varphi\rangle\in H^\perp$, $\|\varphi\|=1$. \\
$\mu_{1,\varepsilon}$ & Distribution induced by Haar-uniform $|\varphi\rangle$ on the states $|\psi_\varepsilon(\varphi)\rangle$. \\
$F_\varepsilon(E,D)$ & Average fidelity for the source family $\mu_{1,\varepsilon}$. \\
$\Phi_P(\rho)$ & Reset channel $\Phi_P(\rho)=P\rho P+\operatorname{Tr}\!\bigl[(I_d-P)\rho\bigr]\,|0\rangle\langle 0|$. \\

\addlinespace[0.4em]
\multicolumn{2}{@{}l}{\textbf{Decoder isometry counterexample}}\\
$|\psi_\phi\rangle$ & Single-qubit phase state $|\psi_\phi\rangle=(|0\rangle+e^{i\phi}|1\rangle) / \sqrt{2}$. \\
$|\Psi_\phi\rangle$ & Two-qubit source state $|\Psi_\phi\rangle=|\psi_\phi\rangle^{\otimes 2}$. \\
$\mu_{\mathrm{ph}}$ & Uniform distribution of $\phi$ on $[0,2\pi)$. \\
$\Dec_V(\rho)=V\rho V^\dagger$ & Isometric decoder generated by an isometry $V$. \\
$|\Sigma\rangle$ & Symmetric Bell state $|\Sigma\rangle=(|01\rangle+|10\rangle) / \sqrt{2}$. \\
$P_{t,\gamma}$ & Rank-$2$ projector $P_{t,\gamma}=|\Sigma\rangle\langle\Sigma|+|\eta_{t,\gamma}\rangle\langle\eta_{t,\gamma}|$. \\
$|\eta_{t,\gamma}\rangle$ & Vector $|\eta_{t,\gamma}\rangle=\cos t\,|00\rangle+e^{i\gamma}\sin t\,|11\rangle$. \\
$|\chi^{(t,\gamma)}_\phi\rangle$ & Normalized projection of $|\Psi_\phi\rangle$ onto $\operatorname{im}(P_{t,\gamma})$. \\

\addlinespace[0.4em]
\multicolumn{2}{@{}l}{\textbf{Source-dependent multiplicative guarantee}}\\
$\lambda_i(X)$
& Eigenvalues of a Hermitian operator $X$, listed in nonincreasing order.\\
$s_m := \sum_{i=1}^m \lambda_i(\bar{\rho})$
& Weight of the average source state on its top $m$ eigenvalues.\\
$\eta_m := 1-s_m$
& Tail weight of the average source state outside its top $m$ eigenvalues.\\
$p_P(\psi) := \langle\psi|P|\psi\rangle$
& Probability mass of the source state $|\psi\rangle$ inside $\operatorname{im}(P)$.\\
$\Enc_{P,\xi}$
& Encoder in the isometric lower bound construction.\\
$\Dec_{V_P}$
& Associated isometric decoder in the lower bound construction.\\

\end{longtable}
\endgroup
\setcounter{table}{1}

% \clearpage
\begin{figure}[h]
    \centering
    \includegraphics[width=\linewidth]{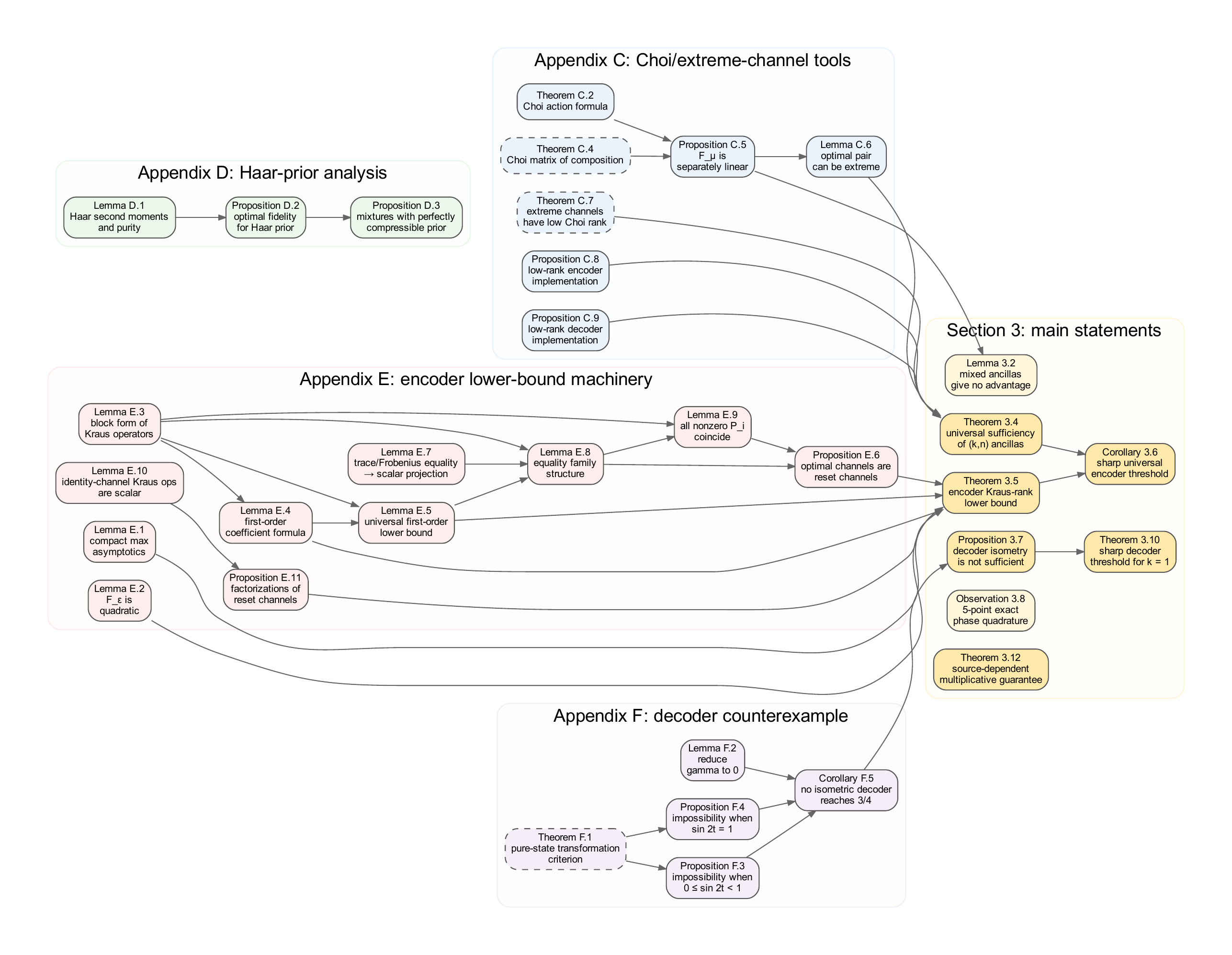}
    \caption{Logical dependency graph of the results. Dashed nodes are used as cited black boxes.}
    \label{fig:dependencies}
\end{figure}

\section{Basic concepts}
\label{sec:appendix_prelim}

\subsection{Hilbert spaces, qubits, registers, and Dirac notation}

A finite-dimensional Hilbert space is a finite-dimensional complex vector space equipped with an inner product. Throughout, $H_d \cong \mathbb{C}^d$ denotes a $d$-dimensional Hilbert space and $L(H)$ denotes the linear operators on $H$. In Dirac notation, a vector is written as a ket $|\psi\rangle$, its conjugate transpose is the bra $\langle \psi| := |\psi\rangle^{\dagger}$, and the inner product of $|\phi\rangle$ and $|\psi\rangle$ is $\langle \phi|\psi\rangle$. A vector represents a physical pure state only when it has unit norm, $\langle \psi|\psi\rangle = 1$.
For a vector $|v\rangle$, we write $|v\rangle^{\perp}$ for the subspace orthogonal to $|v\rangle$.

A qubit is a system with Hilbert space $H_2 \cong \mathbb{C}^2$ and computational basis $\{ |0\rangle, |1\rangle \}$. An $n$-qubit register has Hilbert space
\[
H_{2^n} \cong (\mathbb{C}^2)^{\otimes n},
\]
so an $n$-qubit system has dimension $2^n$. For quantum registers $A,B,\ldots$, we write $H_A,H_B,\ldots$ for the corresponding Hilbert spaces. Composite systems are modeled by tensor products, so $H_{AB} := H_A \otimes H_B$. The same physical space may be reinterpreted under a different factorization, written for example as $AB \equiv CD$.

Tensor products are used not only for Hilbert spaces but also for states and operators. Thus $|\psi\rangle_A \otimes |\phi\rangle_B$ is a joint pure state on $AB$, often abbreviated as $|\psi\rangle_A |\phi\rangle_B$, and $X_A \otimes Y_B$ is an operator on $H_{AB}$. The computational basis of an $n$-qubit system is $\{ |x_1\cdots x_n\rangle : x_i \in \{0,1\} \}$, equivalently $\{ |j\rangle : 0 \le j \le 2^n-1 \}$ after identifying a binary string with its integer value.

\subsection{Pure states, mixed states, density operators, and reduced states}

A pure state may be described by a unit vector $|\psi\rangle$, but the object acted on by channels and partial traces is the corresponding density operator $|\psi\rangle\langle \psi| \in L(H)$. These are different mathematical objects: $|\psi\rangle$ is a vector in $H$, whereas $|\psi\rangle\langle \psi|$ is an operator on $H$. In particular, $|\psi\rangle$ and $e^{i\theta}|\psi\rangle$ are different vectors but determine the same density operator.

A mixed state on $H$ is a density operator $\rho \in L(H)$, meaning a positive semidefinite operator with unit trace, $\rho \ge 0$ and $\Tr(\rho)=1$. Equivalently, a mixed state can be written as a convex combination of pure states,
\[
\rho = \sum_i p_i |\psi_i\rangle\langle \psi_i|,
\qquad
p_i \ge 0,
\qquad
\sum_i p_i = 1.
\]
The same density operator may arise from many different ensembles $\{p_i,|\psi_i\rangle\}$.

For a bipartite state $\rho_{AB}$, the reduced state (or marginal) on subsystem $A$ is defined by the partial trace over $B$:
\[
\rho_A := \Tr_B(\rho_{AB})
= \sum_j (I_A \otimes \langle j|_B)\, \rho_{AB}\, (I_A \otimes |j\rangle_B),
\]
where $\{ |j\rangle_B \}$ is any orthonormal basis of $H_B$. The result is basis independent. Likewise $\rho_B := \Tr_A(\rho_{AB})$. On simple tensors one has
\[
\Tr_B(X_A \otimes Y_B) = X_A \, \Tr(Y_B),
\]
and linearity extends this to all operators. Operationally, taking the partial trace is exactly the same as discarding the corresponding subsystem. Thus $\rho_A$ is the state accessible to an observer who keeps only subsystem $A$.

\subsection{Quantum channels, unitary evolution, CPTP maps, and channel representations}

A superoperator is a linear map $\Phi : L(H_A) \to L(H_B)$. The physically allowed maps are quantum channels, namely CPTP maps. Complete positivity means that for every auxiliary Hilbert space $H_R$ and every positive semidefinite operator $X \in L(H_A \otimes H_R)$,
\[
(\Phi \otimes \mathrm{id}_R)(X) \ge 0.
\]
This is the correct positivity condition because a system may be entangled with external reference systems. Trace preservation means
\[
\Tr[\Phi(X)] = \Tr[X]
\]
for every operator $X$.

Closed-system evolution is unitary. The unitary group on $H_d$ is
\[
\mathrm{U}(d) := \{ U \in L(H_d) : U^{\dagger}U = UU^{\dagger} = I_d \}.
\]
If $U \in \mathrm{U}(d)$, the corresponding unitary channel is conjugation by $U$,
\[
\Ad_U(\rho) := U\rho U^{\dagger}.
\]
General open-system evolution is modeled by CPTP maps rather than only by unitary channels.

Every CPTP map $\Phi : L(H_A) \to L(H_B)$ admits a Kraus representation
\[
\Phi(\rho) = \sum_{i=1}^r K_i \rho K_i^{\dagger},
\qquad
K_i : H_A \to H_B,
\qquad
\sum_{i=1}^r K_i^{\dagger} K_i = I_A.
\]
The minimal number $r$ of nonzero Kraus operators in such a representation is the Kraus rank of $\Phi$.

Equivalently, Stinespring's theorem states that there exist an auxiliary Hilbert space $H_E$ and an isometry $V : H_A \to H_B \otimes H_E$ such that
\[
\Phi(\rho) = \Tr_E\!\bigl(V\rho V^{\dagger}\bigr).
\]
After adjoining an ancilla in a fixed state and extending the isometry to a unitary on a larger space, this becomes the familiar unitary-plus-ancilla-plus-partial-trace realization
\[
\Phi(\rho) = \Tr_E\!\left[ U\bigl(\rho \otimes |0\rangle\langle 0|_E\bigr) U^{\dagger} \right].
\]
Concretely, if we write $\Phi$ in Kraus form as
\[
\Phi(\rho)=\sum_{i=0}^{r-1} K_i \rho K_i^\dagger,
\]
choose an orthonormal family $\{\ket{i}_E\}_{i=0}^{r-1}$ in the auxiliary space $H_E$, and define
\[
V=\sum_{i=0}^{r-1} K_i \otimes \ket{i}_E,
\]
then $V:H_A\to H_B\otimes H_E$ is an isometry (because $\sum_i K_i^\dagger K_i=I_{H_A}$), and
\begin{align*}
V\rho V^\dagger
&=
\left(\sum_{i=0}^{r-1} K_i\otimes \ket{i}_E\right)\rho
\left(\sum_{j=0}^{r-1} K_j^\dagger\otimes \bra{j}_E\right) \\
&=
\sum_{i,j=0}^{r-1} K_i \rho K_j^\dagger \otimes \ket{i}\!\bra{j}_E.
\end{align*}
Taking the partial trace over $H_E$ gives
\begin{align*}
\Tr_E\!\bigl(V\rho V^\dagger\bigr)
&=
\sum_{k=0}^{r-1}
(I_{H_B}\otimes \bra{k}_E)
\Bigl(\sum_{i,j=0}^{r-1} K_i \rho K_j^\dagger \otimes \ket{i}\!\bra{j}_E\Bigr)
(I_{H_B}\otimes \ket{k}_E) \\
&=
\sum_{i,j,k=0}^{r-1}
K_i \rho K_j^\dagger \,
\langle k | i\rangle \langle j | k\rangle \\
&=
\sum_{i=0}^{r-1} K_i \rho K_i^\dagger \\
&= \Phi(\rho).
\end{align*}

\subsection{Blind single-copy compression and Haar-random pure states}

In blind compression, the encoder is not told which source state was supplied. It must apply the same encoding channel to every input state. In single-copy compression, only one copy of the source state is available. Thus blind single-copy compression from $n$ to $k<n$ qubits asks for channels
\[
\mathcal{E} : L(H_{2^n}) \to L(H_{2^k}),
\qquad
\mathcal{D} : L(H_{2^k}) \to L(H_{2^n}),
\]
that reconstruct an unknown pure $n$-qubit state as well as possible after passing through a $k$-qubit bottleneck.

Haar measure on $\mathrm{U}(d)$ is the unique probability measure invariant under left and right multiplication by fixed unitaries. The induced measure on pure states is obtained by applying a Haar-random unitary to a fixed reference vector. For example, if $U \sim \mathrm{Haar}$ on $\mathrm{U}(d)$, then $U|0\rangle$ is Haar-random on the unit sphere of $H_d$. More generally, if $S \subseteq H$ is a subspace, a Haar-random pure state in $S$ is one drawn from the unique unitarily invariant probability measure on the unit sphere of $S$.

\section{Choi calculus, extreme channels, and the proof of Theorem~\ref{thm:sufficiency}}\label{sec:choi-proof}

\subsection{Choi representation and link product}\label{subsec:choi}

\begin{definition}[Choi representation]\label{def:choi}
Let $\Phi:L(H_{d_1})\to L(H_{d_2})$ be linear. Its \emph{Choi matrix} is
\begin{equation}\label{eq:choi}
\Lambda(\Phi)
:=
\sum_{i,j=0}^{d_1-1} \ketbra{i}{j}\otimes \Phi(\ketbra{i}{j})
\in L(H_{d_1}\otimes H_{d_2}).
\end{equation}
The \emph{Choi rank} (or \emph{Kraus rank}) of a linear map $\Phi$ is the rank of its corresponding Choi matrix.
\end{definition}

\begin{theorem}\label{thm:choi-action}
Let $\Phi:L(H_A)\to L(H_B)$ be linear. Then for every $\rho\in L(H_A)$,
\begin{equation}\label{eq:choi-action}
\Phi(\rho)
=
\Tr_A\bigl(\Lambda(\Phi)(\rho^T\otimes I_B)\bigr),
\end{equation}
where the transpose is taken in the computational basis.
\end{theorem}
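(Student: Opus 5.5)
By linearity of both sides in $\rho$, it suffices to verify the identity \eqref{eq:choi-action} on the matrix units $\rho=\ketbra{k}{l}$, $0\le k,l\le d_1-1$, which span $L(H_A)$. The right-hand side is linear in $\rho$ because transpose, tensoring with $I_B$, multiplication by the fixed operator $\Lambda(\Phi)$, and the partial trace are all linear. The left-hand side is linear by assumption. So the plan is to reduce to basis elements, compute, and extend linearly.

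For $\rho=\ketbra{k}{l}$ we have $\rho^T=\ketbra{l}{k}$ in the computational basis. Substituting Definition~\ref{def:choi} gives
\[
\Lambda(\Phi)(\rho^T\otimes I_B)=\sum_{i,j}\ketbra{i}{j}\ketbra{l}{k}\otimes \Phi(\ketbra{i}{j})=\sum_{i}\ketbra{i}{k}\otimes\Phi(\ketbra{i}{l}),
\]
using $\braket{j}{l}=\delta_{jl}$. Taking $\Tr_A$ and using $\Tr_A(X_A\otimes Y_B)=\Tr(X_A)\,Y_B$ together with $\Tr\ketbra{i}{k}=\delta_{ik}$ leaves exactly $\Phi(\ketbra{k}{l})$. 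This is the claimed identity on basis elements, and linearity then finishes the proof.

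There is no real obstacle here. The only point needing care is the index bookkeeping: the transpose must be taken in the same computational basis used to define $\Lambda(\Phi)$, and the operator order $\Lambda(\Phi)(\rho^T\otimes I_B)$ must be respected. Because of the cyclicity of the partial trace over $A$ when the operator acts only on $A$, the other order $(\rho^T\otimes I_B)\Lambda(\Phi)$ would give the same answer. I would note this as a remark, but it is not needed for the proof.
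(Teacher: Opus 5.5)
Your proof is correct: on matrix units $\rho=\ketbra{k}{l}$ the computation gives exactly $\Phi(\ketbra{k}{l})$, and linearity of both sides in $\rho$ extends this to all of $L(H_A)$. The paper states this identity without proof, as a standard property of the Choi matrix, so your direct verification is the routine argument it leaves implicit. Your side remark is also correct: by cyclicity of $\Tr_A$ for operators acting only on $A$, the ordering $(\rho^T\otimes I_B)\Lambda(\Phi)$ gives the same result.
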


\begin{definition}[Link product]\label{def:link}
For $\Lambda_1\in L(H_B\otimes H_C)$ and $\Lambda_2\in L(H_A\otimes H_B)$, define
\begin{equation}\label{eq:link}
\Lambda_1\star \Lambda_2
:=
\Tr_B\Bigl((I_A\otimes \Lambda_1)(\Lambda_2^{T_B}\otimes I_C)\Bigr),
\end{equation}
where $T_B$ denotes partial transpose over $H_B$.
\end{definition}

\begin{theorem}[\citeMy{chiribella2009theoretical}, Theorem~1]\label{thm:link-compose}
If $\Phi_1:L(H_A)\to L(H_B)$ and $\Phi_2:L(H_B)\to L(H_C)$ are linear maps, then
\begin{equation}\label{eq:link-compose}
\Lambda(\Phi_2\circ \Phi_1)=\Lambda(\Phi_2)\star \Lambda(\Phi_1).
\end{equation}
\end{theorem}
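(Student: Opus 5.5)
The plan is to verify the identity directly on matrix units and then extend by linearity. Both sides of \eqref{eq:link-compose} are linear in $\Phi_1$ and in $\Phi_2$ separately: the Choi map $\Lambda$ is linear by Definition~\ref{def:choi}, and the link product \eqref{eq:link} is bilinear in $(\Lambda_1,\Lambda_2)$. It therefore suffices to expand both sides in the computational bases.

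First, write the Choi matrix of the composition as
\[
\Lambda(\Phi_2\circ\Phi_1)=\sum_{i,j}\ketbra{i}{j}_A\otimes \Phi_2\bigl(\Phi_1(\ketbra{i}{j})\bigr).
\]
Next, expand the inner output as
\[
\Phi_1(\ketbra{i}{j})=\sum_{a,b}c^{ij}_{ab}\ketbra{a}{b}_B ,
\]
so that
\[
\Lambda(\Phi_1)=\sum_{i,j,a,b}c^{ij}_{ab}\ketbra{i}{j}\otimes\ketbra{a}{b}.
\]
By linearity of $\Phi_2$, the left-hand side becomes
\[
\sum c^{ij}_{ab}\ketbra{i}{j}\otimes\Phi_2(\ketbra{a}{b}).
\]

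For the right-hand side, the partial transpose on $B$ gives
\[
\Lambda(\Phi_1)^{T_B}=\sum c^{ij}_{ab}\ketbra{i}{j}\otimes\ketbra{b}{a}.
\]
With $\Lambda(\Phi_2)=\sum_{a',b'}\ketbra{a'}{b'}\otimes\Phi_2(\ketbra{a'}{b'})$, we form
\[
(I_A\otimes\Lambda(\Phi_2))(\Lambda(\Phi_1)^{T_B}\otimes I_C)
=\sum c^{ij}_{ab}\ketbra{i}{j}\otimes \ketbra{a'}{b'}\ketbra{b}{a}\otimes\Phi_2(\ketbra{a'}{b'}).
\]
The product $\ketbra{a'}{b'}\ketbra{b}{a}$ forces $b'=b$. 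Taking $\Tr_B$ then gives $\Tr\ketbra{a'}{a}=\delta_{a'a}$, which leaves exactly
\[
\sum c^{ij}_{ab}\ketbra{i}{j}\otimes\Phi_2(\ketbra{a}{b}),
\]
matching the left-hand side.

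The only real obstacle is bookkeeping: the operator ordering in the tensor factors $A\otimes B\otimes C$ and the placement of the transpose. I would double-check these with Theorem~\ref{thm:choi-action}. As a sanity check, the same computation specialized to $\Phi_1$ with trivial input ($\dim H_A=1$) reduces to $\Phi_2(\rho)=\Tr_B\bigl(\Lambda(\Phi_2)(\rho^{T}\otimes I_C)\bigr)$, i.e.\ \eqref{eq:choi-action}, which confirms the transpose convention. Since the theorem is cited from \citeMy{chiribella2009theoretical}, this direct matrix-unit verification is all that is needed.
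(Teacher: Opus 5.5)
Your computation is correct. With $\Lambda(\Phi_1)^{T_B}=\sum c^{ij}_{ab}\ketbra{i}{j}\otimes\ketbra{b}{a}$, the product $\ketbra{a'}{b'}\ketbra{b}{a}$ followed by the trace over $B$ collapses the sum to $\sum c^{ij}_{ab}\ketbra{i}{j}\otimes\Phi_2(\ketbra{a}{b})$. By linearity of $\Phi_2$ this is exactly $\Lambda(\Phi_2\circ\Phi_1)$. Your $\dim H_A=1$ check also correctly recovers Eq.~\eqref{eq:choi-action}.

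The paper gives no argument for this statement. It imports the identity from \citeMy{chiribella2009theoretical} as a black box. So the difference is that you supply a self-contained verification where the paper relies on a citation. Your check is also carried out in the paper's own conventions: input factor first in $\Lambda$, and the partial transpose placed on $\Lambda(\Phi_1)$. These need not coincide with the ordering conventions of the cited source, and they are exactly the form used later in Proposition~\ref{prop:separate-linear}.

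Two minor remarks:
\begin{itemize}
\item The opening appeal to bilinearity is not needed. The coefficients $c^{ij}_{ab}$ already parametrize an arbitrary $\Phi_1$, and the only property you actually use is linearity of $\Phi_2$.
\item There is an even shorter route: apply Eq.~\eqref{eq:choi-action} blockwise, $\Phi_2(\Phi_1(\ketbra{i}{j}))=\Tr_B[\Lambda(\Phi_2)(\Phi_1(\ketbra{i}{j})^T\otimes I_C)]$, and observe that $\sum_{i,j}\ketbra{i}{j}\otimes\Phi_1(\ketbra{i}{j})^T=\Lambda(\Phi_1)^{T_B}$.
\end{itemize}
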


\begin{proposition}\label{prop:separate-linear}
For any source distribution $\mu$, the functional $F_\mu(\Enc,\Dec)$ from Eq.~\eqref{eq:objective} extends to a separately linear continuous functional on the vector spaces of linear maps $\Enc$ and $\Dec$.
\end{proposition}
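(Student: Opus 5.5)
We show that the map $(\Enc,\Dec)\mapsto F_\mu(\Enc,\Dec)$, first defined only on CPTP pairs, extends to every pair of linear maps, and that this extension is separately linear and continuous. The plan is to write the integrand through Choi matrices. Then the dependence on each map is visibly linear, and the expectation over $\mu$ keeps linearity and continuity.

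\textbf{Fixing the $\Enc$-dependence.} For fixed linear maps $\Enc,\Dec$ and fixed $\ket{\psi}$, define
\[
f_\psi(\Enc,\Dec):=\bra{\psi}(\Dec\circ\Enc)(\proj{\psi})\ket{\psi}.
\]
For fixed $\Dec$, the map $\Enc\mapsto \Dec\circ\Enc$ is linear because $\Dec$ is linear. For fixed $\Enc$, the map $\Dec\mapsto\Dec\circ\Enc$ is linear because composition is linear in the outer factor. Evaluating at $\proj{\psi}$ and pairing with $\bra{\psi}\cdot\ket{\psi}$ are both linear functionals. Hence $f_\psi$ is bilinear on the finite-dimensional space $\mathcal{L}_1\times\mathcal{L}_2$, where $\mathcal{L}_1$ is the space of linear maps $L(H_{2^n})\to L(H_{2^k})$ and $\mathcal{L}_2$ is the space of linear maps $L(H_{2^k})\to L(H_{2^n})$.

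To make this concrete, combine Theorem~\ref{thm:choi-action} with Theorem~\ref{thm:link-compose}. This gives
\[
f_\psi(\Enc,\Dec)=\Tr\bigl[(\Lambda(\Dec)\star\Lambda(\Enc))\,(\proj{\psi}^T\otimes\proj{\psi})\bigr],
\]
an explicit bilinear form in the entries of the two Choi matrices. Since $\Lambda$ is a linear bijection onto finite-dimensional matrix spaces, bilinearity and continuity in these entries carry over to $(\Enc,\Dec)$.

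\textbf{Taking the expectation.} We set $F_\mu(\Enc,\Dec):=\mathbb{E}_\mu[f_\psi(\Enc,\Dec)]$ for arbitrary linear $\Enc,\Dec$. Separate linearity then follows from linearity of the expectation. Alternatively, one can factor the average as
\[
F_\mu=\Tr\bigl[(\Lambda(\Dec)\star\Lambda(\Enc))\,\Omega_\mu\bigr],\qquad \Omega_\mu:=\mathbb{E}_\mu\bigl[\proj{\psi}^T\otimes\proj{\psi}\bigr],
\]
where $\Omega_\mu$ is a fixed operator that depends only on $\mu$.

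\textbf{Well-definedness and continuity.} Under operator norms, and for arbitrary linear maps of the right sizes,
\[
|f_\psi(\Enc,\Dec)|\le\|\Dec\|\,\|\Enc\|.
\]
The bound is uniform in $\psi$ because $\psi$ ranges over unit vectors. The integrand is also continuous, in fact polynomial, in $\psi$, so it is measurable. The expectation is therefore finite, and $\Omega_\mu$ is well defined, being the integral of a bounded continuous operator-valued function against a probability measure.

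The same uniform bound passes through the expectation to give $|F_\mu(\Enc,\Dec)|\le\|\Dec\|\,\|\Enc\|$, which is boundedness of the bilinear form. Since $\mathcal{L}_1$ and $\mathcal{L}_2$ are finite-dimensional, a bounded bilinear form is jointly continuous, and in particular separately continuous. On CPTP pairs the extension agrees with Eq.~\eqref{eq:objective} by construction.

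\textbf{Where care is needed.} The only step that needs attention is this one: the extension is defined on the ambient vector spaces and not only on the CPTP set. This is why we use operator-norm bounds valid for all linear maps, rather than the trace-preservation bound $f_\psi\le 1$, which holds only for channels.
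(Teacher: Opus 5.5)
Your proposal is correct and follows essentially the paper's route: the paper writes $(\Dec\circ\Enc)(\proj{\psi})$ through the Choi action formula and the link product, notes that the result is bilinear in $\Lambda(\Enc)$ and $\Lambda(\Dec)$, and then averages over $\mu$. Your additional remarks make explicit what the paper leaves implicit: composition is already bilinear without invoking Choi matrices, and the uniform bound and measurability ensure that the expectation is well defined for arbitrary linear maps.
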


\begin{proof}
For a pure state $\ket{\psi}$,
\begin{align}
(\Dec\circ \Enc)(\proj{\psi})
&=
\Tr_A\Bigl((\Lambda(\Dec)\star \Lambda(\Enc))(\proj{\psi}^{T}\otimes I)\Bigr) \notag\\
&=
\Tr_A\Bigl(\Tr_C\bigl[(I_A\otimes \Lambda(\Dec))(\Lambda(\Enc)^{T_C}\otimes I)\bigr](\proj{\psi}^{T}\otimes I)\Bigr).
\label{eq:expanded-composition}
\end{align}
The right-hand side is bilinear in $\Lambda(\Enc)$ and $\Lambda(\Dec)$, hence separately linear in $\Enc$ and $\Dec$. Averaging over $\mu$ preserves separate linearity and continuity.
\end{proof}

\subsection{Proof of Lemma~\ref{lem:ancilla-state-reduction}}
\label{app:proof_pure_ancillas_suffice}

\myBlue
We recall and prove Lemma~\ref{lem:ancilla-state-reduction}.
\AncillaAlwaysPure*
\color{black}

\begin{proof}
Let
\[
M_{\mathrm{mix}}
:=
\max_{U,V,\sigma_B,\tau_E}
F_\mu\!\left(\Enc_{U,\sigma_B}, \Dec_{V,\tau_E}\right),
\qquad
M_{0}
:=
\max_{U,V}
F_\mu\!\left(
\Enc_{U,|0^{n_B}\rangle\langle 0^{n_B}|},
\Dec_{V,|0^{n_E}\rangle\langle 0^{n_E}|}
\right).
\]
Both maxima exist because the relevant unitary groups and state spaces are compact,
and the objective is continuous in all parameters.

Since $|0^{n_B}\rangle\langle 0^{n_B}|$ and $|0^{n_E}\rangle\langle 0^{n_E}|$ are
particular density operators, we immediately have
\[
M_0 \le M_{\mathrm{mix}}.
\]
It remains to prove the reverse inequality.

Fix arbitrary unitaries $U,V$ and arbitrary density operators $\sigma_B,\tau_E$.
Choose spectral decompositions
\[
\sigma_B = \sum_r p_r |\beta_r\rangle\langle \beta_r|,
\qquad
\tau_E = \sum_s q_s |\eta_s\rangle\langle \eta_s|,
\]
with $p_r,q_s \ge 0$ and $\sum_r p_r = \sum_s q_s = 1$. Define the corresponding
pure-ancilla encoder and decoder channels by
\[
\Enc_r := \Enc_{U,|\beta_r\rangle\langle \beta_r|},
\qquad
\Dec_s := \Dec_{V,|\eta_s\rangle\langle \eta_s|}.
\]
By linearity,
\[
\Enc_{U,\sigma_B} = \sum_r p_r \Enc_r,
\qquad
\Dec_{V,\tau_E} = \sum_s q_s \Dec_s.
\]
Applying separate linearity twice gives
\[
F_\mu\!\left(\Enc_{U,\sigma_B}, \Dec_{V,\tau_E}\right)
=
\sum_{r,s} p_r q_s\, F_\mu(\Enc_r,\Dec_s).
\]
Hence there exist indices $r_\ast,s_\ast$ such that
\[
F_\mu\!\left(\Enc_{U,\sigma_B}, \Dec_{V,\tau_E}\right)
\le
F_\mu(\Enc_{r_\ast},\Dec_{s_\ast}).
\]

Now choose unitaries $W_B$ on $H_B$ and $W_E$ on $H_E$ satisfying
\[
W_B|0^{n_B}\rangle = |\beta_{r_\ast}\rangle,
\qquad
W_E|0^{n_E}\rangle = |\eta_{s_\ast}\rangle,
\]
and define
\[
\widetilde U := U (I_A \otimes W_B),
\qquad
\widetilde V := V (I_C \otimes W_E).
\]
Then, for every input state $\rho_A$,
\begin{align*}
\Enc_{\widetilde U,|0^{n_B}\rangle\langle 0^{n_B}|}(\rho_A)
&=
\operatorname{Tr}_D\!\left[
\widetilde U
(\rho_A \otimes |0^{n_B}\rangle\langle 0^{n_B}|)
\widetilde U^\dagger
\right] \\
&=
\operatorname{Tr}_D\!\left[
U
(\rho_A \otimes W_B|0^{n_B}\rangle\langle 0^{n_B}|W_B^\dagger)
U^\dagger
\right] \\
&=
\operatorname{Tr}_D\!\left[
U
(\rho_A \otimes |\beta_{r_\ast}\rangle\langle \beta_{r_\ast}|)
U^\dagger
\right] \\
&= \Enc_{r_\ast}(\rho_A).
\end{align*}
Likewise, for every input state $\rho_C$,
\[
\Dec_{\widetilde V,|0^{n_E}\rangle\langle 0^{n_E}|}(\rho_C)
=
\Dec_{s_\ast}(\rho_C).
\]
Therefore
\[
F_\mu\!\left(\Enc_{U,\sigma_B}, \Dec_{V,\tau_E}\right)
\le
F_\mu\!\left(
\Enc_{\widetilde U,|0^{n_B}\rangle\langle 0^{n_B}|},
\Dec_{\widetilde V,|0^{n_E}\rangle\langle 0^{n_E}|}
\right)
\le M_0.
\]
Since the mixed-ancilla implementation $(U,V,\sigma_B,\tau_E)$ was arbitrary, it follows that
\[
M_{\mathrm{mix}} \le M_0.
\]
Together with $M_0 \le M_{\mathrm{mix}}$, this proves $M_{\mathrm{mix}} = M_0$.
\end{proof}

\subsection{Extreme points and low-rank implementations}

\begin{lemma}\label{lem:extreme-optimum}
There exists an optimal pair $(\widehat{\Enc},\widehat{\Dec})$ for Eq.~\eqref{eq:objective} such that both channels are extreme points of their respective CPTP sets.
\end{lemma}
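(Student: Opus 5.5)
The plan is to combine existence of an optimum with separate linearity (Proposition~\ref{prop:separate-linear}) and the Krein--Milman theorem applied one factor at a time.

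\emph{Existence of an optimum.} Let $\mathcal{C}_{\mathrm{enc}}$ denote the set of CPTP maps $L(H_{2^n})\to L(H_{2^k})$ and $\mathcal{C}_{\mathrm{dec}}$ the set of CPTP maps $L(H_{2^k})\to L(H_{2^n})$. Via the Choi isomorphism (Definition~\ref{def:choi}), each set is identified with the set of positive semidefinite Choi matrices satisfying a linear partial-trace constraint. Each is therefore a nonempty, convex, closed and bounded subset of a finite-dimensional real vector space, hence compact. By Proposition~\ref{prop:separate-linear}, $F_\mu$ is continuous on $\mathcal{C}_{\mathrm{enc}}\times\mathcal{C}_{\mathrm{dec}}$, so some optimal pair $(\Enc^\ast,\Dec^\ast)$ exists.

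\emph{Replacing the decoder by an extreme one.} Fix $\Enc^\ast$. The map $\Dec\mapsto F_\mu(\Enc^\ast,\Dec)$ is an affine (indeed linear) continuous functional on the compact convex set $\mathcal{C}_{\mathrm{dec}}$. By Krein--Milman, or simply Carathéodory in finite dimensions, $\Dec^\ast=\sum_j t_j\Dec_j$ is a finite convex combination of extreme points $\Dec_j$ of $\mathcal{C}_{\mathrm{dec}}$. Linearity gives $F_\mu(\Enc^\ast,\Dec^\ast)=\sum_j t_j F_\mu(\Enc^\ast,\Dec_j)$. Hence some extreme $\widehat\Dec:=\Dec_{j}$ satisfies $F_\mu(\Enc^\ast,\widehat\Dec)\ge F_\mu(\Enc^\ast,\Dec^\ast)$, and since $(\Enc^\ast,\Dec^\ast)$ was already optimal, equality holds and $(\Enc^\ast,\widehat\Dec)$ is optimal.

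\emph{Replacing the encoder by an extreme one.} Now fix $\widehat\Dec$ and apply the same argument to $\Enc\mapsto F_\mu(\Enc,\widehat\Dec)$ on $\mathcal{C}_{\mathrm{enc}}$. Decompose $\Enc^\ast$ into extreme channels and select one, $\widehat\Enc$, with $F_\mu(\widehat\Enc,\widehat\Dec)\ge F_\mu(\Enc^\ast,\widehat\Dec)$, which equals the maximum. The pair $(\widehat\Enc,\widehat\Dec)$ is then optimal and both channels are extreme.

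The only step requiring care is confirming that the CPTP sets are compact and convex and that extreme points span them. Both facts follow from finite dimensionality, since the Choi matrices satisfy $\Lambda\ge0$, $\Tr_{\mathrm{out}}\Lambda=I$, and hence $\Tr\Lambda=d_{\mathrm{in}}$, which gives boundedness. There is no real obstacle. The key point is to do the replacement sequentially, so that joint (non-)convexity of the bilinear objective never matters.
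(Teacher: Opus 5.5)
Your proposal is correct and follows essentially the same route as the paper: from an optimal pair, use separate linearity to replace one channel at a time by an extreme one. The paper invokes Bauer's maximum principle and replaces the encoder first; you replace the decoder first using the finite-dimensional extreme-point decomposition, which is equivalent here. Strictly, that decomposition is Minkowski's theorem combined with Carath\'eodory's, since Krein--Milman alone only yields the closed convex hull.
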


\begin{proof}
The set of CPTP maps between fixed finite-dimensional spaces is convex and compact. By continuity, the maximum of $F_\mu$ is attained. Fix an optimal decoder $\Dec_0$. Because $\Enc\mapsto F_\mu(\Enc,\Dec_0)$ is continuous and linear by Proposition~\ref{prop:separate-linear}, Bauer's maximum principle implies that it attains its maximum at an extreme encoder $\widehat{\Enc}$. Now fix this encoder. The same argument applied to $\Dec\mapsto F_\mu(\widehat{\Enc},\Dec)$ yields an extreme decoder $\widehat{\Dec}$, and $(\widehat{\Enc},\widehat{\Dec})$ is still optimal.
\end{proof}

\begin{theorem}[\citeMy{ruskai2007some}, Theorem~1]\label{thm:extreme-choi-rank}
Let $\Phi:L(H_{d_1})\to L(H_{d_2})$ be an extreme CPTP channel. Then its Choi rank is at most $d_1$. More generally, the closure of the set of extreme CPTP channels equals the set of CPTP channels with Choi rank at most $d_1$ \citeMy{choi1975completely, ruskai2007some, friedland2016extreme, memarzadeh2022group}.
\end{theorem}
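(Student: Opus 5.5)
The first claim is the Choi (1975) characterization of extreme points of the CPTP set, sharpened as in Ruskai's paper. I would write $\Phi$ in a minimal Kraus form $\Phi(\rho)=\sum_{i=1}^r K_i\rho K_i^\dagger$ with $K_i:H_{d_1}\to H_{d_2}$ linearly independent, so that $r$ is the Choi rank of Definition~\ref{def:choi}. The key step is Choi's criterion: $\Phi$ is extreme in the convex set of CPTP maps if and only if the $r^2$ operators $\{K_i^\dagger K_j\}_{i,j=1}^r\subset L(H_{d_1})$ are linearly independent.

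To prove the direction needed, namely that extremality implies independence, I would argue by contrapositive. Suppose $\sum_{ij}c_{ij}K_i^\dagger K_j=0$ with $C=(c_{ij})\ne0$. Replacing $C$ by its Hermitian or anti-Hermitian part (times $i$), I may take $C$ Hermitian and nonzero. Then set
\[
\Phi_\pm(\rho)=\sum_{ij}(\delta_{ij}\pm t c_{ij})K_i\rho K_j^\dagger
\]
for small $t>0$. Each $\Phi_\pm$ is completely positive because $I\pm tC\ge0$. Each is trace preserving because the added term vanishes by the assumed relation. Since the $K_i$ are independent, $\Phi_+\ne\Phi_-$, and $\Phi=\tfrac12(\Phi_++\Phi_-)$ exhibits $\Phi$ as non-extreme. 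Given independence, the count is immediate: $r^2$ independent vectors lie in $L(H_{d_1})$, which has dimension $d_1^2$, so $r^2\le d_1^2$ and $r\le d_1$. This proves the first sentence.

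The second sentence needs two inclusions. The inclusion $\overline{\mathrm{Ext}}\subseteq\{\text{rank}\le d_1\}$ follows from the first part once I note that the set of CPTP maps with Choi rank at most $d_1$ is closed. Indeed, rank is lower semicontinuous, so Choi matrices of rank at most $d_1$ form a closed set, and the CPTP conditions are closed.

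The reverse inclusion is the main obstacle. I must show that every CPTP map with Choi rank at most $d_1$ is a limit of extreme maps. My approach is a perturbation argument. Start with Kraus operators $K_1,\dots,K_r$, $r\le d_1$, padded with zeros up to exactly $d_1$ operators, and view them as an isometry $W=\sum_i K_i\otimes\ket{i}:H_{d_1}\to H_{d_2}\otimes\mathbb{C}^{d_1}$. The set of such isometries is a connected real-algebraic manifold (a Stiefel manifold), and the map to channels is continuous. The condition that $\{K_i^\dagger K_j\}$ be dependent is a polynomial condition: all $d_1^2\times d_1^2$ minors of the corresponding coefficient matrix vanish. So the non-extreme locus is a real algebraic subset. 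If it is proper, it is nowhere dense, because an analytic function on a connected manifold that vanishes on an open set vanishes identically. It is therefore enough to exhibit one isometry whose $d_1$ Kraus operators give independent products $K_i^\dagger K_j$.

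For that single witness, which needs $d_2\ge2$ (the case $d_2=1$ is trivial, since the only channel is the trace), I would use generalized-Pauli-type operators such as $K_i=\tfrac{1}{\sqrt{d_1}}X^{i}$ embedded in $H_{d_2}$ with suitable diagonal phase modifications. The point of the phases is to make $\{K_i^\dagger K_j\}$ span all of $L(H_{d_1})$. Alternatively, I could simply cite the cited references for this step, since it is exactly the content of the closure statement in \citeMy{ruskai2007some, friedland2016extreme}. Perturbing the given $W$ by an arbitrarily small amount into the dense good set then yields extreme channels that converge to $\Phi$.
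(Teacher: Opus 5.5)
The paper does not prove this statement. It imports it as a cited black box, and it uses only the first sentence: to bound $\krank(\widehat{\Enc})\le 2^n$ and $\krank(\widehat{\Dec})\le 2^k$ in the proof of Theorem~\ref{thm:sufficiency}. Your contrapositive argument for that first sentence is Choi's standard one and is sound, apart from an index slip. With $\Phi_\pm(\rho)=\sum_{ij}(\delta_{ij}\pm t c_{ij})K_i\rho K_j^\dagger$, trace preservation requires $\sum_{ij}c_{ij}K_j^\dagger K_i=0$. That is the relation for $C^{T}$, not for $C$. Using $c_{ji}$ in place of $c_{ij}$ repairs it, since $C^{T}$ is still Hermitian and so $I\pm tC^{T}\ge 0$. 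So for the part the paper actually relies on, you give a self-contained proof where the paper only cites.

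For the closure statement, the step that would fail is your witness. Suppose every $K_i$ is a scalar multiple of a unitary or isometry, as $\tfrac{1}{\sqrt{d_1}}X^{i}$ with diagonal phase modifications is. Then each $K_i^\dagger K_i$ is a multiple of $I_{d_1}$, so the diagonal products are already linearly dependent once $d_1\ge 2$. No choice of phases can make $\{K_i^\dagger K_j\}$ span $L(H_{d_1})$; this is just the fact that nontrivial mixed-unitary channels are not extreme.

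A witness that works for every $d_2\ge 1$, with no case split, is the replacement channel $\rho\mapsto\Tr(\rho)\proj{0}$. Take Kraus operators $K_i=\ketbra{0}{i}$ for $i=0,\dots,d_1-1$. Then $K_i^\dagger K_j=\ketbra{i}{j}$ is the matrix-unit basis of $L(H_{d_1})$, and the associated $W$ is an isometry into $H_{d_2}\otimes\C^{d_1}$.

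Separately, the density argument needs the converse half of Choi's criterion (independence implies extremality), which you state but do not prove. It is short. If $\Phi=\tfrac12(\Phi_1+\Phi_2)$, then $0\le\Lambda(\Phi_1)\le 2\Lambda(\Phi)$, so $\Lambda(\Phi_1)$ is supported on the span of the vectorized $K_i$. This gives $\Phi_1(\rho)=\sum_{ij}m_{ij}K_i\rho K_j^\dagger$ with $M\ge 0$, and trace preservation plus independence forces $M=I$. Also, independence of the $d_1^2$ products forces the $K_i$ themselves to be independent, so the criterion applies directly to the perturbed Kraus list.

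With these repairs your genericity argument is correct and yields the closure statement, which the paper merely cites and never uses. That argument runs over the connected Stiefel manifold, uses the squared modulus of a single $d_1^2\times d_1^2$ determinant as a real-analytic function that is not identically zero, and so gets an open dense set of extreme channels.
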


\begin{proposition}[Implementing a low-rank encoder]\label{prop:enc-implementation}
Let $\Phi:L(H_{2^n})\to L(H_{2^k})$ be a CPTP channel of Kraus rank at most $2^n$. Then there exists an $(n+k)$-qubit unitary $U$ such that, under a relabeling $AB\equiv CD$ with $|A|=n$, $|B|=k$, $|C|=k$, and $|D|=n$,
\begin{equation}\label{eq:enc-implementation}
\Phi(\rho)
=
\Tr_D\Bigl(U(\rho_A\otimes \proj{0^k}_B)U^\dagger\Bigr).
\end{equation}
\end{proposition}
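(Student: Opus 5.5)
The plan is to convert a Kraus decomposition of $\Phi$ into a Stinespring isometry whose environment is exactly an $n$-qubit register $D$, and then extend that isometry to a unitary on $n+k$ qubits.

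\textbf{Padded Kraus list.} Since the Kraus rank of $\Phi$ is at most $2^n$, we can write
\[
\Phi(\rho)=\sum_{i=0}^{2^n-1}K_i\rho K_i^\dagger,
\qquad K_i:H_{2^n}\to H_{2^k},
\]
padding with zero operators $K_i=0$ if the minimal rank $r$ is smaller than $2^n$. The completeness relation $\sum_i K_i^\dagger K_i=I_{2^n}$ is unaffected by the padding.

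\textbf{Stinespring isometry.} Let $\{\ket{i}_D\}_{i=0}^{2^n-1}$ be the computational basis of the $n$-qubit register $D$, and define
\[
\widetilde U:=\sum_{i}K_i\otimes\ket{i}_D : H_A\to H_C\otimes H_D .
\]
The computation in Appendix~\ref{sec:appendix_prelim} shows that $\widetilde U^\dagger\widetilde U=\sum_i K_i^\dagger K_i=I_A$ and that $\Tr_D(\widetilde U\rho\widetilde U^\dagger)=\Phi(\rho)$. The dimensions work out because $H_C\otimes H_D$ has dimension $2^k\cdot 2^n=2^{n+k}$, the same as $H_A\otimes H_B$.

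\textbf{Unitary extension.} Identify $H_A$ with the subspace $H_A\otimes\ket{0^k}_B$ of $H_{AB}$. Then the map $\ket{\psi}_A\ket{0^k}_B\mapsto\widetilde U\ket{\psi}$ is an isometry from a $2^n$-dimensional subspace of $H_{AB}$ into $H_{CD}\equiv H_{AB}$. Its image $\ran\widetilde U$ also has dimension $2^n$. Choose any unitary map from the orthogonal complement $(H_A\otimes\ket{0^k})^\perp$ onto $(\ran\widetilde U)^\perp$; both complements have dimension $2^{n+k}-2^n$, so such a map exists. Taking the direct sum of the two pieces gives an $(n+k)$-qubit unitary $U$ with $U(\ket{\psi}\otimes\ket{0^k})=\widetilde U\ket{\psi}$. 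Consequently $U(\rho\otimes\proj{0^k})U^\dagger=\widetilde U\rho\widetilde U^\dagger$, and Eq.~\eqref{eq:enc-implementation} follows after tracing out $D$.

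The only step that needs any care is this dimension bookkeeping: the environment must have dimension exactly $2^n$. This is precisely what the Kraus-rank hypothesis, together with zero-padding, provides, so no step is a genuine obstacle.
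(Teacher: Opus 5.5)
Your proposal is correct and follows the paper's proof essentially verbatim: you pad the Kraus list to length $2^n$, form the isometry $\widetilde U=\sum_i K_i\otimes\ket{i}_D$, embed its domain as $H_A\otimes\Span\{\ket{0^k}_B\}$, and extend it to a unitary on $H_A\otimes H_B\cong H_C\otimes H_D$. Your explicit pairing of the two orthogonal complements, each of dimension $2^{n+k}-2^n$, simply spells out the unitary-extension step that the paper asserts without detail.
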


\begin{proof}
Choose Kraus operators $K_0,\dots,K_{2^n-1}:H_{2^n}\to H_{2^k}$ for $\Phi$, padding with zero operators if necessary. Define
\begin{equation}\label{eq:enc-isometry}
\widetilde U:=\sum_{i=0}^{2^n-1} K_i\otimes \ket{i}_D : H_{2^n}\to H_{2^k}\otimes H_{2^n}.
\end{equation}
Then
\[
\widetilde U^\dagger \widetilde U = \sum_{i=0}^{2^n-1} K_i^\dagger K_i = I_{2^n},
\]
so $\widetilde U$ is an isometry. Identify the domain of $\widetilde U$ with the $2^n$-dimensional subspace $H_A\otimes \Span\{\ket{0^k}_B\}$ of $H_A\otimes H_B$ and extend $\widetilde U$ to a unitary $U$ on $H_A\otimes H_B\cong H_C\otimes H_D$. Then
\[
\Tr_D\bigl(\widetilde U\rho\widetilde U^\dagger\bigr)
=
\sum_{i=0}^{2^n-1} K_i\rho K_i^\dagger
=
\Phi(\rho),
\]
which is exactly Eq.~\eqref{eq:enc-implementation}.
\end{proof}

\begin{proposition}[Implementing a low-rank decoder]\label{prop:dec-implementation}
Let $\Phi:L(H_{2^k})\to L(H_{2^n})$ be a CPTP channel of Kraus rank at most $2^k$. Then there exists an $(n+k)$-qubit unitary $V$ such that, under a relabeling $CE\equiv FG$ with $|C|=k$, $|E|=n$, $|F|=n$, and $|G|=k$,
\begin{equation}\label{eq:dec-implementation}
\Phi(\rho)
=
\Tr_G\Bigl(V(\rho_C\otimes \proj{0^n}_E)V^\dagger\Bigr).
\end{equation}
\end{proposition}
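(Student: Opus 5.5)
The argument mirrors the proof of Proposition~\ref{prop:enc-implementation}, with the roles of the dimensions exchanged. First choose Kraus operators $K_0,\dots,K_{2^k-1}:H_{2^k}\to H_{2^n}$ for $\Phi$. Since the Kraus rank is at most $2^k$, we pad with zero operators if fewer are needed, and we still have $\sum_i K_i^\dagger K_i=I_{2^k}$. Then define
\[
\widetilde V:=\sum_{i=0}^{2^k-1} K_i\otimes \ket{i}_G : H_{2^k}\to H_{2^n}\otimes H_{2^k},
\]
where $\{\ket{i}_G\}$ is the computational basis of the $k$-qubit register $G$. Orthonormality of the $\ket{i}_G$ gives $\widetilde V^\dagger\widetilde V=\sum_i K_i^\dagger K_i=I_{2^k}$, so $\widetilde V$ is an isometry.

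Next, identify the domain $H_{2^k}$ of $\widetilde V$ with the $2^k$-dimensional subspace $H_C\otimes\Span\{\ket{0^n}_E\}$ of $H_C\otimes H_E$. The total space $H_C\otimes H_E$ has dimension $2^{k+n}$, which equals $\dim(H_F\otimes H_G)$. Hence the isometry from this subspace into $H_F\otimes H_G\cong H_C\otimes H_E$ extends to a unitary $V$ on all $n+k$ qubits: map an orthonormal basis of the orthogonal complement of the subspace onto an orthonormal basis of the orthogonal complement of $\operatorname{im}\widetilde V$. These complements have equal dimension $2^{n+k}-2^k$.

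Finally, for every $\rho_C$ we have $V(\rho_C\otimes\proj{0^n}_E)V^\dagger=\widetilde V\rho_C\widetilde V^\dagger$. The partial-trace computation in Appendix~\ref{sec:appendix_prelim} then yields $\Tr_G(\widetilde V\rho\widetilde V^\dagger)=\sum_i K_i\rho K_i^\dagger=\Phi(\rho)$, which is Eq.~\eqref{eq:dec-implementation}.

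No step is genuinely hard. The only points to check are that the environment $G$ has exactly $k$ qubits, which is enough to index $2^k$ Kraus operators, and that the dimension count allows extending the isometry to a unitary. Both follow from $|C|+|E|=|F|+|G|=n+k$.
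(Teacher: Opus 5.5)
Your proposal is correct and follows the paper's proof essentially step for step: pad the Kraus list to length $2^k$, form the isometry $\widetilde V=\sum_i K_i\otimes\ket{i}_G$, identify its domain with $H_C\otimes\Span\{\ket{0^n}_E\}$, extend it to a unitary on $n+k$ qubits, and trace out $G$. Your count of the two orthogonal complements, each of dimension $2^{n+k}-2^k$, simply makes explicit the unitary-extension step that the paper leaves implicit.
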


\begin{proof}
Choose Kraus operators $L_0,\dots,L_{2^k-1}:H_{2^k}\to H_{2^n}$ for $\Phi$, padding with zeros if necessary, and define
\begin{equation}\label{eq:dec-isometry}
\widetilde V:=\sum_{i=0}^{2^k-1} L_i\otimes \ket{i}_G:H_{2^k}\to H_{2^n}\otimes H_{2^k}.
\end{equation}
Again $\widetilde V^\dagger \widetilde V=I_{2^k}$, so $\widetilde V$ is an isometry. Identify $H_{2^k}$ with the subspace $H_C\otimes \Span\{\ket{0^n}_E\}$ of $H_C\otimes H_E$ and extend $\widetilde V$ to a unitary $V$ on $H_C\otimes H_E\cong H_F\otimes H_G$. Then
\[
\Tr_G\bigl(\widetilde V\rho\widetilde V^\dagger\bigr)=\sum_{i=0}^{2^k-1} L_i\rho L_i^\dagger=\Phi(\rho),
\]
which is Eq.~\eqref{eq:dec-implementation}.
\end{proof}

\subsection{Proof of Theorem~\ref{thm:sufficiency}}
\label{sec:universal_kn_sufficient_proof}

\myBlue
We recall and prove Theorem~\ref{thm:sufficiency}.
\UniversalKNSufficient*
\color{black}

\begin{proof}
Let $(\widehat{\Enc},\widehat{\Dec})$ be an optimal pair as in Lemma~\ref{lem:extreme-optimum}. By Theorem~\ref{thm:extreme-choi-rank},
\[
\krank(\widehat{\Enc})\le 2^n,
\qquad
\krank(\widehat{\Dec})\le 2^k.
\]
Therefore Propositions~\ref{prop:enc-implementation} and \ref{prop:dec-implementation} realize $\widehat{\Enc}$ and $\widehat{\Dec}$ by the encoder and decoder parts of an $(n,k,k,n)$-QAE. The resulting QAE attains
\[
F_\mu(\widehat{\Enc},\widehat{\Dec})=\max_{\Enc,\Dec} F_\mu(\Enc,\Dec),
\]
as claimed.
\end{proof}

\section{Haar analysis and mixtures with perfectly compressible priors}\label{sec:haar-appendix}

We write $d:=2^n$ and $m:=2^k$ for the source and bottleneck dimensions.

\begin{lemma}\label{lem:haar-identities}
Let $\ket{\psi}$ be Haar-random on $\C^d$.
\begin{enumerate}[label=(\roman*),leftmargin=2.2em]
\item For all operators $X,Y\in L(\C^d)$,
\begin{equation}\label{eq:haar-second-moment}
\mathbb{E}_{\psi\sim\mathrm{Haar}}\bigl[\bra{\psi}X\ket{\psi}\,\bra{\psi}Y\ket{\psi}\bigr]
=
\frac{\Tr(XY)+\Tr(X)\Tr(Y)}{d(d+1)}.
\end{equation}
\item If $\ket{\psi}$ is Haar-random on $H_A\otimes H_B$ and $\rho_A:=\Tr_B(\proj{\psi})$, then
\begin{equation}\label{eq:haar-purity}
\mathbb{E}_{\psi\sim\mathrm{Haar}}\Tr(\rho_A^2)
=
\frac{\dim H_A+\dim H_B}{1+\dim H_A\dim H_B}.
\end{equation}
\end{enumerate}
\end{lemma}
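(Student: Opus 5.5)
For part (i), I would use the standard second-moment (Weingarten) formula for Haar-random pure states. The key identity is
\[
\mathbb{E}_{\psi\sim\mathrm{Haar}}\bigl[(\proj{\psi})^{\otimes 2}\bigr]=\frac{I\otimes I+F}{d(d+1)},
\]
where $F$ is the swap operator on $\C^d\otimes\C^d$. This follows from three facts:
\begin{enumerate}[label=(\alph*),leftmargin=2.2em]
\item The averaged operator commutes with every $W\otimes W$, $W\in\mathrm{U}(d)$, by unitary invariance of the Haar measure.
\item It is supported on the symmetric subspace, since $(\proj{\psi})^{\otimes 2}$ is.
\item By Schur's lemma (the symmetric subspace is irreducible under $W\otimes W$), it is proportional to the symmetric projector $\Pi_{\mathrm{sym}}=(I+F)/2$. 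The constant is fixed by the trace: $\Tr\Pi_{\mathrm{sym}}=d(d+1)/2$ and the averaged operator has trace $1$.
\end{enumerate}
Then I write
\[
\bra{\psi}X\ket{\psi}\bra{\psi}Y\ket{\psi}=\Tr\bigl[(X\otimes Y)(\proj{\psi})^{\otimes 2}\bigr]
\]
and use $\Tr[(X\otimes Y)F]=\Tr(XY)$ and $\Tr(X\otimes Y)=\Tr X\,\Tr Y$. This gives Eq.~\eqref{eq:haar-second-moment} directly.

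For part (ii), I would express the purity as a two-copy expectation,
\[
\Tr(\rho_A^2)=\Tr\bigl[(F_A\otimes I_{BB'})(\proj{\psi})^{\otimes 2}\bigr],
\]
where $F_A$ swaps the two copies of $A$. Apply the same identity with $d=d_Ad_B$; the swap on the full space factorizes as $F=F_A\otimes F_B$. The two needed traces are
\[
\Tr\bigl[F_A\otimes I_{BB'}\bigr]=d_A\,d_B^2,
\qquad
\Tr\bigl[(F_A\otimes I)(F_A\otimes F_B)\bigr]=\Tr\bigl[I_{AA'}\otimes F_B\bigr]=d_A^2\,d_B .
\]
Summing and dividing by $d(d+1)$ gives
\[
\frac{d_Ad_B(d_A+d_B)}{d_Ad_B(d_Ad_B+1)}=\frac{d_A+d_B}{1+d_Ad_B},
\]
which is Eq.~\eqref{eq:haar-purity}.

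The main step requiring care is justifying the twirl identity. Irreducibility of the symmetric subspace under $W\otimes W$ is standard representation theory and I would cite it. Alternatively, I would compute the second moments of Gaussian vectors normalized on the sphere, which gives the same coefficients without representation theory. Everything else is routine trace bookkeeping.
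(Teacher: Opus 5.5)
Your proof is correct. The paper states this lemma without any proof; it is simply invoked as a standard Haar-moment fact in Proposition~\ref{prop:haar-opt}. So there is no competing route to compare against, and your symmetric-subspace twirl argument is the standard derivation that supplies what the paper omits.

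Both computations check out. For (i), with $M:=\mathbb{E}[(\proj{\psi})^{\otimes 2}]$, properties (a) and (b) give $M=\Pi_{\mathrm{sym}}M\Pi_{\mathrm{sym}}$. Since $M$ commutes with $W\otimes W$, which acts irreducibly on the symmetric subspace, Schur's lemma gives $M=\frac{2}{d(d+1)}\Pi_{\mathrm{sym}}=\frac{I+F}{d(d+1)}$. For (ii), the traces $d_Ad_B^2$ and $d_A^2d_B$ are right. The factorization $F=F_A\otimes F_B$ only requires reordering $ABA'B'\cong AA'BB'$.

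As a side remark, (ii) also follows from (i) by linearity alone. Write $F_A\otimes I_{BB'}=\sum_{i,j}(\ketbra{i}{j}\otimes I_B)\otimes(\ketbra{j}{i}\otimes I_B)$ and apply Eq.~\eqref{eq:haar-second-moment} termwise. Summed over $i,j$, the $\Tr(XY)$ terms give $d_A^2d_B$ and the $\Tr X\,\Tr Y$ terms give $d_Ad_B^2$. This avoids a second appeal to the swap-operator identity.
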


\begin{proposition}[Optimal fidelity for the Haar prior]\label{prop:haar-opt}
For the Haar prior on pure states in $\C^d$, one has
\begin{equation}\label{eq:haar-opt}
\sup_{\Enc,\Dec} F_{\mu_{\mathrm{Haar}}}(\Enc,\Dec)
=
\frac{d+m^2}{d(d+1)}.
\end{equation}
An optimal encoder--decoder pair is obtained by identifying $\C^d\cong \C^m\otimes \C^{d/m}$ and setting
\begin{equation}\label{eq:haar-opt-pair}
\Enc(\rho)=\Tr_B(\rho),
\qquad
\Dec(\sigma)=\sigma\otimes \frac{I_B}{d/m}.
\end{equation}
By Haar symmetry one may replace $I_B/(d/m)$ in Eq.~\eqref{eq:haar-opt-pair} by any pure state on $H_B$. In particular, a rank-one decoder is optimal.
\end{proposition}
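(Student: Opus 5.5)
The plan is to prove a matching upper bound, valid for every CPTP pair, and then verify that the pair in Eq.~\eqref{eq:haar-opt-pair} attains it.

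\emph{Upper bound.} Let $\Phi:=\Dec\circ\Enc$, and write $\Enc$ with Kraus operators $\{K_i\}$ and $\Dec$ with Kraus operators $\{L_j\}$. Then
\[
F_{\mu_{\mathrm{Haar}}}(\Enc,\Dec)=\sum_{i,j}\mathbb{E}_\psi\bigl|\bra{\psi}L_jK_i\ket{\psi}\bigr|^2 .
\]
Applying Eq.~\eqref{eq:haar-second-moment} to each term, with $X=L_jK_i$ and $Y=X^\dagger$, gives
\[
F=\frac{1}{d(d+1)}\sum_{i,j}\Bigl(\Tr\bigl(L_jK_iK_i^\dagger L_j^\dagger\bigr)+\abs{\Tr(L_jK_i)}^2\Bigr).
\]
\begin{itemize}[leftmargin=2.2em]
\item The first sum equals $\Tr\bigl(\Phi(I_d)\bigr)=d$, by trace preservation.
\item For the second sum, note that $\Tr(L_jK_i)=\Tr(K_iL_j)$, where $K_iL_j\in L(\C^m)$. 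Hence
\[
\sum_{i,j}\abs{\Tr(K_iL_j)}^2=\Tr\bigl(\Lambda(\Psi)\proj{\Omega}\bigr)\cdot m ,
\]
where $\Psi:=\Enc\circ\Dec$ is a CPTP map on $L(\C^m)$ and $\ket{\Omega}$ is the normalized maximally entangled vector on $\C^m\otimes\C^m$. This quantity is $m^2$ times the entanglement fidelity of $\Psi$.
\item Since entanglement fidelity is at most $1$, the second sum is at most $m^2$.
\end{itemize}
Therefore $F\le (d+m^2)/(d(d+1))$.

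\emph{Achievability.} For the pair in Eq.~\eqref{eq:haar-opt-pair}, the fidelity is $\bra{\psi}\rho_A\otimes I_B/(d/m)\ket{\psi}=\Tr(\rho_A^2)\,m/d$. By Eq.~\eqref{eq:haar-purity}, $\mathbb{E}\Tr(\rho_A^2)=(m+d/m)/(1+d)$, so the average fidelity is $(m^2+d)/(d(d+1))$, which matches the bound.

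\emph{Replacing $I_B/(d/m)$ by a pure state.} Replace the maximally mixed state by $\proj{\beta}$ for a fixed $\ket{\beta}\in H_B$. Averaging over Haar-random unitaries $W$ acting on $H_B$ leaves the Haar measure invariant. The averaged decoder is then exactly the maximally mixed one, so by linearity the pure-state decoder has the same average fidelity. This decoder has Kraus rank one. The main obstacle is the second-sum step, where the double sum of squared traces must be identified cleanly with an entanglement fidelity of the reversed composition $\Enc\circ\Dec$ on the small space.
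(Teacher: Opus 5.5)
Your proof is correct and follows essentially the same route as the paper: the Kraus expansion, the Haar second-moment identity, cycling the trace onto the $m$-dimensional latent space, the same achievability computation via the purity formula, and the same Haar-invariance argument for the pure-state decoder. The only difference is packaging: you bound $\sum_{i,j}\abs{\Tr(K_iL_j)}^2\le m^2$ by recognizing it as $m^2$ times the entanglement fidelity of $\Enc\circ\Dec$, whereas the paper applies $\abs{\Tr A}^2\le m\Tr(A^\dagger A)$ termwise and then uses trace preservation, which is the same inequality.
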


\begin{proof}
Let $\Phi:=\Dec\circ\Enc$. Choose Kraus operators $\{E_j\}_j$ for $\Enc$ and $\{D_i\}_i$ for $\Dec$, and write $K_{ij}:=D_iE_j$. For a pure input $\ket{\psi}$,
\[
\bra{\psi}\Phi(\proj{\psi})\ket{\psi} = \sum_{i,j} \abs{\bra{\psi}K_{ij}\ket{\psi}}^2.
\]
Applying Eq.~\eqref{eq:haar-second-moment} with $X=K_{ij}$ and $Y=K_{ij}^\dagger$ gives
\begin{align}
\mathbb{E}_{\psi\sim\mathrm{Haar}}\bigl[\bra{\psi}\Phi(\proj{\psi})\ket{\psi}\bigr]
&=
\frac1{d(d+1)}\sum_{i,j}\Bigl(\Tr(K_{ij}K_{ij}^\dagger)+\abs{\Tr K_{ij}}^2\Bigr) \notag\\
&=
\frac1{d+1}+\frac1{d(d+1)}\sum_{i,j}\abs{\Tr(D_iE_j)}^2.
\label{eq:haar-upper-start}
\end{align}
Now $E_jD_i$ is an $m\times m$ matrix, so
\[
\abs{\Tr(D_iE_j)}^2 = \abs{\Tr(E_jD_i)}^2 \le m\,\Tr(D_i^\dagger E_j^\dagger E_j D_i).
\]
Summing over $i,j$ yields
\[
\sum_{i,j}\abs{\Tr(D_iE_j)}^2
\le
m\sum_i \Tr\!\left(D_i^\dagger\Bigl(\sum_j E_j^\dagger E_j\Bigr)D_i\right)
=
m\sum_i \Tr(D_i^\dagger D_i)
=
m^2.
\]
Substituting into Eq.~\eqref{eq:haar-upper-start} gives the upper bound in Eq.~\eqref{eq:haar-opt}.

For achievability, identify $\C^d\cong \C^m\otimes \C^{d/m}=H_A\otimes H_B$, take $\Enc(\rho)=\Tr_B(\rho)$, and $\Dec(\sigma)=\sigma\otimes I_B/(d/m)$. If $\rho_A:=\Tr_B(\proj{\psi})$, then
\[
\bra{\psi}\Dec(\Enc(\proj{\psi}))\ket{\psi}
=
\Tr\!\left(\proj{\psi}\Bigl(\rho_A\otimes \frac{I_B}{d/m}\Bigr)\right)
=
\frac{1}{d/m}\Tr(\rho_A^2).
\]
Averaging and using Eq.~\eqref{eq:haar-purity} with $\dim H_A=m$ and $\dim H_B=d/m$ gives
\[
\mathbb{E}_{\psi\sim\mathrm{Haar}}\bigl[\bra{\psi}\Dec(\Enc(\proj{\psi}))\ket{\psi}\bigr]
=
\frac{1}{d/m}\cdot \frac{m+d/m}{d+1}
=
\frac{d+m^2}{d(d+1)}.
\]
This proves Eq.~\eqref{eq:haar-opt}. Finally, because the Haar measure is invariant under unitary rotations on $H_B$, replacing $I_B/(d/m)$ by any pure state on $H_B$ leaves the average fidelity unchanged.
\end{proof}

\begin{proposition}[Mixtures with a perfectly compressible prior]\label{prop:haar-mixture}
Let $\mu'=\mu_{\mathrm{pc}}$ be a prior on pure states in $\C^d$ such that there exist a unitary $U\in \mathrm{U}(d)$, a fixed reference state $\ket{\varphi_{\mathrm{ref}}}$, and an $m$-dimensional subsystem $C$ with
\[
U\ket{\psi}_A = \ket{\psi'}_C\otimes \ket{\varphi_{\mathrm{ref}}}_D
\qquad
\text{for all }\ket{\psi}\in\supp(\mu').
\]
Let $\mu = c\,\mu_{\mathrm{Haar}}+(1-c)\mu'$ with $0<c<1$. Then the maps
\[
\Enc(\rho):=\Tr_D(U\rho U^\dagger),
\qquad
\Dec(\sigma):=U^\dagger\bigl(\sigma\otimes \proj{\varphi_{\mathrm{ref}}}\bigr)U
\]
simultaneously maximize the fidelity for the Haar part and reconstruct every state in $\supp(\mu')$ perfectly. Consequently,
\begin{equation}\label{eq:haar-mixture}
F_\mu(\Enc,\Dec)
=
1-c\,\frac{(d+m)(d-m)}{d(d+1)}.
\end{equation}
\end{proposition}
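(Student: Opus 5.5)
We prove Proposition~\ref{prop:haar-mixture} by splitting the average via linearity in $\mu$:
\[
F_\mu(\Enc,\Dec)=c\,F_{\mu_{\mathrm{Haar}}}(\Enc,\Dec)+(1-c)\,F_{\mu'}(\Enc,\Dec).
\]
We then treat the two terms separately.

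\textbf{The $\mu'$ part.} For $\ket{\psi}\in\supp(\mu')$ we have $U\ket{\psi}=\ket{\psi'}_C\otimes\ket{\varphi_{\mathrm{ref}}}_D$. Hence $\Enc(\proj{\psi})=\proj{\psi'}$, and
\[
\Dec(\proj{\psi'})=U^\dagger\bigl(\proj{\psi'}\otimes\proj{\varphi_{\mathrm{ref}}}\bigr)U=\proj{\psi}.
\]
So the fidelity is $1$ pointwise and $F_{\mu'}=1$. We also check that both maps are CPTP. $\Enc$ is a unitary followed by a partial trace. $\Dec$ is the isometry $\sigma\mapsto U^\dagger(\sigma\otimes\proj{\varphi_{\mathrm{ref}}})U$, so it is a rank-one decoder.

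\textbf{The Haar part.} We reduce to Proposition~\ref{prop:haar-opt} using unitary invariance. Since $\ket{\psi}\mapsto U\ket{\psi}$ preserves the Haar measure, $F_{\mu_{\mathrm{Haar}}}(\Enc,\Dec)$ equals the Haar fidelity of the pair $\Enc'(\rho)=\Tr_D\rho$ and $\Dec'(\sigma)=\sigma\otimes\proj{\varphi_{\mathrm{ref}}}$ acting on $\C^m\otimes\C^{d/m}$. By the last sentence of Proposition~\ref{prop:haar-opt}, a pure state on $H_D$ may replace the maximally mixed one without loss, so this pair attains the optimum $(d+m^2)/(d(d+1))$. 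One can also see this directly: the average fidelity is $\mathbb{E}\,\bra{\psi}(\rho_C\otimes\proj{\varphi_{\mathrm{ref}}})\ket{\psi}$, and rotating $\ket{\varphi_{\mathrm{ref}}}$ by a local unitary on $D$, which the Haar measure absorbs, shows it equals the value with $I_D/(d/m)$. That value was computed via Eq.~\eqref{eq:haar-purity}. Optimality for the Haar part then follows from the upper bound in Proposition~\ref{prop:haar-opt}.

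\textbf{Combining.} We obtain
\[
F_\mu = c\,\frac{d+m^2}{d(d+1)}+(1-c) = 1-c\left(1-\frac{d+m^2}{d(d+1)}\right).
\]
The remaining step is the algebra
\[
d(d+1)-d-m^2 = d^2-m^2=(d+m)(d-m),
\]
which gives Eq.~\eqref{eq:haar-mixture}.

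The phrase ``simultaneously maximize'' also yields global optimality for $\mu$. Any pair has Haar fidelity at most the Haar optimum and $\mu'$-fidelity at most $1$, and our pair attains both bounds. The only delicate point is the invariance argument that lets the pure reference state replace the maximally mixed one. This requires that the Haar average be invariant under $I_C\otimes W_D$ as well as under $U$, which holds because the Haar measure is invariant under every unitary on $\C^d$.
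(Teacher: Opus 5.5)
Your proof is correct and follows the paper's argument. You split $F_\mu$ linearly in $\mu$, note that the pair reconstructs $\supp(\mu')$ perfectly, identify the Haar contribution with the optimal value $(d+m^2)/(d(d+1))$ from Proposition~\ref{prop:haar-opt}, and finish with the same algebra. You are more explicit than the paper about the unitary-invariance step (absorbing $U$ and the local rotation of $\ket{\varphi_{\mathrm{ref}}}$ into the Haar measure). Your closing observation that the pair is therefore globally optimal for $\mu$ is valid and is left implicit in the paper.
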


\begin{proof}
The chosen pair reconstructs $\mu'$ perfectly by construction, so its contribution to the average fidelity is $1-c$. On the Haar component the same pair is of the form considered in Proposition~\ref{prop:haar-opt}, hence contributes $c(d+m^2)/(d(d+1))$. Therefore
\[
F_\mu(\Enc,\Dec)=(1-c)+c\,\frac{d+m^2}{d(d+1)}=1-c\,\frac{d^2-m^2}{d(d+1)},
\]
which is exactly Eq.~\eqref{eq:haar-mixture}.
\end{proof}

\section{The source $\mu_{1,\eps}$ and the encoder ancilla lower bound}\label{sec:mu1-lb}

Throughout this appendix we set
\[
d:=2^n,
\qquad
m:=2^k,
\qquad
q:=d-1,
\qquad
H_\perp:=\ket{0}^{\perp}\subset \C^d,
\qquad
\dim H_\perp=q.
\]
For $\eps\in(0,1)$ and a unit vector $\ket{\varphi}\in H_\perp$, write
\[
\ket{\psi_\eps(\varphi)}:=\sqrt{1-\eps}\ket{0}+\sqrt{\eps}\ket{\varphi},
\]
and define
\begin{equation}\label{eq:F-eps}
F_\eps(\Enc,\Dec)
:=
\mathbb{E}_{\varphi}\bigl[\bra{\psi_\eps(\varphi)}(\Dec\circ\Enc)(\proj{\psi_\eps(\varphi)})\ket{\psi_\eps(\varphi)}\bigr],
\end{equation}
where $\ket{\varphi}$ is Haar-uniform on the unit sphere of $H_\perp$.

\subsection{Two preliminary lemmas}\label{subsec:mu1-prelims}

\begin{lemma}\label{lem:compact-first-order}
Let $K\neq\varnothing$ be compact, and let $A,B,C:K\to\mathbb{R}$ be bounded and continuous. Define
\[
f(\eps):=\max_{x\in K}\bigl(A(x)+\eps B(x)+\eps^2 C(x)\bigr),
\qquad
\eps\ge 0.
\]
If
\[
A_{\max}:=\max_{x\in K} A(x),
\qquad
S:=\{x\in K:A(x)=A_{\max}\},
\]
then
\begin{equation}\label{eq:compact-first-order}
\limsup_{\eps\to 0^+}\frac{f(\eps)-A_{\max}}{\eps}
\le
\sup_{x\in S} B(x).
\end{equation}
\end{lemma}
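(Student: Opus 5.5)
We need to prove Lemma~\ref{lem:compact-first-order}. The plan is a standard argument: pick maximizers $x_\eps$ along a sequence $\eps_j \to 0^+$ realizing the limsup, then pass to a convergent subsequence using compactness.

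\textbf{Step 1: lower bound $f(\eps)\ge A_{\max}$ is not needed; we bound $f(\eps)-A_{\max}$ from above.} Let $M_B,M_C$ be bounds for $|B|,|C|$. For each $\eps>0$ choose $x_\eps\in K$ attaining the maximum. Continuity and compactness guarantee existence; if one prefers not to assume attainment of the sum, take a near-maximizer within $\eps^2$. Then
\[
f(\eps)-A_{\max} = \bigl(A(x_\eps)-A_{\max}\bigr) + \eps B(x_\eps) + \eps^2 C(x_\eps) \le \eps B(x_\eps) + \eps^2 M_C,
\]
since $A(x_\eps)\le A_{\max}$. Also, comparing with any $s\in S$ gives $f(\eps)\ge A_{\max} - \eps M_B - \eps^2 M_C$. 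Combining,
\[
A(x_\eps) \ge A_{\max} - 2\eps M_B - 2\eps^2 M_C,
\]
so $A(x_\eps)\to A_{\max}$.

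\textbf{Step 2: extract a limit point.} Take a sequence $\eps_j\to 0^+$ along which $(f(\eps_j)-A_{\max})/\eps_j$ converges to the limsup $L$. If $L=-\infty$ there is nothing to prove. By compactness, pass to a subsequence with $x_{\eps_j}\to x^\ast\in K$. Continuity of $A$ together with Step 1 gives $A(x^\ast)=A_{\max}$, so $x^\ast\in S$.

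\textbf{Step 3: conclude.} Dividing the bound from Step 1 by $\eps_j$ gives
\[
\frac{f(\eps_j)-A_{\max}}{\eps_j} \le B(x_{\eps_j}) + \eps_j M_C.
\]
Letting $j\to\infty$ and using continuity of $B$ yields $L\le B(x^\ast)\le\sup_{x\in S}B(x)$.

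The only delicate point is ensuring that the limit point lies in $S$. This is exactly where Step 1's two-sided comparison is needed: the maximizers of the perturbed functional must asymptotically maximize $A$. Everything else is routine. Since the paper already assumes continuity of $A$, $B$, and $C$, no semicontinuity subtleties arise.
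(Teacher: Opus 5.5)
Your proof is correct and follows the paper's argument essentially step for step: you choose maximizers $x_\eps$, use $A(x_\eps)\le A_{\max}$ to get $(f(\eps)-A_{\max})/\eps\le B(x_\eps)+\eps C(x_\eps)$, and pass to a convergent subsequence along a sequence realizing the limsup. Your two-sided comparison in Step 1, giving $A(x_\eps)\ge A_{\max}-2\eps M_B-2\eps^2 M_C$, spells out why the limit point lies in $S$, which the paper justifies only by appeal to continuity.
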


\begin{proof}
For $\eps>0$, choose $x_\eps\in K$ such that
\[
f(\eps)=A(x_\eps)+\eps B(x_\eps)+\eps^2 C(x_\eps).
\]
Since $A(x_\eps)\le A_{\max}$,
\begin{equation}\label{eq:compact-first-order-basic}
\frac{f(\eps)-A_{\max}}{\eps}
\le
B(x_\eps)+\eps C(x_\eps).
\end{equation}
Choose a sequence $\eps_r\to 0^+$ along which the left-hand side tends to the limsup. By compactness of $K$, some subsequence of $x_{\eps_r}$ converges to a point $x_*\in K$. Passing to that subsequence and using continuity of $A,B,C$ gives
\[
A(x_*)=A_{\max},
\qquad
x_*\in S.
\]
Taking the limit in Eq.~\eqref{eq:compact-first-order-basic} along the same subsequence yields
\[
\limsup_{\eps\to 0^+}\frac{f(\eps)-A_{\max}}{\eps}
\le
B(x_*)
\le
\sup_{x\in S} B(x),
\]
which proves Eq.~\eqref{eq:compact-first-order}.
\end{proof}

\begin{lemma}\label{lem:quadratic-eps}
For every fixed CPTP map $\Phi:L(\C^d)\to L(\C^d)$, the function
\[
\eps\longmapsto F_\eps(\Phi)
:=
\mathbb{E}_{\varphi}\bigl[\bra{\psi_\eps(\varphi)}\Phi(\proj{\psi_\eps(\varphi)})\ket{\psi_\eps(\varphi)}\bigr]
\]
is a polynomial of degree at most $2$ in $\eps$. More precisely, there exist real numbers $A(\Phi),B(\Phi),C(\Phi)$ such that
\[
F_\eps(\Phi)=A(\Phi)+\eps B(\Phi)+\eps^2 C(\Phi)
\qquad
\text{for all }\eps\in[0,1],
\]
and
\[
A(\Phi)=\bra{0}\Phi(\proj{0})\ket{0}.
\]
\end{lemma}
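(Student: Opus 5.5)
The plan is to expand the fidelity directly in powers of $\sqrt{\eps}$ and then average over $\ket{\varphi}$. Write $a:=\sqrt{1-\eps}$, $b:=\sqrt{\eps}$ and $P:=\proj{\psi_\eps(\varphi)}$, so that
\[
P=a^2\proj{0}+ab\bigl(\ketbra{0}{\varphi}+\ketbra{\varphi}{0}\bigr)+b^2\proj{\varphi}.
\]
The integrand $\Tr\bigl(P\,\Phi(P)\bigr)$ is bilinear in $P$, because $\Phi$ is linear. Substituting the decomposition therefore gives a finite sum of terms. Each term is a monomial $a^{\alpha}b^{\beta}$ with $\alpha+\beta=4$, multiplied by a number of the form $\Tr\bigl(X\,\Phi(Y)\bigr)$. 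Here $X$ and $Y$ range over $\proj{0}$, $\ketbra{0}{\varphi}$, $\ketbra{\varphi}{0}$ and $\proj{\varphi}$. The count of $\ket{\varphi}$ and $\bra{\varphi}$ factors in each term equals the power of $b$ it carries.

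The key step is the Haar average over $\ket{\varphi}$ on the unit sphere of $H_\perp$. I will use phase invariance: the map $\ket{\varphi}\mapsto e^{i\theta}\ket{\varphi}$ preserves the measure. Every term's coefficient is a polynomial in the entries of $\ket{\varphi}$ and $\bra{\varphi}$. It picks up the factor $e^{i\theta(p-q)}$, where $p$ is the number of kets $\ket{\varphi}$ and $q$ the number of bras $\bra{\varphi}$. So only terms with $p=q$ survive averaging. In particular every term with an odd power of $b$ vanishes, since odd $\beta$ forces $p\ne q$.

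The surviving terms are those with $(\alpha,\beta)\in\{(4,0),(2,2),(0,4)\}$. These carry the factors $(1-\eps)^2$, $\eps(1-\eps)$ and $\eps^2$, all polynomials in $\eps$ of degree at most $2$. Their coefficients are finite Haar averages of bounded continuous functions of $\varphi$, hence real constants depending only on $\Phi$. Realness holds because the fidelity itself is real: Hermiticity-preserving $\Phi$ makes the averaged integrand real.

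Collecting coefficients gives $F_\eps(\Phi)=A+\eps B+\eps^2C$. Setting $\eps=0$ kills every term except the $a^4$ term, so $A(\Phi)=\bra{0}\Phi(\proj{0})\ket{0}$. The only point needing care is the parity argument that removes the odd powers of $\sqrt{\eps}$, and phase invariance of the Haar measure on $H_\perp$ handles it directly.
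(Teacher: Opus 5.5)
Your proposal is correct and follows the paper's proof essentially line for line: you expand $\proj{\psi_\eps(\varphi)}$ into its four blocks, use bilinearity of $(X,Y)\mapsto\Tr(X\Phi(Y))$, discard every term with unequal numbers of $\ket{\varphi}$ and $\bra{\varphi}$ by invariance under $\ket{\varphi}\mapsto e^{i\theta}\ket{\varphi}$, and read off $A(\Phi)$ at $\eps=0$. One small looseness: being a Haar average does not by itself make a coefficient real, but your appeal to the realness of the fidelity repairs this at once, since a polynomial that is real-valued on $[0,1]$ has real coefficients (a point the paper does not spell out either).
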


\begin{proof}
Write
\[
A_0:=\proj{0},
\qquad
A_1:=\proj{\varphi},
\qquad
C_+:=\ketbra{0}{\varphi},
\qquad
C_-:=\ketbra{\varphi}{0}.
\]
Then
\[
\proj{\psi_\eps(\varphi)}=(1-\eps)A_0+\eps A_1+\sqrt{\eps(1-\eps)}\,(C_++C_-).
\]
Because $(X,Y)\mapsto \Tr(\Phi(X)Y)$ is bilinear, the quantity
\[
\Tr\bigl(\Phi(\proj{\psi_\eps(\varphi)})\proj{\psi_\eps(\varphi)}\bigr)
\]
is a finite linear combination of monomials in $(1-\eps)$, $\eps$, and $\sqrt{\eps(1-\eps)}$. Any term containing unequal numbers of $\ket{\varphi}$ and $\bra{\varphi}$ picks up a nontrivial phase under $\ket{\varphi}\mapsto e^{i\theta}\ket{\varphi}$ and hence has zero Haar average. The surviving terms have equal numbers of $\ket{\varphi}$ and $\bra{\varphi}$, so they contribute only powers $\eps^0,\eps^1,\eps^2$. At $\eps=0$ the state is $\ket{0}$, giving the formula for $A(\Phi)$.
\end{proof}

\subsection{Asymptotic separation for the conventional architecture when $(n,k)=(2,1)$}\label{subsec:conventional-special-case}

We first analyze the conventional architecture in the minimal nontrivial case. Here
\[
\ket{\psi}=\sqrt{1-\eps}\ket{00}+\sqrt{\eps}\ket{\varphi},
\qquad
\ket{\varphi}\in\Span\{\ket{01},\ket{10},\ket{11}\},
\qquad
\norm{\varphi}=1,
\]
and $\ket{\varphi}$ is Haar-random in the three-dimensional orthogonal complement of $\ket{00}$.

Let $U,V\in \mathrm{U}(4)$ be the encoder and decoder unitaries of a conventional $(2,1,0,1)$-QAE. Define the decoder isometry
\[
W:=V(I\otimes \ket{0}):H_2\to H_4,
\qquad
W^\dagger W=I_2,
\]
and for $b\in\{0,1\}$ set
\[
\widetilde K_b:=(I\otimes \bra{b})U:H_4\to H_2,
\qquad
K_b:=W\widetilde K_b:H_4\to H_4.
\]
Then
\begin{equation}\label{eq:PhiUV}
\Phi_{U,V}(\rho)=\sum_{b=0}^1 K_b\rho K_b^\dagger.
\end{equation}
Because $U$ is unitary,
\[
\widetilde K_b\widetilde K_{b'}^\dagger=(I\otimes \bra b)UU^\dagger(I\otimes \ket{b'})=\delta_{bb'}I_2,
\]
so
\begin{equation}\label{eq:Kb-left-support}
K_bK_{b'}^\dagger=\delta_{bb'}P,
\qquad
P:=WW^\dagger.
\end{equation}
Thus $P$ is the rank-$2$ projector onto $\ran(W)$. If we set
\[
Q_b:=K_b^\dagger K_b,
\]
then each $Q_b$ is a rank-$2$ projection, $Q_0Q_1=0$, and trace preservation implies
\begin{equation}\label{eq:Q-projections}
Q_0+Q_1=I_4.
\end{equation}
For a pure input $\ket{\psi}$,
\begin{equation}\label{eq:conventional-fidelity}
\bra{\psi}\Phi_{U,V}(\proj{\psi})\ket{\psi}
=
\sum_{b=0}^1 \abs{\bra{\psi}K_b\ket{\psi}}^2.
\end{equation}

For each fixed $(U,V)$, define
\[
f_{U,V}(\eps):=\mathbb{E}_{\varphi}\bigl[\bra{\psi}\Phi_{U,V}(\proj{\psi})\ket{\psi}\bigr].
\]
The Kraus pair $(K_0,K_1)$ may be mixed by an arbitrary unitary $S\in \mathrm{U}(2)$ without changing the channel, and under this mixing the vector
\[
v(\psi):=(\bra{\psi}K_0\ket{\psi},\bra{\psi}K_1\ket{\psi})\in\C^2
\]
transforms as $v(\psi)\mapsto Sv(\psi)$, leaving Eq.~\eqref{eq:conventional-fidelity} unchanged. We may therefore choose the Kraus gauge so that
\begin{equation}\label{eq:gauge-choice}
\bra{00}K_0\ket{00}=\sqrt{f_0}\ge 0,
\qquad
\bra{00}K_1\ket{00}=0,
\qquad
f_0:=\sum_{b=0}^1 \abs{\bra{00}K_b\ket{00}}^2\in[0,1].
\end{equation}
If $f_0=0$, then the constant term of $f_{U,V}(\eps)$ is $0$, so the pair is clearly suboptimal. Henceforth assume $f_0>0$.

For $b\in\{0,1\}$ define
\[
f_b^{(1)}(\varphi):=\bra{\psi}K_b\ket{\psi}.
\]
Expanding,
\[
f_b^{(1)}(\varphi)
=
(1-\eps)\bra{00}K_b\ket{00}
+\sqrt{\eps(1-\eps)}\bigl(\bra{00}K_b\ket{\varphi}+\bra{\varphi}K_b\ket{00}\bigr)
+\eps\bra{\varphi}K_b\ket{\varphi}.
\]
Since $\sqrt{\eps(1-\eps)}=\sqrt{\eps}+o(\eps)$, we may write
\[
f_b^{(1)}(\varphi)
=
f_b^{(2)}+\sqrt{\eps}\,f_b^{(3)}(\varphi)+\eps f_b^{(4)}(\varphi)+o(\eps),
\]
where
\[
f_b^{(2)}:=\bra{00}K_b\ket{00},
\qquad
f_b^{(3)}(\varphi):=\bra{00}K_b\ket{\varphi}+\bra{\varphi}K_b\ket{00},
\qquad
f_b^{(4)}(\varphi):=\bra{\varphi}K_b\ket{\varphi}-f_b^{(2)}.
\]
Hence
\begin{align*}
\abs{f_b^{(1)}(\varphi)}^2
&=
\abs{f_b^{(2)}}^2
+2\sqrt{\eps}\,\Re\!\bigl((f_b^{(2)})^*f_b^{(3)}(\varphi)\bigr)\\
&\qquad
+\eps\Bigl(\abs{f_b^{(3)}(\varphi)}^2+2\Re\!\bigl((f_b^{(2)})^*f_b^{(4)}(\varphi)\bigr)\Bigr)
+o(\eps).
\end{align*}
Taking the Haar expectation removes the $\sqrt{\eps}$ term because $\mathbb{E}[\ket{\varphi}]=0$:
\[
\mathbb{E}\abs{f_b^{(1)}(\varphi)}^2
=
\abs{f_b^{(2)}}^2
+\eps\left(\mathbb{E}\abs{f_b^{(3)}(\varphi)}^2+2\Re\!\bigl((f_b^{(2)})^*\mathbb{E}[f_b^{(4)}(\varphi)]\bigr)\right)+o(\eps).
\]
Summing over $b$ and using Eq.~\eqref{eq:gauge-choice},
\begin{equation}\label{eq:conventional-expansion-start}
f_{U,V}(\eps)
=
f_0
+\eps\sum_{b=0}^1\left(\mathbb{E}\abs{f_b^{(3)}(\varphi)}^2+2\Re\!\bigl((f_b^{(2)})^*\mathbb{E}[f_b^{(4)}(\varphi)]\bigr)\right)
+o(\eps).
\end{equation}

Let
\[
\Pi:=I_4-\proj{00},
\]
the projector onto $\Span\{\ket{01},\ket{10},\ket{11}\}$. For Haar-random $\ket{\varphi}$ in this three-dimensional subspace,
\begin{equation}\label{eq:three-dim-moments}
\mathbb{E}[\proj{\varphi}]=\frac{\Pi}{3},
\qquad
\mathbb{E}[\bra{\varphi}A\ket{\varphi}]=\frac{\Tr(\Pi A)}{3}.
\end{equation}
Moreover,
\[
\mathbb{E}\bigl[\braket{u}{\varphi}\braket{v}{\varphi}\bigr]=0
\qquad
\text{for all }u,v,
\]
by invariance under the global phase $\ket{\varphi}\mapsto e^{i\theta}\ket{\varphi}$. Therefore
\begin{equation}\label{eq:f3-moment}
\mathbb{E}\abs{f_b^{(3)}(\varphi)}^2
=
\frac{\norm{\Pi K_b^\dagger\ket{00}}^2+\norm{\Pi K_b\ket{00}}^2}{3}.
\end{equation}
Set
\[
p_0:=\bra{00}P\ket{00},
\qquad
q_b:=\bra{00}Q_b\ket{00},
\qquad
q_0+q_1=1.
\]
Then by Eqs.~\eqref{eq:Kb-left-support} and \eqref{eq:Q-projections},
\[
\norm{K_b^\dagger\ket{00}}^2=\bra{00}P\ket{00}=p_0,
\qquad
\norm{K_b\ket{00}}^2=\bra{00}Q_b\ket{00}=q_b.
\]
Since $\norm{\Pi v}^2=\norm{v}^2-\abs{\braket{00}{v}}^2$,
\[
\norm{\Pi K_b^\dagger\ket{00}}^2=p_0-\abs{f_b^{(2)}}^2,
\qquad
\norm{\Pi K_b\ket{00}}^2=q_b-\abs{f_b^{(2)}}^2.
\]
Substituting into Eq.~\eqref{eq:f3-moment} and summing over $b$ gives
\begin{equation}\label{eq:f3-sum}
\sum_{b=0}^1 \mathbb{E}\abs{f_b^{(3)}(\varphi)}^2
=
\frac{2p_0+1-2f_0}{3}.
\end{equation}
Next define
\[
\tau_b:=\Tr(\Pi K_b\Pi).
\]
By Eq.~\eqref{eq:three-dim-moments},
\[
\mathbb{E}[f_b^{(4)}(\varphi)]
=
\frac{\tau_b}{3}-f_b^{(2)}.
\]
Hence
\begin{align}\label{eq:f4-sum}
\sum_{b=0}^1 2\Re\!\bigl((f_b^{(2)})^*\mathbb{E}[f_b^{(4)}(\varphi)]\bigr)
&=
\frac23\Re\!\left(\sum_{b=0}^1 (f_b^{(2)})^*\tau_b\right)-2f_0 \notag\\
&=
\frac23\sqrt{f_0}\,\Re(\tau_0)-2f_0,
\end{align}
where the last equality uses the gauge choice in Eq.~\eqref{eq:gauge-choice}.

The set $\mathrm{U}(4)\times \mathrm{U}(4)$ is compact, and by Lemmas~\ref{lem:compact-first-order} and \ref{lem:quadratic-eps} the constant term of $f_{U,V}(\eps)$ is maximized at pairs with $f_0=1$ (for instance $U=V=I_4$). Therefore
\begin{equation}\label{eq:conventional-limsup}
\limsup_{\eps\to 0^+}\frac{\max_{U,V}f_{U,V}(\eps)-1}{\eps}
\le
\sup_{(U,V):f_0=1}
\left(
\frac{2p_0+1-2f_0}{3}
+\frac23\sqrt{f_0}\,\Re(\tau_0)
-2f_0
\right).
\end{equation}
If $f_0=1$, then $K_0\ket{00}=\ket{00}$ and $K_1\ket{00}=0$. Because $K_0$ has rank $2$, there exists a unit vector $\ket{u}\in\Span\{\ket{01},\ket{10},\ket{11}\}$ such that
\[
\ran(K_0)=\Span\{\ket{00},\ket{u}\}.
\]
Consequently $\Pi K_0 \Pi$ has rank at most $1$. Also $K_0^\dagger K_0=Q_0\le I_4$, so $\norm{K_0}_{\infty}\le 1$, and hence
\[
\norm{\Pi K_0\Pi}_{\infty}\le 1,
\qquad
\norm{\Pi K_0\Pi}_\mathrm{F}\le 1.
\]
Therefore
\begin{equation}\label{eq:tau-bound}
\abs{\tau_0}
=
\abs{\Tr(\Pi K_0\Pi)}
\le
\sqrt{\rank(\Pi K_0\Pi)}\;\norm{\Pi K_0\Pi}_\mathrm{F}
\le 1.
\end{equation}
Combining $p_0\le 1$ with Eq.~\eqref{eq:tau-bound}, the expression on the right-hand side of Eq.~\eqref{eq:conventional-limsup} is at most
\[
\frac{2p_0+2\Re(\tau_0)-7}{3}\le -1.
\]
Thus
\begin{equation}\label{eq:conventional-bound-final}
\liminf_{\eps\to 0^+}\frac{1-\max_{U,V}f_{U,V}(\eps)}{\eps}
\ge 1.
\end{equation}
The identity choice $U=V=I_4$ attains equality in Eq.~\eqref{eq:conventional-bound-final}.

For unrestricted CPTP encoder--decoder pairs, consider the projector
\[
\Pi':=\proj{00}+\proj{01}
\]
and the channel
\[
\Phi(\rho)=\Pi'\rho\Pi'+\Tr\bigl((I-\Pi')\rho\bigr)\proj{00}.
\]
It admits an encoder with Kraus operators
\[
E_0=\ketbra{0}{00}+\ketbra{1}{01},
\qquad
E_1=\ketbra{0}{10},
\qquad
E_2=\ketbra{0}{11},
\]
and a decoder with a single Kraus operator
\[
D_0=\ketbra{00}{0}+\ketbra{01}{1}.
\]
Write
\[
\ket{\varphi}=\varphi_1\ket{01}+\sqrt{1-\abs{\varphi_1}^2}\,\ket{\chi},
\qquad
\ket{\chi}\in\Span\{\ket{10},\ket{11}\},
\qquad
\norm{\chi}=1.
\]
Then
\[
\Pi'\ket{\psi}=\sqrt{1-\eps}\ket{00}+\sqrt{\eps}\,\varphi_1\ket{01},
\qquad
(I-\Pi')\ket{\psi}=\sqrt{\eps(1-\abs{\varphi_1}^2)}\;\ket{\chi},
\]
and therefore
\begin{align*}
\bra{\psi}\Phi(\proj{\psi})\ket{\psi}
&=
\norm{\Pi'\ket{\psi}}^4 + (1-\eps)\norm{(I-\Pi')\ket{\psi}}^2 \\
&=
1-\eps\bigl(1-\abs{\varphi_1}^2\bigr)+o(\eps).
\end{align*}
Since $\mathbb{E}\abs{\varphi_1}^2=1/3$,
\begin{equation}\label{eq:unrestricted-2over3}
\mathbb{E}\bigl[\bra{\psi}\Phi(\proj{\psi})\ket{\psi}\bigr]
=
1-\frac23\eps+o(\eps),
\end{equation}
which is strictly better than the conventional-architecture coefficient in Eq.~\eqref{eq:conventional-bound-final}.

\subsection{A universal first-order lower bound for all factorized channels}\label{subsec:general-first-order}

In the remainder of this appendix, let $\Phi=\Dec\circ\Enc$ be an arbitrary factorized channel with
\[
\Enc:L(\C^d)\to L(\C^m),
\qquad
\Dec:L(\C^m)\to L(\C^d),
\qquad
\Phi(\proj{0})=\proj{0}.
\]
Choose Kraus operators $\{E_\ell\}_\ell$ for $\Enc$ and $\{D_j\}_j$ for $\Dec$, and relabel the Kraus family $\{D_jE_\ell\}_{j,\ell}$ of $\Phi$ as $\{A_i\}_{i=1}^r$. Each $A_i$ has rank at most $m$.

\begin{lemma}[Block form]\label{lem:block-form}
For every $i$ there exists $\alpha_i\in\C$ such that
\[
A_i\ket{0}=\alpha_i\ket{0}.
\]
With respect to the decomposition $\C^d=\Span\{\ket{0}\}\oplus H_\perp$, we may write
\[
A_i=
\begin{pmatrix}
\alpha_i & v_i^\dagger \\
0 & B_i
\end{pmatrix},
\qquad
v_i\in H_\perp,
\qquad
B_i:H_\perp\to H_\perp,
\]
and the trace preserving condition becomes
\begin{equation}\label{eq:block-identities}
\sum_{i=1}^r \abs{\alpha_i}^2=1,
\qquad
\sum_{i=1}^r \alpha_i v_i=0,
\qquad
\sum_{i=1}^r\bigl(v_i v_i^\dagger + B_i^\dagger B_i\bigr)=I_{H_\perp}.
\end{equation}
In particular,
\begin{equation}\label{eq:trace-block}
\sum_{i=1}^r \norm{v_i}^2 + \sum_{i=1}^r \norm{B_i}_\mathrm{F}^2 = d-1.
\end{equation}
\end{lemma}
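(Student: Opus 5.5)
The plan is to derive everything from the fixed-point condition $\Phi(\proj{0})=\proj{0}$ and from the Kraus normalization $\sum_i A_i^\dagger A_i=I_d$, which holds because $\{D_jE_\ell\}$ is a Kraus family of the CPTP map $\Phi$.

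\textbf{Eigenvector property.} We have
\[
\sum_i A_i\proj{0}A_i^\dagger=\proj{0}.
\]
Each summand $A_i\ket{0}\bra{0}A_i^\dagger$ is positive semidefinite and is dominated by the rank-one projector $\proj{0}$. So for any unit $\ket{w}\perp\ket{0}$, sandwiching the identity with $\bra{w}\cdot\ket{w}$ gives $\sum_i\abs{\braket{w}{A_i 0}}^2=0$. Hence every $A_i\ket{0}$ lies in $\Span\{\ket{0}\}$, i.e. $A_i\ket{0}=\alpha_i\ket{0}$ with $\alpha_i:=\bra{0}A_i\ket{0}$.

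\textbf{Block form.} The first column of $A_i$ in the decomposition $\C^d=\Span\{\ket{0}\}\oplus H_\perp$ is $(\alpha_i,0)^T$. Define the remaining blocks by
\[
v_i:=\Pi A_i^\dagger\ket{0}\in H_\perp,
\qquad
B_i:=\Pi A_i\Pi|_{H_\perp},
\]
where $\Pi$ is the projector onto $H_\perp$. Then $\bra{0}A_i\Pi=v_i^\dagger$, which gives the stated upper-triangular form.

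\textbf{Trace preservation.} Next we expand $A_i^\dagger A_i$ blockwise:
\[
A_i^\dagger A_i=\begin{pmatrix}\abs{\alpha_i}^2 & \bar\alpha_i v_i^\dagger\\ \alpha_i v_i & v_iv_i^\dagger+B_i^\dagger B_i\end{pmatrix}.
\]
Summing over $i$ and comparing with $I=\mathrm{diag}(1,I_{H_\perp})$ block by block yields the three identities of Eq.~\eqref{eq:block-identities}. The off-diagonal block gives $\sum_i\alpha_iv_i=0$; its adjoint is the conjugate of the same condition.

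\textbf{Trace identity.} Taking the trace of the third identity, and using $\Tr(v_iv_i^\dagger)=\norm{v_i}^2$ and $\Tr(B_i^\dagger B_i)=\norm{B_i}_\mathrm{F}^2$, gives Eq.~\eqref{eq:trace-block}, since $\dim H_\perp=d-1$.

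The only step that needs any care is the positivity argument forcing each $A_i\ket{0}$ to be parallel to $\ket{0}$. The rest is routine block bookkeeping. Nothing here uses the rank bound $\rank A_i\le m$; it is recorded only for later use.
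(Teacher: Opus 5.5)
Your proposal is correct and follows the paper's proof essentially step for step. Positivity of the summands in $\sum_i A_i\proj{0}A_i^\dagger=\proj{0}$ forces $A_i\ket{0}\in\Span\{\ket{0}\}$ (your sandwiching with $\ket{w}\perp\ket{0}$ just makes the paper's support argument explicit), blockwise comparison of $\sum_i A_i^\dagger A_i=I_d$ gives Eq.~\eqref{eq:block-identities}, and taking the trace gives Eq.~\eqref{eq:trace-block}.
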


\begin{proof}
Because
\[
\proj{0}=\Phi(\proj{0})=\sum_{i=1}^r A_i\proj{0}A_i^\dagger,
\]
each positive semidefinite summand on the right has support contained in $\Span\{\ket{0}\}$. Hence $A_i\ket{0}\in \Span\{\ket{0}\}$, proving the first claim. Writing the block decomposition and comparing the matrix blocks in $\sum_i A_i^\dagger A_i=I_d$ gives Eq.~\eqref{eq:block-identities}. Taking traces yields Eq.~\eqref{eq:trace-block}.
\end{proof}

\begin{lemma}\label{lem:first-order-expansion}
Define
\[
S:=\sum_{i=1}^r \norm{v_i}^2,
\qquad
T:=\sum_{i=1}^r \norm{B_i}_\mathrm{F}^2.
\]
Then
\begin{equation}\label{eq:first-order-expansion}
F_\eps(\Phi)
=
1-2\eps+
\frac{\eps}{d-1}
\left(
S+2\Re\sum_{i=1}^r \alpha_i^*\Tr(B_i)
\right)
+o(\eps).
\end{equation}
Equivalently,
\begin{equation}\label{eq:cPhi}
c(\Phi):=\lim_{\eps\to 0^+}\frac{1-F_\eps(\Phi)}{\eps}
=
2-
\frac{1}{d-1}
\left(
S+2\Re\sum_{i=1}^r \alpha_i^*\Tr(B_i)
\right).
\end{equation}
\end{lemma}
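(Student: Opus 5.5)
The plan is to compute $F_\eps(\Phi)=\sum_i \mathbb{E}_\varphi\abs{\bra{\psi_\eps(\varphi)}A_i\ket{\psi_\eps(\varphi)}}^2$ directly from the Kraus family $\{A_i\}$ in the block form of Lemma~\ref{lem:block-form}. We use the standard identity $\bra{\psi}\Phi(\proj{\psi})\ket{\psi}=\sum_i\abs{\bra{\psi}A_i\ket{\psi}}^2$, valid for any Kraus decomposition.

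First, expand each amplitude. Since $A_i\ket{0}=\alpha_i\ket{0}$ and $\ket{\varphi}\perp\ket{0}$, we have
$\bra{0}A_i\ket{0}=\alpha_i$, $\bra{\varphi}A_i\ket{0}=0$, $\bra{0}A_i\ket{\varphi}=v_i^\dagger\varphi$ and $\bra{\varphi}A_i\ket{\varphi}=\bra{\varphi}B_i\ket{\varphi}$. Hence
\[
\bra{\psi_\eps(\varphi)}A_i\ket{\psi_\eps(\varphi)}
=(1-\eps)\alpha_i+\sqrt{\eps(1-\eps)}\,v_i^\dagger\varphi+\eps\bra{\varphi}B_i\ket{\varphi}.
\]

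Second, square the modulus and average over Haar-random $\ket{\varphi}$ in $H_\perp$. We use the phase invariance $\ket{\varphi}\mapsto e^{i\theta}\ket{\varphi}$, as in the proof of Lemma~\ref{lem:quadratic-eps}. Every cross term containing exactly one factor $v_i^\dagger\varphi$ (or its conjugate) has unequal numbers of $\ket{\varphi}$ and $\bra{\varphi}$, so it averages to zero. This covers both the $\sqrt{\eps}$ term and the $\eps^{3/2}$ term paired with $\bra{\varphi}B_i\ket{\varphi}$. What survives is
\[
(1-\eps)^2\abs{\alpha_i}^2+\eps(1-\eps)\,\mathbb{E}\abs{v_i^\dagger\varphi}^2+2\eps(1-\eps)\Re\bigl(\alpha_i^*\,\mathbb{E}\bra{\varphi}B_i\ket{\varphi}\bigr)+\eps^2\,\mathbb{E}\abs{\bra{\varphi}B_i\ket{\varphi}}^2 .
\]
The first moment $\mathbb{E}[\proj{\varphi}]=I_{H_\perp}/(d-1)$ gives $\mathbb{E}\abs{v_i^\dagger\varphi}^2=\norm{v_i}^2/(d-1)$ and $\mathbb{E}\bra{\varphi}B_i\ket{\varphi}=\Tr(B_i)/(d-1)$. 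This is the same computation as Eq.~\eqref{eq:three-dim-moments}, with $3$ replaced by $d-1$.

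Third, sum over $i$ and use $\sum_i\abs{\alpha_i}^2=1$ from Eq.~\eqref{eq:block-identities}. The constant term is $1$, and $(1-\eps)^2=1-2\eps+\eps^2$ contributes the $-2\eps$. The remaining first-order terms are $\eps S/(d-1)$ and $2\eps\,\Re\sum_i\alpha_i^*\Tr(B_i)/(d-1)$. All other contributions are $O(\eps^2)$, with bounded coefficients because $\norm{A_i}_\infty\le1$ and the sums are finite. This yields Eq.~\eqref{eq:first-order-expansion}.

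Finally, Eq.~\eqref{eq:cPhi} follows by rearranging and taking the limit $\eps\to0^+$. By Lemma~\ref{lem:quadratic-eps}, $F_\eps(\Phi)$ is in fact an exact quadratic polynomial in $\eps$, so the limit exists and equals minus the linear coefficient. There is no real obstacle here. The only care needed is the bookkeeping showing that the half-integer powers of $\eps$ vanish exactly under the phase average, and that $\bra{\varphi}A_i\ket{0}=0$ removes the $\bra{\varphi}A_i\ket{0}$ contribution that appeared in the conventional analysis.
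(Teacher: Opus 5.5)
Your proposal is correct and follows essentially the same route as the paper. Both arguments expand $\bra{\psi_\eps(\varphi)}A_i\ket{\psi_\eps(\varphi)}=(1-\eps)\alpha_i+\sqrt{\eps(1-\eps)}\,\braket{v_i}{\varphi}+\eps\bra{\varphi}B_i\ket{\varphi}$ via the block form, kill the odd-phase cross terms with the symmetry $\ket{\varphi}\mapsto e^{i\theta}\ket{\varphi}$, and evaluate the survivors using $\mathbb{E}[\proj{\varphi}]=I_{H_\perp}/(d-1)$ and $\sum_i\abs{\alpha_i}^2=1$; your remark that the averaged expression is exactly quadratic in $\eps$ (so the limit defining $c(\Phi)$ obviously exists) is a slightly cleaner handling of the remainder than the paper's $o(\eps)$ bookkeeping.
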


\begin{proof}
Fix a unit vector $\ket{\varphi}\in H_\perp$ and set
\[
a:=\sqrt{1-\eps},
\qquad
b:=\sqrt{\eps},
\qquad
x_i(\varphi):=\braket{v_i}{\varphi},
\qquad
y_i(\varphi):=\bra{\varphi}B_i\ket{\varphi}.
\]
Then
\[
\bra{\psi_\eps(\varphi)}A_i\ket{\psi_\eps(\varphi)}
=
a^2\alpha_i+ab\,x_i(\varphi)+b^2 y_i(\varphi).
\]
Since
\[
\bra{\psi}\Phi(\proj{\psi})\ket{\psi}=
\sum_{i=1}^r \abs{\bra{\psi}A_i\ket{\psi}}^2,
\]
we obtain
\begin{align*}
F_\eps(\Phi)
&=
\sum_{i=1}^r \mathbb{E}_{\varphi}\Bigl[\abs{a^2\alpha_i+ab\,x_i(\varphi)+b^2 y_i(\varphi)}^2\Bigr] \\
&=
\sum_{i=1}^r\left(a^4\abs{\alpha_i}^2 + a^2b^2\,\mathbb{E}\abs{x_i(\varphi)}^2 + 2a^2b^2\Re\bigl(\alpha_i^*\mathbb{E}[y_i(\varphi)]\bigr)\right)+o(\eps).
\end{align*}
The terms of order $a^3b$ and $ab^3$ vanish after Haar averaging: $\mathbb{E}[x_i(\varphi)]=0$ by symmetry, and $\mathbb{E}[x_i(\varphi)^*y_i(\varphi)]=0$ because $x_i(\varphi)^*y_i(\varphi)$ acquires a nontrivial phase under $\ket{\varphi}\mapsto e^{i\theta}\ket{\varphi}$. By the first moments of Haar measure on the unit sphere of $H_\perp$,
\[
\mathbb{E}[x_i(\varphi)]=0,
\qquad
\mathbb{E}\abs{x_i(\varphi)}^2=\frac{\norm{v_i}^2}{d-1},
\qquad
\mathbb{E}[y_i(\varphi)]=\frac{\Tr(B_i)}{d-1}.
\]
Using $a^4=(1-\eps)^2=1-2\eps+O(\eps^2)$, $a^2b^2=\eps(1-\eps)=\eps+O(\eps^2)$, and $\sum_i \abs{\alpha_i}^2=1$, we obtain Eq.~\eqref{eq:first-order-expansion}. Equation~\eqref{eq:cPhi} is immediate.
\end{proof}

\begin{lemma}[Universal first-order lower bound]\label{lem:universal-first-order}
For every factorized channel $\Phi=\Dec\circ\Enc$ as above,
\begin{equation}\label{eq:universal-first-order}
c(\Phi)\ge \frac{d-m}{d-1}.
\end{equation}
\end{lemma}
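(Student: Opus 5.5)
The plan is to reduce the claim to an upper bound on the bracket in Eq.~\eqref{eq:cPhi} and then obtain that bound from Cauchy--Schwarz plus a rank count.

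\textbf{Reformulation.} By Lemma~\ref{lem:first-order-expansion}, the inequality $c(\Phi)\ge (d-m)/(d-1)$ is equivalent to
\[
S+2\Re\sum_{i=1}^r \alpha_i^*\Tr(B_i)\le 2(d-1)-(d-m)=d+m-2.
\]
The only inputs are the block form of Lemma~\ref{lem:block-form}, the trace identity $S+T=d-1$ from Eq.~\eqref{eq:trace-block}, and the rank bound $\rank(A_i)\le m$. That rank bound holds because each $A_i=D_jE_\ell$ factors through $\C^m$.

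\textbf{Rank refinement (the key step).} A naive estimate $\abs{\Tr B_i}^2\le m\norm{B_i}_\mathrm{F}^2$ is off by one: it yields only $d+m-1$. The fix is to sharpen the rank bound whenever $\alpha_i\neq 0$. In that case
\[
A_i=\begin{pmatrix}\alpha_i & v_i^\dagger\\ 0 & B_i\end{pmatrix}
\]
has $\rank(A_i)=1+\rank(B_i)$. To see this, note that the lower rows span a space of dimension $\rank(B_i)$, and every vector in that span has vanishing first coordinate. The first row has first coordinate $\alpha_i\neq0$, so it is independent of them. Hence $\rank(B_i)\le m-1$ whenever $\alpha_i\neq 0$, and therefore
\[
\abs{\Tr B_i}^2\le (m-1)\norm{B_i}_\mathrm{F}^2
\]
for those indices.

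\textbf{Cauchy--Schwarz.} Terms with $\alpha_i=0$ do not contribute to the cross sum. Using $\sum_i\abs{\alpha_i}^2=1$ from Eq.~\eqref{eq:block-identities},
\[
\Bigl|\sum_i \alpha_i^*\Tr B_i\Bigr|\le\Bigl(\sum_{i:\alpha_i\neq0}\abs{\Tr B_i}^2\Bigr)^{1/2}\le \sqrt{(m-1)T}.
\]
The quantity to bound is therefore at most
\[
(d-1-T)+2\sqrt{(m-1)T}.
\]
As a function of $T\ge 0$ this is concave and maximized at $T=m-1$, where it equals $d-1+(m-1)=d+m-2$. This is exactly the required bound, and it gives Eq.~\eqref{eq:universal-first-order}.

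I expect the main obstacle to be noticing the off-by-one rank refinement. Everything else is a routine application of the earlier lemmas.
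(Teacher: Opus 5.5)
Your proposal is correct and follows essentially the same route as the paper: the refinement $\rank(B_i)\le m-1$ whenever $\alpha_i\neq 0$, Cauchy--Schwarz against $\sum_i\abs{\alpha_i}^2=1$, then maximizing $d-1-T+2\sqrt{(m-1)T}$ at $T=m-1$. Your row-rank justification of the refinement is a more explicit version of the paper's observation that $\ket{0}\in\im(A_i)$ while $\rank(A_i)\le m$.
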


\begin{proof}
Let
\[
I_+:=\{i:\alpha_i\neq 0\}.
\]
If $i\in I_+$, then $\ket{0}\in \im(A_i)$ and $\rank(A_i)\le m$, so $\rank(B_i)\le m-1$. Hence
\[
\abs{\Tr(B_i)}^2\le \rank(B_i)\,\norm{B_i}_\mathrm{F}^2\le (m-1)\norm{B_i}_\mathrm{F}^2
\qquad
(i\in I_+).
\]
Therefore,
\begin{align*}
\abs{\sum_{i=1}^r \alpha_i^*\Tr(B_i)}
&\le \sum_{i\in I_+} \abs{\alpha_i}\,\abs{\Tr(B_i)} \\
&\le \left(\sum_{i\in I_+}\abs{\alpha_i}^2\right)^{1/2}
\left(\sum_{i\in I_+}\abs{\Tr(B_i)}^2\right)^{1/2} \\
&\le \sqrt{(m-1)\sum_{i\in I_+}\norm{B_i}_\mathrm{F}^2}
\le \sqrt{(m-1)T}.
\end{align*}
Using $S+T=d-1$ from Eq.~\eqref{eq:trace-block},
\[
S+2\Re\sum_{i=1}^r \alpha_i^*\Tr(B_i)
\le d-1-T+2\sqrt{(m-1)T}.
\]
The right-hand side is maximized at $T=m-1$, where it equals $d+m-2$. Substituting into Eq.~\eqref{eq:cPhi} gives
\[
c(\Phi)
\ge
2-\frac{d+m-2}{d-1}
=
\frac{d-m}{d-1},
\]
which proves Eq.~\eqref{eq:universal-first-order}.
\end{proof}

Define the optimal first-order coefficient
\begin{equation}\label{eq:cstar}
c_*:=\frac{d-m}{d-1}.
\end{equation}
The explicit reset construction below attains this bound.

\subsection{Classification of first-order-optimal channels}\label{subsec:classification}

\begin{proposition}\label{prop:classification-reset}
Let $\Phi=\Dec\circ\Enc$ factor through $\C^m$ and satisfy $\Phi(\proj{0})=\proj{0}$. If $c(\Phi)=c_*$, then
\begin{equation}\label{eq:reset-channel}
\Phi(\rho)=\Phi_P(\rho):=P\rho P+\Tr\bigl((I_d-P)\rho\bigr)\proj{0}
\end{equation}
for some rank-$m$ orthogonal projection $P$ with $\ket{0}\in\im(P)$.
\end{proposition}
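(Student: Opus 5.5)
Our plan is to analyze the equality cases in the proof of Lemma~\ref{lem:universal-first-order}. Write the Kraus family $\{A_i\}$ of $\Phi$ in the block form of Lemma~\ref{lem:block-form}. If $c(\Phi)=c_*$, every inequality in the chain
\[
S+2\Re\sum_i\alpha_i^*\Tr(B_i)\le d-1-T+2\sqrt{(m-1)T}\le d+m-2
\]
must be tight.

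From these equalities we will extract the following facts.
\begin{enumerate}[label=(\roman*),leftmargin=2.2em]
\item $T=m-1$, and hence $S=d-m$.
\item $B_i=0$ for $i\notin I_+$.
\item Each $B_i$ with $i\in I_+$ has rank exactly $m-1$, and equality holds in $\abs{\Tr B_i}^2\le\rank(B_i)\norm{B_i}_\mathrm{F}^2$. Thus $B_i$ is a scalar multiple of a rank-$(m-1)$ projection (normal with equal nonzero eigenvalues). More carefully, equality in $\abs{\Tr X}\le\sqrt{r}\norm{X}_\mathrm{F}$ forces all singular values equal and $\Tr X=\sum$ of singular values up to phase, which forces $B_i=\beta_i Q_i$ with $Q_i$ a projection of rank $m-1$.
\item Cauchy--Schwarz is tight, so $\Tr(B_i)$ is proportional to $\alpha_i$ with a positive common phase: $\beta_i=\lambda\alpha_i$ for some $\lambda>0$.
\end{enumerate}
From $\sum_{I_+}\abs{\beta_i}^2(m-1)=T=m-1$ and $\sum\abs{\alpha_i}^2=1$ we get $\lambda=1$, so $B_i=\alpha_i Q_i$.

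Next we show that all $Q_i$ coincide. Since $\rank A_i\le m$, $\ket0\in\im A_i$, and $\rank B_i=m-1$, the image of $A_i$ must equal $\Span\{\ket0\}\oplus\im Q_i$. The top row $v_i^\dagger$ must therefore not add rank; this constrains $v_i$ to lie in $\im Q_i$ up to a dependence we must track. We then use the trace-preservation identity $\sum_i(v_iv_i^\dagger+B_i^\dagger B_i)=I_{H_\perp}$, which now reads $\sum_iv_iv_i^\dagger+\sum_{I_+}\abs{\alpha_i}^2Q_i=I_{H_\perp}$. Here $\sum_{I_+}\abs{\alpha_i}^2Q_i$ is a convex combination of rank-$(m-1)$ projections, and the remainder $\sum v_iv_i^\dagger$ has trace exactly $d-m$. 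Combined with $\sum_i\alpha_iv_i=0$, this should force the convex combination to itself be a projection: it is $\le I$, and its eigenvalues must be $0$ or $1$ on a subspace the $v$-part cannot fill. Hence $Q_i=Q$ for all $i\in I_+$.

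This step is the main obstacle. Our first attempt is a rank/trace count: $\sum v_iv_i^\dagger$ must equal $I-\sum\abs{\alpha_i}^2Q_i\ge I-Q_{\mathrm{avg}}$. We would then argue that $v_i\perp\im Q$ for the relevant $i$ via the rank constraint on $A_i$ with $i\in I_+$ (a row vector with a component in $(\im Q_i)^\perp$ would raise the rank, since $A_i$ is upper triangular with diagonal blocks $\alpha_i$ and $\alpha_iQ_i$). If the direct argument stalls, the fallback is to use the factorization through $\C^m$ more strongly.

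Finally we set $P:=\proj0+Q$, of rank $m$, and conclude the identification. For $i\in I_+$, $v_i\perp\im Q$ together with $\sum_{I_+}\alpha_iv_i=0$ gives $\sum_{I_+}\abs{\alpha_i}^2\cdot$(channel on $P$-block) $=P\rho P$. The remaining terms, those with $\alpha_i=0$ and $B_i=0$, are $A_i=\ketbra{0}{v_i}$ with $v_i\in\im(I-P)$ and $\sum v_iv_i^\dagger=I-P$, producing $\Tr((I-P)\rho)\proj0$. The $I_+$ terms with $v_i\neq0$ must then vanish, or combine with $\ketbra{0}{\cdot}$ terms. To handle this cleanly we will compute the Choi matrix of $\Phi$ and check $\Phi=\Phi_P$ after a unitary regauging of the Kraus family, giving Eq.~\eqref{eq:reset-channel}.
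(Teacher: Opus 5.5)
There is a genuine gap, at the step you flag yourself as the main obstacle: showing that all the projections $Q_i$ coincide. Neither of your two arguments can work.

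\textbf{The rank argument says nothing about $v_i$.} For $i\in I_+$, the first row $(\alpha_i,\,v_i^\dagger)$ of $A_i$ has the nonzero entry $\alpha_i$ in the $\ket0$ slot. It is therefore linearly independent of the rows of $(0\ \ \alpha_iQ_i)$, whatever $v_i$ is. Equivalently, the column space is $\Span\{\ket0\}\oplus\im Q_i$ for every $v_i$. So $\rank A_i=m$ for all $v_i\in H_\perp$, and the bound $\rank A_i\le m$ does not force $v_i\perp\im Q_i$.

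\textbf{Trace preservation cannot force $N:=\sum_i\alpha_i^2Q_i$ to be a projection.} Since $F_\eps$ is linear in $\Phi$, $c(\Phi)$ is affine in $\Phi$. Take two distinct rank-$m$ projections containing $\ket0$, e.g.\ $d=4$, $m=2$, $P_1=\proj0+\proj1$, $P_2=\proj0+\proj2$, and form $\tfrac12(\Phi_{P_1}+\Phi_{P_2})$. Its Kraus family is $\tfrac1{\sqrt2}P_1$, $\tfrac1{\sqrt2}P_2$, and rank-one operators $\tfrac1{\sqrt2}\ketbra{0}{e}$. This mixture has all of the following properties:
\begin{itemize}
\item it fixes $\proj0$;
\item every Kraus operator has rank at most $m$;
\item it satisfies $\sum_i\alpha_iv_i=0$ and $\sum_i(v_iv_i^\dagger+B_i^\dagger B_i)=I_{H_\perp}$;
\item it has $S=d-m$, $T=m-1$, and $c=c_*$, so it saturates every inequality in the proof of Lemma~\ref{lem:universal-first-order}.
\end{itemize}
Yet it is not a reset channel: its image contains $\proj0,\proj1,\proj2,\ketbra01,\ketbra02,\ketbra10,\ketbra20$. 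Hence any argument that uses only per-Kraus rank, trace preservation and $\sum_i\alpha_iv_i=0$ would also ``prove'' this false conclusion. What excludes the mixture is that it does not factor through $\C^m$ as a whole. Your ``fallback'' is therefore not optional; it is the entire content of this step, and you have not supplied it.

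\textbf{How the paper uses the factorization.} It counts dimensions of $\Phi$ viewed as a linear map on operators. First, $\im(\Phi)\subseteq\Dec(L(\C^m))$, so $\dim\im(\Phi)\le m^2$. Second, writing $\Phi$ in block form with $B_i=\alpha_iQ_i$ and $\sum_i\alpha_iv_i=0$ shows that $\im(\Phi)$ contains:
\begin{itemize}
\item $\proj0$;
\item every $\ketbra0w$ and every $\ketbra w0$ with $w\in W:=\im N$;
\item the range of $Z\mapsto\sum_i\alpha_i^2Q_iZQ_i$, which has dimension at least $(m-1)^2$.
\end{itemize}
The off-diagonal corrections in the last item are absorbed using invertibility of $N$ on $W$. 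This gives $\dim\im(\Phi)\ge1+2\dim W+(m-1)^2$. If two $Q_i$ differ, then $\dim W\ge m$, so $\dim\im(\Phi)\ge m^2+2$, a contradiction.

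\textbf{Your final step is sound once $Q_i=Q$.} Trace preservation then gives $\sum_iv_iv_i^\dagger=I_{H_\perp}-Q$, so every $v_i\perp\im Q$ automatically. Regauging with a unitary whose first row is $(\alpha_i)_i$ yields $A_1'=P$ and $A_j'=\ketbra{0}{v_j'}$ for $j\ge2$, which is exactly $\Phi_P$.
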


\begin{lemma}\label{lem:trace-equality}
Let $B\neq 0$ be a linear operator on a finite-dimensional Hilbert space. If
\[
\abs{\Tr(B)}^2=\rank(B)\,\norm{B}_\mathrm{F}^2,
\]
then there exist a scalar $\beta\neq 0$ and an orthogonal projection $Q$ of rank $\rank(B)$ such that
\[
B=\beta Q.
\]
\end{lemma}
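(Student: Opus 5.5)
Our plan is to obtain the conclusion from the equality case of the Cauchy--Schwarz inequality that gives $\abs{\Tr B}^2\le\rank(B)\,\norm{B}_\mathrm{F}^2$, applied twice: first to the eigenvalues, then to the singular values.

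\emph{Step 1: Schur form.} Set $r:=\rank(B)\ge 1$ and take a Schur decomposition $B=W T W^\dagger$, with $W$ unitary and $T$ upper triangular with diagonal $(\lambda_1,\dots,\lambda_N)$, the eigenvalues of $B$. Since $\norm{B}_\mathrm{F}^2=\norm{T}_\mathrm{F}^2=\sum_j\abs{\lambda_j}^2+\sum_{j<l}\abs{T_{jl}}^2$, we have $\sum_j\abs{\lambda_j}^2\le\norm{B}_\mathrm{F}^2$, with equality iff $B$ is normal. Let $s$ be the number of nonzero eigenvalues. We will use $s\le r$; this holds because the algebraic multiplicity of $0$ is at least $\dim\ker B = N-r$. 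This gives the chain
\[
\abs{\Tr B}^2=\Bigl\lvert\sum_j\lambda_j\Bigr\rvert^2\le s\sum_j\abs{\lambda_j}^2\le r\sum_j\abs{\lambda_j}^2\le r\,\norm{B}_\mathrm{F}^2 .
\]

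\emph{Step 2: force every inequality to be tight.} The hypothesis forces equality throughout the chain. The case $\norm{B}_\mathrm{F}\neq 0$ holds since $B\neq 0$.
\begin{itemize}
\item Equality in the last step makes $B$ normal, so $B=\sum_j\lambda_j\proj{w_j}$ for an orthonormal eigenbasis $\{\ket{w_j}\}$.
\item For a normal $B$, the rank equals the number of nonzero eigenvalues, so $s=r$ and the middle step is automatically tight.
\item Equality in the first Cauchy--Schwarz step, taken over the $s$ nonzero eigenvalues, forces them all to equal a common value $\beta\neq 0$. Here the relevant equality case is that of $\abs{\sum_{j=1}^s z_j}\le\sqrt{s}\,(\sum\abs{z_j}^2)^{1/2}$, which requires equal complex numbers, not merely equal moduli.
\end{itemize}

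\emph{Step 3: conclude.} Let $Q:=\sum_{j:\lambda_j\neq0}\proj{w_j}$. This is an orthogonal projection of rank $r$, and $B=\beta Q$, as required.

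\emph{Main obstacle.} The only delicate point is the non-normal case. There the rank can exceed the number of nonzero eigenvalues, as for nilpotent parts. Keeping the Schur off-diagonal term explicitly in the Frobenius norm is what rules this case out. An alternative is to write $B=\sum_{j=1}^r\sigma_j\ketbra{x_j}{y_j}$ (SVD) and bound $\abs{\Tr B}\le\sum_j\sigma_j\abs{\braket{y_j}{x_j}}\le\sum_j\sigma_j\le\sqrt r\,\norm{B}_\mathrm{F}$. Equality then forces all $\sigma_j$ to be equal and each $\ket{x_j}=e^{i\theta_j}\ket{y_j}$ with a common phase, which gives the same conclusion. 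We would present whichever version is shorter, probably the SVD one.
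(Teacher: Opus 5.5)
Your proof is correct, and so is your SVD variant. In the Schur version all the steps check out:
\begin{itemize}
\item equality in the last step forces normality, because the off-diagonal Schur entries must vanish, and $r\ge 1$ lets you cancel the factor $r$;
\item for a normal $B$ you then get $s=r$;
\item the equality case of Cauchy--Schwarz against the all-ones vector forces the nonzero eigenvalues to coincide as complex numbers.
\end{itemize}

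The paper takes a shorter route. It lets $Q$ be the orthogonal projection onto $(\ker B)^\perp$ from the start and observes that $B=BQ$. Hence $\Tr(B)=\Tr(QB)=\langle Q,B\rangle_{\mathrm F}$. It then applies a single Cauchy--Schwarz inequality in the Hilbert--Schmidt inner product, using $\norm{Q}_{\mathrm F}^2=\rank(Q)=r$. The equality case gives $B=\beta Q$ immediately, and the projection is already identified. There is no need for a Schur decomposition, for detecting normality, or for comparing the number of nonzero eigenvalues with the rank.

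Your SVD variant is essentially the paper's argument written in coordinates, since $(\ker B)^\perp=\Span\{\ket{y_j}\}$. It splits the paper's one operator-space Cauchy--Schwarz into a triangle inequality followed by a scalar Cauchy--Schwarz.

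What your Schur route buys, at the cost of length, is a finer picture of the deficit $r\norm{B}_{\mathrm F}^2-\abs{\Tr(B)}^2$. It writes the deficit as a sum of three nonnegative terms: a non-normality term, a term proportional to $r-s$ (coming from nilpotent/Jordan structure), and a term measuring the spread of the nonzero eigenvalues. The paper's version goes straight to the equality case without this decomposition.
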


\begin{proof}
Let $r:=\rank(B)$, and let $Q$ be the orthogonal projection onto $(\ker B)^\perp$. Then
\[
\rank(Q)=\dim\bigl((\ker B)^\perp\bigr)=\rank(B)=r.
\]
Moreover, $B=BQ$.
Now equip the operator space with the Frobenius inner product
\[
\langle X,Y\rangle_{\mathrm F}:=\Tr(X^\dagger Y).
\]
Since $Q=Q^\dagger$, and since the trace is cyclic in finite dimensions, we get
\[
\Tr(B)=\Tr(BQ)=\Tr(QB)=\langle Q,B\rangle_{\mathrm F}.
\]
Hence, by the Cauchy--Schwarz inequality,
\[
\abs{\Tr(B)}^2
=\abs{\langle Q,B\rangle_{\mathrm F}}^2
\le \norm{Q}_{\mathrm F}^2\,\norm{B}_{\mathrm F}^2.
\]
Because $Q$ is an orthogonal projection,
\[
\norm{Q}_{\mathrm F}^2=\Tr(Q^\dagger Q)=\Tr(Q^2)=\Tr(Q)=\rank(Q)=r.
\]
Therefore
\[
\abs{\Tr(B)}^2\le r\,\norm{B}_{\mathrm F}^2.
\]
By hypothesis, equality holds. So equality holds in the Cauchy--Schwarz inequality, and therefore $B$ is a scalar multiple of $Q$:
\[
B=\beta Q
\]
for some $\beta\in\C$. Since $B\neq 0$, we must have $\beta\neq 0$.
\end{proof}

\begin{lemma}[The equality family]\label{lem:equality-family}
Assume $c(\Phi)=c_*$. After multiplying each Kraus operator $A_i$ by a phase of modulus $1$, we may assume:
\begin{enumerate}[label=(\roman*),leftmargin=2.2em]
\item each $\alpha_i$ is real and nonnegative.
\item if $\alpha_i=0$, then $B_i=0$.
\item if $\alpha_i>0$, then there exists a rank-$(m-1)$ orthogonal projection $P_i$ on $H_\perp$ such that
\[
B_i=\alpha_i P_i.
\]
\end{enumerate}
Moreover,
\begin{equation}\label{eq:equality-family-identities}
\sum_i \alpha_i^2=1,
\qquad
\sum_i \alpha_i v_i=0,
\qquad
\sum_i \bigl(v_i v_i^\dagger+\alpha_i^2 P_i\bigr)=I_{H_\perp}.
\end{equation}
\end{lemma}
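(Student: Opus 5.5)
The plan is to trace the equality case back through the chain of inequalities in the proof of Lemma~\ref{lem:universal-first-order}. By Lemma~\ref{lem:first-order-expansion}, $c(\Phi)=c_*$ is equivalent to
\[
S+2\Re\sum_i\alpha_i^*\Tr(B_i)=d+m-2 .
\]
Since $S+T=d-1$ by Eq.~\eqref{eq:trace-block}, every inequality in the chain used in Lemma~\ref{lem:universal-first-order} must then be tight. These are: the real-part bound, the triangle inequality, the Cauchy--Schwarz step, the rank bounds $\abs{\Tr B_i}^2\le (m-1)\norm{B_i}_{\mathrm F}^2$ for $i\in I_+$, the inclusion $\sum_{i\in I_+}\norm{B_i}_{\mathrm F}^2\le T$, and the final maximization $-T+2\sqrt{(m-1)T}\le m-1$, which forces $T=m-1$.

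First I normalize phases. Each Kraus operator $A_i$ may be multiplied by a unimodular scalar without changing $\Phi$, and this rescales $\alpha_i$, $v_i$ and $B_i$ by the same phase. Choosing the phases so that each $\alpha_i\ge 0$ gives (i). Tightness of the triangle inequality and of the real-part step then says that the complex numbers $\alpha_i\Tr(B_i)$, for $i\in I_+$, are all nonnegative reals, so $\Tr B_i\ge 0$ whenever $\alpha_i>0$.

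For (ii), tightness of $\sum_{i\in I_+}\norm{B_i}_{\mathrm F}^2\le T$ forces $B_i=0$ whenever $\alpha_i=0$.

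For (iii), I use Cauchy--Schwarz equality: the vector $(\Tr B_i)_{i\in I_+}$ is proportional to $(\alpha_i)_{i\in I_+}$, with total $\sum_{i\in I_+}\abs{\Tr B_i}^2=(m-1)T=(m-1)^2$. Combined with $\sum_i\alpha_i^2=1$ and $\Tr B_i\ge0$, this gives $\Tr B_i=(m-1)\alpha_i$. Tightness of the rank bound gives $\abs{\Tr B_i}^2=(m-1)\norm{B_i}_{\mathrm F}^2$. Since $\abs{\Tr B_i}^2\le\rank(B_i)\norm{B_i}_{\mathrm F}^2$ and $\rank(B_i)\le m-1$, we also get $\rank B_i=m-1$ whenever $B_i\neq0$. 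Here $B_i\neq0$ because $\Tr B_i=(m-1)\alpha_i>0$ when $m\ge2$; the case $m=1$ is trivial since then $B_i=0$ and $P_i=0$. Lemma~\ref{lem:trace-equality} then yields $B_i=\beta_iP_i$ with $P_i$ a rank-$(m-1)$ orthogonal projection. Taking traces, $\beta_i(m-1)=(m-1)\alpha_i$, so $\beta_i=\alpha_i$.

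Finally, the identities \eqref{eq:equality-family-identities} follow by substituting (ii)--(iii) into Eq.~\eqref{eq:block-identities}, using $B_i^\dagger B_i=\alpha_i^2P_i$. The main obstacle is the bookkeeping of which inequalities are tight. In particular one must verify that the phase-alignment consequence, $\Tr B_i\ge0$, is correctly extracted from tightness of the triangle and real-part steps, since the Cauchy--Schwarz equality only gives proportionality up to a common complex scalar. Once the common scalar is shown to be positive real, the rest is routine.
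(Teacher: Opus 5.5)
Your proposal is correct and follows the paper's proof essentially step for step. Like the paper, you saturate every inequality in the proof of Lemma~\ref{lem:universal-first-order}, obtain $T=m-1$, $B_i=0$ off $I_+$ and the rank and trace equalities, and then invoke Lemma~\ref{lem:trace-equality}. The only difference is ordering: you fix the phases first and read off $\Tr B_i=(m-1)\alpha_i$ directly, whereas the paper first gets $\abs{\beta_i}=\abs{\alpha_i}$ and then aligns the phases. Also, the Cauchy--Schwarz step in that chain is applied to the moduli $\abs{\alpha_i}$ and $\abs{\Tr B_i}$, so the proportionality constant is automatically nonnegative, and the worry in your last paragraph does not arise.
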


\begin{proof}
Because $c(\Phi)=c_*$, every inequality in the proof of Lemma~\ref{lem:universal-first-order} is saturated. In particular,
\[
d-1-T+2\sqrt{(m-1)T}
\]
must attain its maximum at the actual value of $T$, so $T=m-1$.

Equality in
\[
\sum_{i\in I_+}\abs{\Tr(B_i)}^2
\le
(m-1)\sum_{i\in I_+}\norm{B_i}_\mathrm{F}^2
\le
(m-1)T
\]
forces $B_i=0$ whenever $\alpha_i=0$, and for every $i\in I_+$,
\[
\abs{\Tr(B_i)}^2=(m-1)\norm{B_i}_\mathrm{F}^2,
\qquad
\rank(B_i)=m-1.
\]
By Lemma~\ref{lem:trace-equality}, for each such $i$ there exist $\beta_i\neq 0$ and a rank-$(m-1)$ projection $P_i$ such that $B_i=\beta_i P_i$.

Equality in the Cauchy--Schwarz step
\[
\sum_i \abs{\alpha_i}\,\abs{\Tr(B_i)}
\le
\left(\sum_i \abs{\alpha_i}^2\right)^{1/2}
\left(\sum_i \abs{\Tr(B_i)}^2\right)^{1/2}
\]
shows that there exists $c\ge 0$ with $\abs{\Tr(B_i)}=c\abs{\alpha_i}$ for every $i$. Using $\sum_i\abs{\alpha_i}^2=1$ and $\sum_i \abs{\Tr(B_i)}^2=(m-1)^2$, we find $c=m-1$. Since $\abs{\Tr(B_i)}=(m-1)\abs{\beta_i}$, this implies $\abs{\beta_i}=\abs{\alpha_i}$ for every $i\in I_+$.

Finally, equality throughout
\[
\Re\sum_i \alpha_i^*\Tr(B_i)
\le
\abs{\sum_i \alpha_i^*\Tr(B_i)}
\le
\sum_i \abs{\alpha_i}\,\abs{\Tr(B_i)}
\]
implies that $\sum_i \alpha_i^*\Tr(B_i)$ is real and nonnegative, and that all complex numbers $\alpha_i^*\Tr(B_i)$ have a common phase. Multiplying each $A_i$ by an individual unit-modulus scalar does not change the channel, so we may choose those phases to make every $\alpha_i$ real and nonnegative. Then $\Tr(B_i)$ has the same phase, hence is also real and nonnegative, which forces $\beta_i=\alpha_i$. This proves the lemma.
\end{proof}

\begin{lemma}\label{lem:all-projections-coincide}
Under the hypotheses of Lemma~\ref{lem:equality-family}, all nonzero projections $P_i$ are equal.
\end{lemma}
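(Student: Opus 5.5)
The plan is to exploit the factorization $A_i=D_jE_\ell$ more carefully than the proof of Lemma~\ref{lem:universal-first-order} did. That proof only used $\rank(A_i)\le m$. The extra structure will come from ranks and images of the individual encoder and decoder Kraus operators. Index the Kraus family by pairs, $A_{j\ell}:=D_jE_\ell$, with block entries $\alpha_{j\ell},v_{j\ell},B_{j\ell}$. By Lemma~\ref{lem:equality-family}, whenever $\alpha_{j\ell}>0$ we have
\[
A_{j\ell}=\begin{pmatrix}\alpha_{j\ell} & v_{j\ell}^\dagger\\ 0 & \alpha_{j\ell}P_{j\ell}\end{pmatrix},
\]
and whenever $\alpha_{j\ell}=0$ we have $A_{j\ell}=\ket{0}v_{j\ell}^\dagger$, which has rank at most $1$. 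The phase rescalings in Lemma~\ref{lem:equality-family} multiply each $A_{j\ell}$ by a unit scalar, so they affect none of the images, kernels or ranks used below.

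\emph{Step 1: rank saturation.} Suppose $\alpha_{j\ell}>0$. The image of $A_{j\ell}$ contains $\ket{0}$, since $A_{j\ell}\ket{0}=\alpha_{j\ell}\ket{0}$. It also contains $\im P_{j\ell}$: for $x\in\im P_{j\ell}\subset H_\perp$ one has $A_{j\ell}x=(v_{j\ell}^\dagger x)\ket{0}+\alpha_{j\ell}x$. Hence
\[
\im A_{j\ell}=\Span\{\ket{0}\}\oplus\im P_{j\ell},
\]
which has dimension exactly $m$. Since $A_{j\ell}=D_jE_\ell$ with $E_\ell:\C^d\to\C^m$ and $D_j:\C^m\to\C^d$, this forces $\rank E_\ell=\rank D_j=m$. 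So $E_\ell$ is surjective, $D_j$ is injective, and $\im D_j=\im A_{j\ell}$. Call such $j$ and $\ell$ \emph{active}.

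\emph{Step 2: active pairs all have $\alpha>0$.} Let $j$ and $\ell$ both be active, each via possibly different partners. Then $D_j$ is injective and $E_\ell$ has rank $m$, so $\rank(D_jE_\ell)=m\ge2$. Here I assume $m\ge2$, which holds since $k\ge1$. If $\alpha_{j\ell}=0$, then $A_{j\ell}=\ket{0}v_{j\ell}^\dagger$ would have rank at most $1$, a contradiction. Hence $\alpha_{j\ell}>0$ for every active $j$ and every active $\ell$. The bipartite graph of active pairs is therefore complete, and in particular connected.

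\emph{Step 3: propagate the projection.} For a fixed active $j$, Step 1 gives $\Span\{\ket{0}\}\oplus\im P_{j\ell}=\im D_j$ for every active $\ell$. Since each $P_{j\ell}$ is supported in $H_\perp$, this yields $\im P_{j\ell}=\im D_j\cap H_\perp$, independent of $\ell$. Thus $P_{j\ell}=P_{j\ell'}$. Fixing instead an active $\ell$, consider two active $j,j'$. Both $P_{j\ell}$ and $P_{j'\ell}$ equal $P_{j\ell''}$ and $P_{j'\ell''}$ for any active $\ell''$, by the previous sentence. To link $j$ and $j'$, I pass through a common $\ell$: both pairs $(j,\ell)$ and $(j',\ell)$ are active by Step 2. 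I then need $P_{j\ell}=P_{j'\ell}$ for a shared $\ell$, which I plan to obtain from row spaces. Since $D_j$ and $D_{j'}$ are injective, $\ker A_{j\ell}=\ker E_\ell=\ker A_{j'\ell}$. Comparing the $H_\perp$-blocks of the two kernels, i.e.\ the vectors $x\in H_\perp$ with $\alpha P x=-(v^\dagger x)\ket{0}$ component-wise, gives $\ker A\cap H_\perp=\ker P\cap v^{\perp}$. Combined with the $(m-1)$-dimensional structure, this should force $\ker P_{j\ell}=\ker P_{j'\ell}$ on a codimension-one subspace. Together with equal rank, that should give equality. Since every nonzero $P_i$ comes from an active pair, all nonzero $P_i$ coincide.

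The main obstacle is the final part of Step 3. Equal images of $D_j$ immediately identify projections along a fixed decoder index, but comparing two decoder indices sharing the same $E_\ell$ requires extracting $P$ from the kernel or row space of $A$. The vector $v_{j\ell}$ contaminates this. If the kernel argument proves too weak, I would first try the alternative of showing that $\im D_j$ is the same for all active $j$. The idea is to use the completeness relation in Eq.~\eqref{eq:equality-family-identities}: for a fixed active $\ell$, the restriction $\sum_j\alpha_{j\ell}^2P_{j\ell}$ together with $E_\ell$ being a fixed surjection should pin down a single $m$-dimensional range.
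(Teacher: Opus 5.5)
Your Steps 1 and 2 and the first half of Step 3 are correct. For $\alpha_{j\ell}>0$ one indeed has $\im A_{j\ell}=\Span\{\ket{0}\}\oplus\im P_{j\ell}$, of dimension $m$. Hence $E_\ell$ is surjective, and $D_j$ is injective with $\im D_j=\im A_{j\ell}$. Since $m\ge 2$ and $A_{j\ell}=\ket{0}v_{j\ell}^\dagger$ when $\alpha_{j\ell}=0$, every pair of active indices has $\alpha_{j\ell}>0$. Finally $P_{j\ell}$, being the projection onto $\im D_j\cap H_\perp$, depends only on $j$.

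The gap is the remaining comparison of $P_{j\ell}$ with $P_{j'\ell}$. You correctly note $\ker A_{j\ell}=\ker E_\ell=\ker A_{j'\ell}$. But you then use only $\ker A\cap H_\perp=\ker P\cap v^{\perp}$, and assert that agreement on this (possibly codimension-one) subspace, together with equal rank, forces $P_{j\ell}=P_{j'\ell}$. That inference is false. In $H_\perp=\C^3$ take $\ker P=\Span\{e_1,e_2\}$, $v=e_2$ and $\ker P'=\Span\{e_1,e_3\}$, $v'=e_3$. Then $\ker P\cap v^\perp=\ker P'\cap v'^\perp=\Span\{e_1\}$, yet $P\neq P'$. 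Intersecting with $H_\perp$ discards exactly the kernel vectors with a nonzero $\ket{0}$-component, which is where the missing direction lives. The fallback via the completeness relation is too vague to close this.

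The repair is to project onto $H_\perp$ instead of intersecting. Because $\alpha_{j\ell}>0$, solving $A_{j\ell}(a\ket{0}+x)=0$ with $x\in H_\perp$ gives $P_{j\ell}x=0$ and $a=-\alpha_{j\ell}^{-1}v_{j\ell}^\dagger x$. Therefore
\[
\ker A_{j\ell}=\bigl\{x-\alpha_{j\ell}^{-1}(v_{j\ell}^\dagger x)\ket{0} : x\in\ker P_{j\ell}\bigr\},
\qquad
(I_d-\proj{0})\ker E_\ell=\ker P_{j\ell},
\]
with $\ker P_{j\ell}\subseteq H_\perp$. Hence $\im P_{j\ell}$ is the orthogonal complement in $H_\perp$ of $(I_d-\proj{0})\ker E_\ell$, so it depends only on $\ell$. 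Combined with the dependence only on $j$ and the completeness of the active pairs, this gives $P_{j\ell}=P_{j\ell'}=P_{j'\ell'}$ for all active $j,j',\ell,\ell'$.

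So repaired, your proof is valid and genuinely different from the paper's. The paper never looks at individual factors. It uses only $\dim\im\Phi\le m^2$, since $\Phi$ factors through $L(\C^m)$. From the block formula it shows that $\im\Phi$ contains $\Span\{\proj{0}\}$, $\{\ketbra{0}{w}\}_{w\in W}$, $\{\ketbra{w}{0}\}_{w\in W}$ and the range of $T$. This gives $\dim\im\Phi\ge 1+2\dim W+(m-1)^2\ge m^2+2$ whenever two distinct $P_i$ occur, a contradiction.

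Your argument is more elementary and exposes the structure of the factors: all active $D_j$ share the image $\Span\{\ket{0}\}\oplus\im P$. The price is tracking the product Kraus structure and the active index pairs, which the paper's representation-independent dimension count avoids.
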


\begin{proof}
Set $P_i:=0$ when $\alpha_i=0$, and define
\[
U_i:=\im(P_i)\subset H_\perp,
\qquad
N:=\sum_i \alpha_i^2 P_i,
\qquad
W:=\im(N)=\Span\Bigl(\bigcup_i U_i\Bigr).
\]
If the nonzero projections are not all equal, then at least two distinct $(m-1)$-dimensional subspaces occur among the $U_i$, and therefore
\begin{equation}\label{eq:dimW-ge-m}
\dim W\ge m.
\end{equation}
We will show that this contradicts the bottleneck dimension.

\smallskip
\paragraph{Step 1: the image of a factorized channel has dimension at most $m^2$.}
Since $\Phi=\Dec\circ\Enc$, the image of $\Phi$ as a linear map on the operator space is contained in the linear subspace $\Dec(L(\C^m))$. Hence
\begin{equation}\label{eq:image-dim-upper}
\dim \im(\Phi)\le \dim L(\C^m)=m^2.
\end{equation}

\smallskip
\paragraph{Step 2: a block formula for $\Phi$.}
Write an arbitrary $\rho\in L(\C^d)$ in block form as
\[
\rho=
\begin{pmatrix}
 a & x^\dagger \\
 y & Z
\end{pmatrix},
\qquad
 a\in\C,
\quad
 x,y\in H_\perp,
\quad
 Z\in L(H_\perp).
\]
Using the block decomposition from Lemma~\ref{lem:block-form} together with $B_i=\alpha_i P_i$, a direct multiplication yields
\[
A_i\rho A_i^\dagger
=
\begin{pmatrix}
\alpha_i^2 a + \alpha_i v_i^\dagger y + \alpha_i x^\dagger v_i + v_i^\dagger Z v_i & \alpha_i^2 x^\dagger P_i + \alpha_i v_i^\dagger ZP_i \\
\alpha_i^2 P_i y + \alpha_i P_i Z v_i & \alpha_i^2 P_i ZP_i
\end{pmatrix}.
\]
Summing over $i$, using $\sum_i \alpha_i v_i=0$ and
\[
\sum_i v_i v_i^\dagger = I_{H_\perp}-N,
\]
we obtain
\begin{equation}\label{eq:block-formula-Phi}
\Phi(\rho)=
\begin{pmatrix}
 a+\Tr\bigl((I_{H_\perp}-N)Z\bigr) & x^\dagger N + \sum_i \alpha_i v_i^\dagger ZP_i \\
 Ny + \sum_i \alpha_i P_i Zv_i & T(Z)
\end{pmatrix},
\end{equation}
where
\begin{equation}\label{eq:T-def}
T(Z):=\sum_i \alpha_i^2 P_i ZP_i.
\end{equation}

\smallskip
\paragraph{Step 3: a lower bound on $\dim\im(\Phi)$.}
From Eq.~\eqref{eq:block-formula-Phi} we introduce the four linear subspaces
\[
V_{00}:=\Span\{\proj{0}\},
\]
\[
V_{01}:=\left\{\begin{pmatrix}0 & x^\dagger N \\ 0 & 0\end{pmatrix}:x\in H_\perp\right\}
=\{\ketbra{0}{w}:w\in W\},
\]
\[
V_{10}:=\left\{\begin{pmatrix}0 & 0 \\ Ny & 0\end{pmatrix}:y\in H_\perp\right\}
=\{\ketbra{w}{0}:w\in W\},
\]
\[
V_{11}:=\left\{\begin{pmatrix}0 & 0 \\ 0 & T(Z)\end{pmatrix}:Z\in L(H_\perp)\right\}.
\]
These subspaces live in disjoint matrix blocks, so their sum is direct. In particular,
\begin{equation}\label{eq:image-dim-lower-raw}
\dim\im(\Phi)\ge 1+\dim V_{01}+\dim V_{10}+\dim V_{11}
\end{equation}
once we know that all four subspaces are contained in $\im(\Phi)$.

The inclusions $V_{00},V_{01},V_{10}\subseteq \im(\Phi)$ are immediate from Eq.~\eqref{eq:block-formula-Phi}. The inclusion $V_{11}\subseteq \im(\Phi)$ is slightly less immediate. Fix $Z\in L(H_\perp)$ and set
\[
\rho_Z:=\begin{pmatrix}0&0\\0&Z\end{pmatrix}.
\]
Then
\[
\Phi(\rho_Z)=
\begin{pmatrix}
\Tr\bigl((I_{H_\perp}-N)Z\bigr) & \sum_i \alpha_i v_i^\dagger ZP_i \\
\sum_i \alpha_i P_i Zv_i & T(Z)
\end{pmatrix}.
\]
Define
\[
c_Z:=\sum_i \alpha_i P_i Zv_i.
\]
For each $i$, $P_i Zv_i\in \im(P_i)\subseteq W$, hence $c_Z\in W$. Since $N$ is positive semidefinite and $\im(N)=W$, the restriction $N|_W:W\to W$ is invertible. Therefore there exists $y_Z\in W$ with $Ny_Z=c_Z$, and so
\[
\begin{pmatrix}0&0\\ c_Z&0\end{pmatrix}
=
\begin{pmatrix}0&0\\ Ny_Z&0\end{pmatrix}
\in V_{10}.
\]
Next define
\[
r_Z^\dagger:=\sum_i \alpha_i v_i^\dagger ZP_i.
\]
If $u\in W^\perp$, then $P_i u=0$ for every $i$, hence $r_Z^\dagger u=0$. Equivalently, there exists $u_Z\in W$ such that
\[
r_Z^\dagger = u_Z^\dagger \Pi_W,
\]
where $\Pi_W$ is the orthogonal projection onto $W$. Again using invertibility of $N|_W$, choose $x_Z\in W$ with $Nx_Z=u_Z$. Since $N=N\Pi_W=\Pi_W N$ and $N$ is self-adjoint,
\[
x_Z^\dagger N = u_Z^\dagger \Pi_W = r_Z^\dagger.
\]
Therefore
\[
\begin{pmatrix}0&r_Z^\dagger\\0&0\end{pmatrix}
=
\begin{pmatrix}0&x_Z^\dagger N\\0&0\end{pmatrix}
\in V_{01}.
\]
Also,
\[
\begin{pmatrix}\Tr((I_{H_\perp}-N)Z)&0\\0&0\end{pmatrix}\in V_{00}.
\]
Subtracting these three elements from $\Phi(\rho_Z)\in\im(\Phi)$ shows that
\[
\begin{pmatrix}0&0\\0&T(Z)\end{pmatrix}\in \im(\Phi),
\]
and therefore $V_{11}\subseteq\im(\Phi)$.

Now $\dim V_{01}=\dim V_{10}=\dim W$. Choose an index $i_0$ with $\alpha_{i_0}>0$. For every $X\in L(U_{i_0})$,
\[
\langle X,T(X)\rangle_{\mathrm{HS}}
=
\sum_i \alpha_i^2 \norm{P_iXP_i}_\mathrm{F}^2
\ge \alpha_{i_0}^2 \norm{P_{i_0}XP_{i_0}}_\mathrm{F}^2
=\alpha_{i_0}^2\norm{X}_\mathrm{F}^2.
\]
Hence $T$ is injective on $L(U_{i_0})$, so
\[
\dim V_{11}=\rank(T)\ge \dim L(U_{i_0})=(m-1)^2.
\]
Substituting into Eq.~\eqref{eq:image-dim-lower-raw},
\[
\dim\im(\Phi)
\ge
1+2\dim W+(m-1)^2.
\]
If the $P_i$ are not all equal, then by Eq.~\eqref{eq:dimW-ge-m}
\[
\dim\im(\Phi)
\ge 1+2m+(m-1)^2
= m^2+2,
\]
contradicting Eq.~\eqref{eq:image-dim-upper}. Therefore all nonzero $P_i$ coincide.
\end{proof}

\begin{proof}[Proof of Proposition~\ref{prop:classification-reset}]
By Lemmas~\ref{lem:equality-family} and \ref{lem:all-projections-coincide}, there exists a single rank-$(m-1)$ orthogonal projection $P_\perp$ on $H_\perp$ such that
\[
B_i=\alpha_i P_\perp
\qquad
\text{for all }i.
\]
Set
\[
P:=\proj{0}+P_\perp.
\]
Then $P$ is a rank-$m$ orthogonal projection with $\ket{0}\in\im(P)$.

Choose a unitary matrix $U=(u_{ji})$ on the Kraus index space such that the new coefficients
\[
\alpha'_j:=\sum_i u_{ji}\alpha_i
\]
satisfy $(\alpha'_1,\alpha'_2,\dots,\alpha'_r)=(1,0,\dots,0)$. Define the mixed Kraus family
\[
A'_j:=\sum_i u_{ji}A_i.
\]
The channel is unchanged. Because $B_i=\alpha_i P_\perp$, the lower-right blocks transform as
\[
B'_j=\sum_i u_{ji}B_i=
\left(\sum_i u_{ji}\alpha_i\right)P_\perp
=\alpha'_j P_\perp.
\]
Hence
\[
B'_1=P_\perp,
\qquad
B'_j=0\quad (j\ge 2).
\]
Also,
\[
v'_1=\sum_i u_{1i}v_i=\sum_i \alpha_i v_i=0
\]
by Eq.~\eqref{eq:equality-family-identities}. Therefore
\[
A'_1=P,
\qquad
A'_j=\ketbra{0}{v'_j}\quad (j\ge 2).
\]
Using again Eq.~\eqref{eq:equality-family-identities},
\[
\sum_{j\ge 2} v'_j {v'_j}^\dagger = I_{H_\perp}-P_\perp = I_d-P
\]
on $H_\perp$. Thus
\begin{align*}
\Phi(\rho)
&=
P\rho P + \sum_{j\ge 2} \ketbra{0}{v'_j}\,\rho\,\ketbra{v'_j}{0} \\
&=
P\rho P + \Tr\bigl((I_d-P)\rho\bigr)\proj{0},
\end{align*}
which is exactly Eq.~\eqref{eq:reset-channel}.
\end{proof}

\subsection{Factorizations of reset channels}\label{subsec:reset-factorizations}

\begin{lemma}[Kraus operators of the identity channel]\label{lem:identity-kraus}
Let $C$ be finite dimensional, and let $\{K_t\}_t$ be Kraus operators of the identity channel on $L(C)$:
\[
\sum_t K_t X K_t^\dagger = X
\qquad
\text{for all }X\in L(C).
\]
Then every $K_t$ is a scalar multiple of the identity on $C$.
\end{lemma}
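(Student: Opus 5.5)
The plan is to use the \emph{Choi isomorphism}, or equivalently to evaluate the identity channel on rank-one operators. Write $X=\ketbra{a}{b}$ for arbitrary vectors $a,b\in C$. Then
\[
\sum_t K_t\ketbra{a}{b}K_t^\dagger=\ketbra{a}{b}.
\]

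First take $a=b$. Then $\sum_t K_t\proj{a}K_t^\dagger=\proj{a}$ is a sum of positive semidefinite operators equal to a rank-one projector. Hence each summand $K_t\proj{a}K_t^\dagger$ is supported in $\Span\{\ket{a}\}$, exactly as in the proof of Lemma~\ref{lem:block-form}. Therefore $K_t\ket{a}\in\Span\{\ket{a}\}$ for every $a\in C$. So every vector is an eigenvector of $K_t$.

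From this, a standard argument gives $K_t=\lambda_t I_C$. Take eigenvectors $a,b$ with eigenvalues $\lambda_a,\lambda_b$. The vector $a+b$ must also be an eigenvector, which forces $\lambda_a=\lambda_b$ whenever $a,b$ are linearly independent. The case $\dim C=1$ is trivial.

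No real obstacle is expected. The only delicate point is the positivity step: a sum of PSD operators equal to a rank-one projector forces each term's support into that ray. This holds because, for any $u\perp a$,
\[
\sum_t \norm{\langle u|K_t|a\rangle}^2=0.
\]
An alternative route is to note that the Choi matrix of the identity is the rank-one operator $\proj{\Omega}$ with $\ket{\Omega}=\sum_i\ket{i}\ket{i}$. Every Kraus vector $\mathrm{vec}(K_t)$ then lies in the span of $\ket{\Omega}$, which again gives $K_t\propto I_C$. I would present the eigenvector argument, since it reuses the support reasoning already used in Lemma~\ref{lem:block-form}.
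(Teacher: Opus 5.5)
Your proof is correct and matches the paper's argument: apply the identity channel to $\proj{a}$, use positivity of the summands to force $K_t\ket{a}\in\Span\{\ket{a}\}$ for every $a$, and conclude $K_t\propto I_C$. You also spell out two steps the paper leaves implicit, namely the justification of the support step via $\sum_t \abs{\bra{u}K_t\ket{a}}^2=0$ for $u\perp a$, and the fact that an operator with every vector as an eigenvector is scalar.
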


\begin{proof}
Fix a unit vector $\ket{\psi}\in C$ and apply the channel identity to $X=\proj{\psi}$:
\[
\proj{\psi}=
\sum_t K_t\proj{\psi}K_t^\dagger.
\]
The right-hand side is a sum of positive semidefinite rank-one operators, while the left-hand side has one-dimensional support $\Span\{\ket{\psi}\}$. Therefore every vector $K_t\ket{\psi}$ belongs to $\Span\{\ket{\psi}\}$. As this holds for every unit vector, each $K_t$ is a scalar multiple of the identity.
\end{proof}

\begin{proposition}[Factorizations of a reset channel]\label{prop:reset-factorization}
Let $P$ be a rank-$m$ orthogonal projection on $\C^d$ with $\ket{0}\in\im(P)$, and define
\[
\Phi_P(\rho):=P\rho P+\Tr\bigl((I_d-P)\rho\bigr)\proj{0}.
\]
Assume
\[
\Phi_P=\Dec\circ\Enc,
\qquad
\Enc:L(\C^d)\to L(\C^m),
\qquad
\Dec:L(\C^m)\to L(\C^d)
\]
are CPTP. Then:
\begin{enumerate}[label=(\roman*),leftmargin=2.2em]
\item there exists a unitary $U:\im(P)\to \C^m$ such that
\[
\Enc(X)=UXU^\dagger
\qquad
\text{for all }X\in L(\im(P)).
\]
\item the decoder is exactly the inverse isometry channel,
\[
\Dec(Y)=U^\dagger YU
\qquad
\text{for all }Y\in L(\C^m).
\]
\item the encoder has Kraus rank
\[
\krank(\Enc)=d-m+1.
\]
\end{enumerate}
\end{proposition}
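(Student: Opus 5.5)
The key observation is that on $L(\im(P))$ the reset channel acts as the identity. For $X\in L(\im(P))$ we have $\Phi_P(X)=PXP=X$, since $\Tr((I_d-P)X)=0$. So $\Dec\circ\Enc$ restricted to the $m$-dimensional subspace $\im(P)$ is the identity channel, routed through an $m$-dimensional bottleneck. I would take Kraus operators $\{E_\ell\}$ for $\Enc$ and $\{D_j\}$ for $\Dec$, and restrict each $E_\ell$ to $\im(P)$, writing $E_\ell|_{\im(P)}:\im(P)\to\C^m$. Then $\{D_jE_\ell|_{\im(P)}\}$ is a Kraus family for the identity channel on $L(\im(P))$. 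By Lemma~\ref{lem:identity-kraus}, each $D_jE_\ell|_{\im(P)}=c_{j\ell}\,\iota$, where $\iota$ is the inclusion $\im(P)\hookrightarrow\C^d$, and $\sum_{j,\ell}\abs{c_{j\ell}}^2=1$.

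\textbf{Parts (i) and (ii).} Since $\dim\im(P)=m=\dim\C^m$, I would argue by dimension counting. For fixed $\ell$, sum over $j$: $\sum_j (E_\ell|_{\im P})^\dagger D_j^\dagger D_j E_\ell|_{\im P}=\sum_j\abs{c_{j\ell}}^2 I$, which gives $E_\ell|_{\im P}^\dagger E_\ell|_{\im P}=\lambda_\ell I$. So each restricted $E_\ell$ is $\sqrt{\lambda_\ell}$ times an isometry $U_\ell:\im(P)\to\C^m$, and this isometry is unitary because the dimensions agree.

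To see that all the $U_\ell$ coincide up to phase, pick $\ell,\ell'$ with $\lambda_\ell,\lambda_{\ell'}>0$. Then $D_jU_\ell=(c_{j\ell}/\sqrt{\lambda_\ell})\iota$, so $D_j=(c_{j\ell}/\sqrt{\lambda_\ell})\,\iota U_\ell^\dagger$ on all of $\C^m$, because $U_\ell$ is onto. Comparing this with the same expression for $\ell'$ forces $U_{\ell'}^\dagger\propto U_\ell^\dagger$ whenever some $D_j$ is nonzero, and at least one is, since $\sum_j D_j^\dagger D_j=I$.

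Hence there is a single unitary $U$ with $E_\ell|_{\im P}=\sqrt{\lambda_\ell}e^{i\theta_\ell}U$ and $\sum_\ell\lambda_\ell=1$. This gives $\Enc(X)=UXU^\dagger$ on $L(\im P)$. It also gives $D_j=\gamma_j\,\iota U^\dagger$ with $\sum_j\abs{\gamma_j}^2=1$ by trace preservation, which is (ii): $\Dec(Y)=U^\dagger YU$.

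\textbf{Part (iii).} Knowing $\Dec$ is invertible on its image, I can recover $\Enc$ as $\Enc(\rho)=U\Phi_P(\rho)U^\dagger$. Here I treat $U$ as a map $\im(P)\to\C^m$ and note that $\Phi_P(\rho)$ is supported in $\im(P)$. So
\[
\Enc=\Ad_U\circ\Phi_P ,
\]
and Kraus rank is preserved under composition with a unitary, giving $\krank(\Enc)=\krank(\Phi_P)$.

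It remains to compute $\krank(\Phi_P)$. A Kraus family is $\{P\}\cup\{\ketbra{0}{e_s}\}$, where $\{e_s\}$ is an orthonormal basis of $\im(I_d-P)$. That is $1+(d-m)$ operators, and they are linearly independent: $P$ has range $\im(P)$ of dimension $m\ge1$ and is nonzero on $\im P$, while every $\ketbra{0}{e_s}$ vanishes on $\im P$. The $\ketbra{0}{e_s}$ are independent among themselves because the $e_s$ are orthonormal. Linear independence of a Kraus family gives Choi rank equal to its size, so $\krank(\Enc)=d-m+1$.

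The main obstacle is the step showing all the $U_\ell$ coincide up to phase, together with the careful identification $\Enc=\Ad_U\circ\Phi_P$ on all of $L(\C^d)$ and not only on $L(\im P)$. That identification requires (ii) to hold first, so that $\Dec$ can be inverted on its range.
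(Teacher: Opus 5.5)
Your proof is correct and follows the paper's argument: you reduce to the identity channel on $L(\im P)$, apply Lemma~\ref{lem:identity-kraus} to the products of decoder and encoder Kraus operators, show that the restricted encoder is a single unitary $U$, and then read off $\Enc=\Ad_U\circ\Phi_P$ and count the linearly independent Kraus family $P,\ketbra{0}{e_s}$. The only variation is how you pin down a single unitary: the paper diagonalizes the full Gram matrix $PF_a^\dagger F_bP=\lambda_{ab}P$ and uses that the product of two unitaries cannot vanish, whereas you use only its diagonal entries and force the $U_\ell$ to agree up to phase through $D_j\propto\iota U_\ell^\dagger$, which also gives (ii) directly rather than via surjectivity of $\Ad_U$.
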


\begin{proof}
Let $C:=\im(P)\subset \C^d$. For every $X\in L(C)$ we have $PXP=X$ and $(I_d-P)X=0$, so $\Phi_P(X)=X$. Thus $\Dec\circ\Enc$ acts as the identity channel on the full matrix algebra $L(C)$.

Choose Kraus operators $\{F_a\}_a$ for $\Enc$ and $\{G_\mu\}_\mu$ for $\Dec$. Then $\{G_\mu F_a P\}_{\mu,a}$ is a Kraus family for the identity channel on $L(C)$. By Lemma~\ref{lem:identity-kraus}, for every pair $(\mu,a)$ there is a scalar $c_{\mu a}$ such that
\begin{equation}\label{eq:scalar-kraus}
G_\mu F_a P = c_{\mu a} P.
\end{equation}
Using $\sum_\mu G_\mu^\dagger G_\mu = I_m$, we obtain
\begin{equation}\label{eq:lambda-matrix}
P F_a^\dagger F_b P
=
\sum_\mu P F_a^\dagger G_\mu^\dagger G_\mu F_b P
=
\left(\sum_\mu c_{\mu a}^* c_{\mu b}\right)P.
\end{equation}
So there is a positive semidefinite matrix $\Lambda=(\lambda_{ab})$ such that $P F_a^\dagger F_b P = \lambda_{ab}P$. Diagonalize $\Lambda$ by a unitary change of Kraus basis. After this change we may assume
\[
P\widetilde F_j^\dagger \widetilde F_\ell P = \lambda_j\delta_{j\ell} P,
\qquad
\lambda_j\ge 0,
\qquad
\sum_j \lambda_j=1.
\]
If $\lambda_j>0$, define
\[
U_j:=\lambda_j^{-1/2}\widetilde F_j P:C\to \C^m.
\]
Then $U_j^\dagger U_j=P$, so $U_j$ is an isometry from the $m$-dimensional space $C$ into the $m$-dimensional space $\C^m$, hence a unitary. If two coefficients $\lambda_j$ and $\lambda_\ell$ were positive with $j\neq \ell$, then
\[
U_j^\dagger U_\ell = (\lambda_j\lambda_\ell)^{-1/2}P\widetilde F_j^\dagger \widetilde F_\ell P = 0,
\]
which is impossible because the product of two unitaries cannot be the zero operator. Therefore exactly one $\lambda_j$ is nonzero. Since their sum is $1$, it must equal $1$. Renaming that Kraus operator as $F_1$, we conclude
\[
F_1P=U,
\qquad
F_jP=0\quad (j\ge 2),
\]
for some unitary $U:C\to \C^m$. This proves part (i).

Now let $Y\in L(\C^m)$ be arbitrary. Because $U$ is unitary, there exists $X\in L(C)$ with $Y=UXU^\dagger = \Enc(X)$. Hence
\[
\Dec(Y)=\Dec(\Enc(X))=\Phi_P(X)=X=U^\dagger YU,
\]
which proves part (ii).

Finally, part (ii) implies that for every $\rho\in L(\C^d)$,
\[
\Phi_P(\rho)=\Dec(\Enc(\rho))=U^\dagger \Enc(\rho)U,
\qquad
\text{so}
\qquad
\Enc(\rho)=U\Phi_P(\rho)U^\dagger.
\]
Thus $\Enc$ and $\Phi_P$ have the same Kraus rank. To compute the latter, choose an orthonormal basis $\{e_1,\dots,e_{d-m}\}$ of $(\im P)^\perp$. Then
\[
\Phi_P(\rho)=P\rho P+\sum_{j=1}^{d-m} \ketbra{0}{e_j}\,\rho\,\ketbra{e_j}{0}.
\]
So $\Phi_P$ has the Kraus family
\[
P,
\qquad
\ketbra{0}{e_1},\dots,\ketbra{0}{e_{d-m}},
\]
which is Hilbert--Schmidt orthogonal and hence linearly independent. Therefore $\krank(\Phi_P)=d-m+1$, and the same is true for $\Enc$.
\end{proof}

\subsection{Alternative proof that the decoder in Proposition~\ref{prop:reset-factorization} is isometric}\label{subsec:alternative-isometric-proof}

We provide an alternative proof of the isometric-decoder part of Proposition~\ref{prop:reset-factorization}. This proof is not needed elsewhere, but it is conceptually useful because it isolates the decoder from the Kraus-rank computation.

Let $\ket{0_m}$ denote the first standard basis vector of $\C^m$. Because $\ket{0}\in\im(P)$ and $\rank(P)=m$, choose an orthonormal basis $\ket{u_1},\dots,\ket{u_m}$ of $\im(P)$ with $\ket{u_1}=\ket{0}$, and define an isometry $V:\C^m\to \C^d$ by
\[
V\ket{j-1}=\ket{u_j}
\qquad
(1\le j\le m).
\]
Then $VV^\dagger=P$, $V^\dagger V=I_m$, and $V\ket{0_m}=\ket{0}$. Define the embedding channel
\[
\mathcal V(X):=VXV^\dagger,
\qquad
X\in L(\C^m),
\]
and the channel $R:L(\C^d)\to L(\C^m)$ by
\[
R(Y):=V^\dagger YV+\Tr\bigl((I_d-P)Y\bigr)\proj{0_m}.
\]
Choose an orthonormal basis $\ket{f_1},\dots,\ket{f_{d-m}}$ of $\im(I_d-P)$. Then $R$ has Kraus operators
\[
K_0:=V^\dagger,
\qquad
K_j:=\ketbra{0_m}{f_j}
\quad (1\le j\le d-m),
\]
so $R$ is CPTP. Moreover,
\[
(R\circ \mathcal V)(X)=X
\qquad
\text{for all }X\in L(\C^m).
\]
Now set
\[
\widetilde{\Enc}:=\Enc\circ \mathcal V:L(\C^m)\to L(\C^m),
\qquad
\widetilde{\Dec}:=R\circ \Dec:L(\C^m)\to L(\C^m).
\]
Both maps are CPTP, and
\[
\widetilde{\Dec}\circ \widetilde{\Enc}
=
R\circ \Dec\circ \Enc\circ \mathcal V
=
R\circ \Phi_P\circ \mathcal V
=
R\circ \mathcal V
=
\mathrm{id}_{L(\C^m)}.
\]
Thus $\widetilde{\Enc}$ is a CPTP channel with a CPTP left inverse. By the standard reversibility characterization of left-invertible channels (\cite[Theorem~18 and Remark~19]{puzzuoli2017ancilla}, \cite[Theorem~2.1]{nayak2006invertible}, and \cite[Theorem~10.1]{nielsen2010quantum}), such a channel must be of the form
\[
\widetilde{\Enc}(X)=W XW^\dagger
\]
for some unitary $W\in \mathrm{U}(m)$. On the other hand,
\[
\Dec\circ \widetilde{\Enc}
=
\Dec\circ \Enc\circ \mathcal V
=
\Phi_P\circ \mathcal V
=
\mathcal V.
\]
Therefore, for every $X\in L(\C^m)$,
\[
\Dec(X)
=
\Dec\bigl(\widetilde{\Enc}(W^\dagger XW)\bigr)
=
\mathcal V(W^\dagger XW)
=
VW^\dagger XWV^\dagger.
\]
Setting $A:=VW^\dagger$ gives an isometry $A^\dagger A=I_m$ with $AA^\dagger=P$ and
\[
\Dec(X)=AXA^\dagger.
\]
This is exactly the isometric form asserted in Proposition~\ref{prop:reset-factorization}.

\subsection{From asymptotics to exact optimality for small $\eps$}\label{subsec:compactness-upgrade}

\myBlue
We now give a detailed restatement and proof of Theorem~\ref{thm:encoder-lb-informal}.

\begin{theorem}[Encoder Kraus-rank lower bound]\label{thm:encoder-lb}
Set $d:=2^n$, $m:=2^k$, and $H_\perp:=\ket{0}^{\perp}\subset \C^d$. For $\eps\in(0,1)$ define
\[
\ket{\psi_{\eps}(\varphi)}:=\sqrt{1-\eps}\ket{0}+\sqrt{\eps}\ket{\varphi},
\qquad
\norm{\varphi}=1,
\quad
\ket{\varphi}\in H_\perp,
\]
and let $\mu_{1,\eps}$ be the distribution induced by Haar-uniform $\ket{\varphi}$ on the unit sphere of $H_\perp$. Then for every $R<d-m+1$ there exists $\eps_0(R)>0$ such that, for every $0<\eps<\eps_0(R)$, no optimal pair $(\Enc,\Dec)$ maximizing $F_{\mu_{1,\eps}}(\Enc,\Dec)$ can have encoder Kraus rank at most $R$.

In particular, for all sufficiently small $\eps>0$, every optimal $(n,k,n_B,n_E)$-QAE satisfies
\[
 n_B\ge k.
\]
\end{theorem}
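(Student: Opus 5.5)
The argument is a compactness one that converts the first-order asymptotics of Section~\ref{subsec:general-first-order}--\ref{subsec:reset-factorizations} into an exact statement for small $\eps$. Fix $R<d-m+1$ and suppose, for contradiction, that there is a sequence $\eps_j\to 0^+$ and optimal pairs $(\Enc_j,\Dec_j)$ for $\mu_{1,\eps_j}$ with $\krank(\Enc_j)\le R$. The set of CPTP pairs is compact, and the set of encoders with Kraus rank at most $R$ is closed, since it is defined by the rank of the Choi matrix being at most $R$ and rank is lower semicontinuous. Passing to a subsequence, we may therefore assume $(\Enc_j,\Dec_j)\to(\Enc_*,\Dec_*)$ with $\krank(\Enc_*)\le R$.

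The first goal is to show that the limit channel $\Phi_*:=\Dec_*\circ\Enc_*$ satisfies $\Phi_*(\proj 0)=\proj 0$ and $c(\Phi_*)=c_*$.

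\emph{Zeroth order.} By Lemma~\ref{lem:quadratic-eps}, $F_\eps(\Phi)=A(\Phi)+\eps B(\Phi)+\eps^2C(\Phi)$ with coefficients continuous in $\Phi$, and $A(\Phi)=\bra0\Phi(\proj0)\ket0$. Optimality of $(\Enc_j,\Dec_j)$ against the reset pair, which attains $A=1$, gives $F_{\eps_j}(\Phi_j)\ge 1-O(\eps_j)$. Hence $A(\Phi_*)=1$, that is, $\Phi_*(\proj0)=\proj0$.

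\emph{First order.} Here I would use Lemma~\ref{lem:compact-first-order}. Take $K$ to be the compact set of factorized channels whose encoder has Kraus rank at most $R$ (or simply all factorized channels). Optimality of $\Phi_j$ together with the reset channel, which achieves $1-c_*\eps+O(\eps^2)$, yields the comparison
\[
A(\Phi_j)+\eps_jB(\Phi_j)+\eps_j^2C(\Phi_j)\ge 1-c_*\eps_j+O(\eps_j^2).
\]
Since $A(\Phi_j)\le 1$ and $C$ is bounded, this gives $B(\Phi_j)\ge -c_*-O(\eps_j)$. By continuity of $B$, we get $B(\Phi_*)\ge -c_*$. For channels fixing $\proj0$, Lemma~\ref{lem:first-order-expansion} identifies $B(\Phi)=-c(\Phi)$, so $c(\Phi_*)\le c_*$. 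Lemma~\ref{lem:universal-first-order} supplies the reverse inequality, hence $c(\Phi_*)=c_*$.

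Before using this, I must confirm that $A$, $B$, $C$ from Lemma~\ref{lem:quadratic-eps} coincide with the exact polynomial coefficients, so that the $o(\eps)$ expansion of Lemma~\ref{lem:first-order-expansion} pins down $B$. This holds because both descriptions give the same polynomial in $\eps$.

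With $c(\Phi_*)=c_*$ established, Proposition~\ref{prop:classification-reset} shows that $\Phi_*=\Phi_P$ for some rank-$m$ projection $P$ with $\ket0\in\im P$. Proposition~\ref{prop:reset-factorization}(iii) then forces $\krank(\Enc_*)=d-m+1>R$, contradicting $\krank(\Enc_*)\le R$. This proves the existence of $\eps_0(R)$.

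For the QAE consequence, an $(n,k,n_B,n_E)$-QAE encoder has Kraus rank at most $2^{n_B}\cdot 2^{n+n_B-k}/2^{n_B}$. More precisely, its Stinespring environment $D$ has dimension $2^{n+n_B-k}$, so the rank is at most $2^{n+n_B-k}$. If $n_B\le k-1$, this bound is at most $d/2$. We need $d/2<d-m+1$, which holds since $m\le d/2$. Applying the theorem with $R=2^{n+k-1-k}=d/2$ rules out every $n_B<k$ simultaneously, because $R$ is a single finite value.

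The main obstacle is the first-order step: justifying that optimizers converge to a channel achieving exactly $c_*$. This requires the uniform boundedness of $C$ and the continuity of $B$ on the compact set, which I expect to read off from the explicit formulas in the proof of Lemma~\ref{lem:quadratic-eps}.
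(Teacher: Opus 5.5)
Your proposal is correct and is essentially the paper's proof. The chain is the same: the zeroth order forces $\Phi_*(\proj{0})=\proj{0}$, comparison with the reset pair forces $c(\Phi_*)=c_*$, and then Proposition~\ref{prop:classification-reset} and Proposition~\ref{prop:reset-factorization}(iii) give $\krank(\Enc_*)=d-m+1>R$, contradicting closedness of the rank condition. The only difference is packaging: the paper extracts a uniform gap $\delta_R$ on $S_R$ and feeds it into Lemma~\ref{lem:compact-first-order}, whereas you run the equivalent sequential-limit argument directly. The continuity of $B$ and boundedness of $C$ that you flag as the main obstacle are immediate: for each fixed $\eps$ the map $\Phi\mapsto F_\eps(\Phi)$ is linear, so $A,B,C$ are linear functionals of $\Phi$ recoverable by interpolation at three values of $\eps$.
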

\color{black}

\begin{proof}%[Proof of Theorem~\ref{thm:encoder-lb}]
Fix $R<d-m+1$, and let $K_R$ be the set of all pairs $(\Enc,\Dec)$ such that
\begin{itemize}[leftmargin=1.8em]
\item $\Enc:L(\C^d)\to L(\C^m)$ is CPTP and $\krank(\Enc)\le R$.
\item $\Dec:L(\C^m)\to L(\C^d)$ is CPTP.
\end{itemize}
This set is compact: finite-dimensional CPTP sets are compact, and the rank condition is closed because it is the condition that the Choi matrix have rank at most $R$.

For $(\Enc,\Dec)\in K_R$, write $\Phi=\Dec\circ\Enc$ and view $F_\eps(\Enc,\Dec):=F_\eps(\Phi)$. By Lemma~\ref{lem:quadratic-eps}, there are continuous coefficient functions $A_R,B_R,C_R$ on $K_R$ such that
\[
F_\eps(\Enc,\Dec)=A_R(\Enc,\Dec)+\eps B_R(\Enc,\Dec)+\eps^2 C_R(\Enc,\Dec).
\]
Moreover,
\[
A_R(\Enc,\Dec)=\bra{0}\Phi(\proj{0})\ket{0}\le 1.
\]
Hence $A_{\max}=1$, and the maximizing set
\[
S_R:=\{(\Enc,\Dec)\in K_R: A_R(\Enc,\Dec)=1\}
\]
is compact.

Now fix $(\Enc,\Dec)\in S_R$. Then $\Phi(\proj{0})=\proj{0}$, so Lemma~\ref{lem:universal-first-order} applies and gives
\[
-B_R(\Enc,\Dec)=c(\Phi)\ge c_*.
\]
Suppose, for contradiction, that
\[
\sup_{(\Enc,\Dec)\in S_R} B_R(\Enc,\Dec) = -c_*.
\]
Because $S_R$ is compact and $B_R$ is continuous, some pair $(\Enc_*,\Dec_*)\in S_R$ attains this supremum. Then $c(\Dec_*\circ \Enc_*)=c_*$. By Proposition~\ref{prop:classification-reset}, the overall channel must be a reset channel, and then Proposition~\ref{prop:reset-factorization} implies $\krank(\Enc_*)=d-m+1$, contradicting the assumption $\krank(\Enc_*)\le R<d-m+1$.

Therefore there exists $\delta_R>0$ such that
\begin{equation}\label{eq:deltaR}
\sup_{(\Enc,\Dec)\in S_R} B_R(\Enc,\Dec)\le -c_* - \delta_R.
\end{equation}
Applying Lemma~\ref{lem:compact-first-order} to the compact set $K_R$ and the functions $A_R,B_R,C_R$ gives
\[
\limsup_{\eps\to 0^+} \frac{\max_{(\Enc,\Dec)\in K_R} F_\eps(\Enc,\Dec)-1}{\eps}
\le -c_* - \delta_R.
\]
Consequently, there exists $\eps_1>0$ such that for all $0<\eps<\eps_1$,
\begin{equation}\label{eq:KR-upper}
\max_{(\Enc,\Dec)\in K_R} F_\eps(\Enc,\Dec)
\le 1-\left(c_*+\frac{\delta_R}{2}\right)\eps.
\end{equation}

On the other hand, fix any rank-$m$ projection $P$ with $\ket{0}\in\im(P)$, choose an isometry $V:\C^m\to \C^d$ with $VV^\dagger=P$ and $V\ket{0_m}=\ket{0}$, and define
\[
\Enc_P(\rho):=V^\dagger P\rho P V + \Tr\bigl((I_d-P)\rho\bigr)\proj{0_m},
\qquad
\Dec_P(Y):=VYV^\dagger.
\]
Then $\Dec_P\circ \Enc_P = \Phi_P$, and one checks directly from Eqs.~\eqref{eq:first-order-expansion} and \eqref{eq:cstar} that $c(\Phi_P)=c_*$. Hence
\[
F_\eps(\Phi_P)=1-c_*\eps + O(\eps^2),
\]
so there exists $\eps_2>0$ such that for all $0<\eps<\eps_2$,
\begin{equation}\label{eq:PhiP-lower}
F_\eps(\Phi_P)
\ge 1-\left(c_*+\frac{\delta_R}{4}\right)\eps.
\end{equation}
Comparing Eqs.~\eqref{eq:KR-upper} and \eqref{eq:PhiP-lower}, we conclude that for all
\[
0<\eps<\eps_0(R):=\min\{\eps_1,\eps_2\},
\]
no pair in $K_R$ can be globally optimal. This proves the Kraus-rank statement.

Finally, if an $(n,k,n_B,n_E)$-QAE has encoder ancilla count $n_B$, then its encoder is implemented by a unitary on $n+n_B$ qubits followed by tracing out $n+n_B-k$ qubits, so its encoder Kraus rank is at most $2^{n+n_B-k}$. If $n_B<k$, then
\[
2^{n+n_B-k}\le 2^{n-1}<2^n-2^k+1=d-m+1,
\]
so such a QAE cannot be optimal for $\mu_{1,\eps}$ once $\eps$ is sufficiently small. Therefore every optimal $(n,k,n_B,n_E)$-QAE must satisfy $n_B\ge k$.
\end{proof}

\section{A counterexample to sufficiency of isometric decoders}\label{sec:counterexample-appendix}

\myBlue
We recall and prove Proposition~\ref{prop:counterexample}.
\DecoderCounterExample*
\color{black}

We prove Proposition~\ref{prop:counterexample}. The source ensemble is the one-parameter family $\{\ket{\Psi_\phi}\}_\phi$ with $\phi\in[0,2\pi)$ uniformly distributed. We consider blind single-copy compression from $2$ qubits to $1$ qubit and back, with arbitrary CPTP encoder
\[
\Enc:L((\C^2)^{\otimes 2})\to L(\C^2)
\]
and arbitrary CPTP decoder
\[
\Dec:L(\C^2)\to L((\C^2)^{\otimes 2}).
\]
The performance functional is
\begin{equation}\label{eq:counterexample-objective}
F(\Enc,\Dec)
:=
\frac{1}{2\pi}\int_0^{2\pi} \bra{\Psi_\phi}\Dec(\Enc(\proj{\Psi_\phi}))\ket{\Psi_\phi}\,d\phi.
\end{equation}

\subsection{A Kraus-rank-$2$ decoder attaining $3/4$}\label{subsec:rank2-achiever}

Define the symmetric Bell state
\[
\ket{\Sigma}:=\frac{\ket{01}+\ket{10}}{\sqrt 2}.
\]
Then
\begin{equation}\label{eq:Psi-expansion}
\ket{\Psi_\phi} = \frac12\ket{00}+\frac{e^{i\phi}}{\sqrt 2}\ket{\Sigma}+\frac{e^{2i\phi}}{2}\ket{11}.
\end{equation}
Take the encoder to be the partial trace over the second qubit,
\[
\Enc(\rho)=\Tr_2(\rho),
\]
so that $\Enc(\proj{\Psi_\phi})=\proj{\psi_\phi}$. Define a decoder with two Kraus operators $A_0,A_1:\C^2\to (\C^2)^{\otimes 2}$,
\begin{equation}\label{eq:rank2-kraus}
A_0 = \frac{1}{\sqrt 3}\ketbra{00}{0}+\sqrt{\frac23}\ketbra{\Sigma}{1},
\qquad
A_1 = \sqrt{\frac23}\ketbra{\Sigma}{0}+\frac{1}{\sqrt 3}\ketbra{11}{1}.
\end{equation}
Since $A_0^\dagger A_0 + A_1^\dagger A_1 = I_2$, this is a CPTP decoder.

For every $\phi$,
\[
A_0\ket{\psi_\phi} = \frac{1}{\sqrt 2}\left(\frac{1}{\sqrt 3}\ket{00}+e^{i\phi}\sqrt{\frac23}\ket{\Sigma}\right),
\qquad
A_1\ket{\psi_\phi} = \frac{1}{\sqrt 2}\left(\sqrt{\frac23}\ket{\Sigma}+e^{i\phi}\frac{1}{\sqrt 3}\ket{11}\right).
\]
Using Eq.~\eqref{eq:Psi-expansion}, a direct calculation gives
\[
\abs{\braket{\Psi_\phi}{A_0\psi_\phi}}^2=\frac38,
\qquad
\abs{\braket{\Psi_\phi}{A_1\psi_\phi}}^2=\frac38.
\]
Therefore
\begin{equation}\label{eq:rank2-fidelity-3over4}
\bra{\Psi_\phi}\Dec(\Enc(\proj{\Psi_\phi}))\ket{\Psi_\phi}=\frac34
\qquad
\text{for every }\phi.
\end{equation}
In particular,
\begin{equation}\label{eq:general-sup-lower-bound}
\sup_{\Enc,\Dec} F(\Enc,\Dec)\ge \frac34.
\end{equation}

\subsection{A universal upper bound for isometric decoders}\label{subsec:isometric-upper-bound}

Now restrict the decoder to Kraus rank $1$, so
\[
\Dec_V(\rho)=V\rho V^\dagger,
\qquad
V^\dagger V=I_2,
\]
for some isometry $V:\C^2\to (\C^2)^{\otimes 2}$. Let
\[
P:=VV^\dagger,
\]
so $P$ is a rank-$2$ orthogonal projector onto the image of $V$. Fix an arbitrary encoder $\Enc$ and set
\[
\rho_\phi:=V\Enc(\proj{\Psi_\phi})V^\dagger.
\]
Because $\Enc(\proj{\Psi_\phi})$ is a density matrix on a two-dimensional space,
\[
0\le \Enc(\proj{\Psi_\phi})\le I_2.
\]
Multiplying by $V$ and $V^\dagger$ gives
\begin{equation}\label{eq:rho-phi-between-0-and-P}
0\le \rho_\phi\le P.
\end{equation}
Hence
\[
\bra{\Psi_\phi}\rho_\phi\ket{\Psi_\phi}\le \bra{\Psi_\phi}P\ket{\Psi_\phi}.
\]
Averaging over $\phi$ yields
\begin{equation}\label{eq:isometric-upper-start}
F(\Enc,\Dec_V)\le \Tr(P\rho),
\qquad
\rho:=\frac{1}{2\pi}\int_0^{2\pi} \proj{\Psi_\phi}\,d\phi.
\end{equation}
From Eq.~\eqref{eq:Psi-expansion}, averaging kills the oscillatory cross-terms and gives
\begin{equation}\label{eq:rho-average}
\rho = \frac14\proj{00}+\frac12\proj{\Sigma}+\frac14\proj{11}.
\end{equation}
Thus $\rho$ has eigenvalues $1/2,1/4,1/4,0$, with eigenspaces spanned by $\ket{\Sigma}$, by $\Span\{\ket{00},\ket{11}\}$, and by the antisymmetric Bell state
\[
\ket{A}:=\frac{\ket{01}-\ket{10}}{\sqrt 2}.
\]
Therefore the maximal value of $\Tr(P\rho)$ over rank-$2$ projectors $P$ is the sum of the two largest eigenvalues:
\begin{equation}\label{eq:isometric-upper}
\max_{\rank(P)=2}\Tr(P\rho)=\frac12+\frac14=\frac34.
\end{equation}
So every isometric decoder satisfies
\begin{equation}\label{eq:isometric-decoder-upper}
F(\Enc,\Dec_V)\le \frac34.
\end{equation}
Moreover, every maximizing projector has the form
\begin{equation}\label{eq:Ptg}
P_{t,\gamma}=\proj{\Sigma}+\proj{\eta_{t,\gamma}},
\qquad
\ket{\eta_{t,\gamma}}:=\cos t\ket{00}+e^{i\gamma}\sin t\ket{11},
\quad
0\le t\le \frac{\pi}{2},
\quad
\gamma\in[0,2\pi).
\end{equation}

\subsection{Implications of equality in the upper bound}\label{subsec:equality-reduction}

Assume, for contradiction, that some encoder $\Enc$ and isometry $V$ satisfy
\[
F(\Enc,\Dec_V)=\frac34.
\]
Then equality must hold in both Eqs.~\eqref{eq:isometric-upper-start} and \eqref{eq:isometric-upper}. Consequently:
\begin{enumerate}[label=(\alph*),leftmargin=2.2em]
\item the projector $P=VV^\dagger$ must be one of the maximizers in Eq.~\eqref{eq:Ptg}, say $P=P_{t,\gamma}$.
\item for every $\phi$, the pointwise inequality
\[
\bra{\Psi_\phi}\rho_\phi\ket{\Psi_\phi}
\le
\bra{\Psi_\phi}P_{t,\gamma}\ket{\Psi_\phi}
\]
must actually be an equality.
\end{enumerate}
The second claim follows because the integrand is continuous and nonnegative. If its integral is $0$, then it vanishes identically.

Let
\[
s_\phi:=\bra{\Psi_\phi}P_{t,\gamma}\ket{\Psi_\phi}=\norm{P_{t,\gamma}\ket{\Psi_\phi}}^2.
\]
Since every $P_{t,\gamma}$ contains $\ket{\Sigma}$ and Eq.~\eqref{eq:Psi-expansion} has a nonzero $\ket{\Sigma}$ component, $s_\phi>0$ for all $\phi$. Define the normalized projections
\begin{equation}\label{eq:chi-def}
\ket{\chi^{(t,\gamma)}_\phi}:=\frac{P_{t,\gamma}\ket{\Psi_\phi}}{\sqrt{s_\phi}}.
\end{equation}
Because $\rho_\phi$ is supported on $\im(P_{t,\gamma})$,
\[
\bra{\Psi_\phi}\rho_\phi\ket{\Psi_\phi}
=
\bra{\Psi_\phi}P_{t,\gamma}\rho_\phi P_{t,\gamma}\ket{\Psi_\phi}
=
s_\phi\,\bra{\chi^{(t,\gamma)}_\phi}\rho_\phi\ket{\chi^{(t,\gamma)}_\phi}.
\]
Using Eq.~\eqref{eq:rho-phi-between-0-and-P}, the restriction of $\rho_\phi$ to $\im(P_{t,\gamma})$ is bounded above by the identity on that two-dimensional subspace. Hence
\[
\bra{\chi^{(t,\gamma)}_\phi}\rho_\phi\ket{\chi^{(t,\gamma)}_\phi}\le 1.
\]
Equality of the fidelities therefore forces
\[
\rho_\phi=\proj{\chi^{(t,\gamma)}_\phi}
\qquad
\text{for every }\phi.
\]
Thus equality in Eq.~\eqref{eq:isometric-decoder-upper} would imply the existence of a CPTP map
\[
\Phi:=\Dec_V\circ \Enc
\]
that sends each input pure state $\proj{\Psi_\phi}$ to the pure state $\proj{\chi^{(t,\gamma)}_\phi}$. We next show that no such channel exists.

\subsection{A pure-state transformation criterion}\label{subsec:pure-transform-criterion}

\begin{theorem}[\citeMy{chefles2004existence}, Theorem~2]\label{thm:exact-transform}
Let $\{\ket{\alpha_j}\}_{j=1}^m\subset H_A$ and $\{\ket{\beta_j}\}_{j=1}^m\subset H_B$ be finite families of unit vectors. There exists a CPTP map $\Phi:L(H_A)\to L(H_B)$ satisfying
\[
\Phi(\proj{\alpha_j})=\proj{\beta_j}
\qquad
(j=1,\dots,m)
\]
if and only if there exists a positive semidefinite matrix $M\in M_m(\C)$ with diagonal entries $M_{jj}=1$ such that
\begin{equation}\label{eq:gram-condition}
G_\alpha = M\circ G_\beta,
\end{equation}
where $G_\alpha$ and $G_\beta$ are the Gram matrices
\[
(G_\alpha)_{ij}=\braket{\alpha_i}{\alpha_j},
\qquad
(G_\beta)_{ij}=\braket{\beta_i}{\beta_j},
\]
and $\circ$ denotes Hadamard (entrywise) product.
\end{theorem}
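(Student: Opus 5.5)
The plan is to prove both directions through the Stinespring picture, using the fact that a channel maps a pure state to a pure state only if the environment decouples.

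\emph{Necessity.} Suppose a CPTP map $\Phi$ satisfies $\Phi(\proj{\alpha_j})=\proj{\beta_j}$ for all $j$. Take a Stinespring isometry $W:H_A\to H_B\otimes H_E$ with $\Phi(\rho)=\Tr_E(W\rho W^\dagger)$. The vector $W\ket{\alpha_j}$ is a pure state on $BE$ whose $B$-marginal is the pure state $\proj{\beta_j}$. A pure bipartite state with a pure marginal is a product, so
\[
W\ket{\alpha_j}=\ket{\beta_j}\otimes\ket{e_j}
\]
for some unit vector $\ket{e_j}\in H_E$, with any phase absorbed into $\ket{e_j}$. Because $W$ is an isometry, $\braket{\alpha_i}{\alpha_j}=\braket{\beta_i}{\beta_j}\braket{e_i}{e_j}$. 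Let $M_{ij}:=\braket{e_i}{e_j}$. Then $M$ is a Gram matrix, hence positive semidefinite, its diagonal entries equal $1$, and $G_\alpha=M\circ G_\beta$. This is \eqref{eq:gram-condition}.

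\emph{Sufficiency.} Given $M\ge 0$ with unit diagonal, factor $M=X^\dagger X$ (for example by a spectral or Cholesky decomposition). The columns of $X$ are then unit vectors $\ket{e_j}$ in some auxiliary space $H_{E'}$ with $\braket{e_i}{e_j}=M_{ij}$. Set $\ket{\gamma_j}:=\ket{\beta_j}\otimes\ket{e_j}$, so that $G_\gamma=G_\beta\circ M=G_\alpha$. Define $W_0$ on $S:=\Span\{\ket{\alpha_j}\}$ by $W_0\sum_j c_j\ket{\alpha_j}:=\sum_j c_j\ket{\gamma_j}$. This map is well defined and isometric because
\[
\Bigl\lVert\sum_j c_j\gamma_j\Bigr\rVert^2=c^\dagger G_\gamma c=c^\dagger G_\alpha c=\Bigl\lVert\sum_j c_j\alpha_j\Bigr\rVert^2 .
\]
In particular, every linear dependence among the $\ket{\alpha_j}$ is inherited by the $\ket{\gamma_j}$. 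Extend $W_0$ to an isometry $W:H_A\to H_B\otimes H_E$ by sending an orthonormal basis of $S^\perp$ to orthonormal vectors orthogonal to $W_0(S)$. Then $\Phi:=\Tr_E(W\cdot W^\dagger)$ is CPTP and satisfies $\Phi(\proj{\alpha_j})=\Tr_E\proj{\gamma_j}=\proj{\beta_j}$.

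The only delicate step is this extension. $H_B\otimes H_{E'}$ need not have enough room in the orthogonal complement of $W_0(S)$ to receive $S^\perp$. I would handle this by enlarging the environment to $H_E:=H_{E'}\oplus\C^{\dim H_A}$, which leaves all the inner products $\braket{e_i}{e_j}$ unchanged and supplies enough orthogonal dimensions. Apart from this point and the well-definedness check above, which is the norm identity already displayed, both directions are routine.
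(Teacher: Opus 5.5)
Your proof is correct in both directions. That includes the environment padding $H_E=H_{E'}\oplus\C^{\dim H_A}$, which is exactly what guarantees that the isometry defined on $\Span\{\ket{\alpha_j}\}$ extends to all of $H_A$.

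The paper gives no proof of this statement. It imports the result as a black box from Chefles, Jozsa and Winter, and the standard argument there is essentially your Stinespring plus equal-Gram-matrix construction. The paper also only uses the necessity direction: it rules out a channel by exhibiting a forced $M$ that is not positive semidefinite. Your first paragraph already establishes that direction in a self-contained way.
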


\subsection{Reduction of $\gamma$ to $0$}\label{subsec:gamma-to-zero}

\begin{lemma}\label{lem:gamma-to-zero}
Let
\[
U_\beta:=\proj{0}+e^{i\beta}\proj{1},
\qquad
W_\beta:=U_\beta^{\otimes 2}.
\]
Then:
\begin{enumerate}[label=(\roman*),leftmargin=2.2em]
\item $W_\beta\ket{\Psi_\phi}=\ket{\Psi_{\phi+\beta}}$ for every $\phi,\beta$.
\item $W_{\gamma/2} P_{t,0} W_{\gamma/2}^\dagger = P_{t,\gamma}$ for every $t,\gamma$.
\item therefore
\begin{equation}\label{eq:chi-gamma-shift}
W_{\gamma/2}\ket{\chi^{(t,0)}_\phi} = \ket{\chi^{(t,\gamma)}_{\phi+\gamma/2}}
\qquad
\text{for every }\phi.
\end{equation}
\end{enumerate}
Consequently, there exists a channel sending every $\proj{\Psi_\phi}$ to $\proj{\chi^{(t,\gamma)}_\phi}$ if and only if there exists a channel sending every $\proj{\Psi_\phi}$ to $\proj{\chi^{(t,0)}_\phi}$.
\end{lemma}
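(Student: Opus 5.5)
The three parts are direct computations, and the final equivalence follows by conjugating a channel with the fixed unitary $W_{\gamma/2}$ and reparametrizing $\phi$.

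\emph{Part (i).} I would compute on one qubit: $U_\beta\ket{\psi_\phi}=(\ket{0}+e^{i(\phi+\beta)}\ket{1})/\sqrt2=\ket{\psi_{\phi+\beta}}$. Tensoring gives $W_\beta\ket{\Psi_\phi}=\ket{\Psi_{\phi+\beta}}$.

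\emph{Part (ii).} I would check how $W_\beta$ acts on the vectors defining $P_{t,\gamma}$ in Eq.~\eqref{eq:Ptg}. Since $W_\beta\ket{01}=e^{i\beta}\ket{01}$ and $W_\beta\ket{10}=e^{i\beta}\ket{10}$, we get $W_\beta\ket{\Sigma}=e^{i\beta}\ket{\Sigma}$, so $W_\beta\proj{\Sigma}W_\beta^\dagger=\proj{\Sigma}$. Since $W_\beta\ket{00}=\ket{00}$ and $W_\beta\ket{11}=e^{2i\beta}\ket{11}$, we get $W_{\gamma/2}\ket{\eta_{t,0}}=\cos t\ket{00}+e^{i\gamma}\sin t\ket{11}=\ket{\eta_{t,\gamma}}$. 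Conjugation therefore maps $P_{t,0}$ to $P_{t,\gamma}$.

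\emph{Part (iii).} Combining (i) and (ii) gives
\[
W_{\gamma/2}P_{t,0}\ket{\Psi_\phi}=W_{\gamma/2}P_{t,0}W_{\gamma/2}^\dagger W_{\gamma/2}\ket{\Psi_\phi}=P_{t,\gamma}\ket{\Psi_{\phi+\gamma/2}}.
\]
Because $W_{\gamma/2}$ is unitary, the norms agree, so $s^{(t,0)}_\phi=s^{(t,\gamma)}_{\phi+\gamma/2}$. Normalizing both sides then yields Eq.~\eqref{eq:chi-gamma-shift} exactly, with no leftover phase, since the same vector is divided by the same positive scalar.

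\emph{Equivalence.} Suppose $\Phi$ sends every $\proj{\Psi_\phi}$ to $\proj{\chi^{(t,0)}_\phi}$. Define
\[
\Phi'(\rho):=W_{\gamma/2}\,\Phi\bigl(W_{\gamma/2}^\dagger\rho W_{\gamma/2}\bigr)\,W_{\gamma/2}^\dagger,
\]
which is CPTP. By (i), $W_{\gamma/2}^\dagger\ket{\Psi_\phi}=\ket{\Psi_{\phi-\gamma/2}}$, so $\Phi'(\proj{\Psi_\phi})=W_{\gamma/2}\proj{\chi^{(t,0)}_{\phi-\gamma/2}}W_{\gamma/2}^\dagger$. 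By (iii), this equals $\proj{\chi^{(t,\gamma)}_\phi}$. Conversely, conjugating by $W_{\gamma/2}^\dagger$ and running the same computation in reverse recovers a channel for $\gamma=0$ from one for general $\gamma$. Since $\phi$ ranges over the full circle, the shift $\phi\mapsto\phi\pm\gamma/2$ is a bijection of the index set, so every $\phi$ is covered in both directions.

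The only point needing care is (iii): one must confirm that the shifted projection matches $\ket{\chi^{(t,\gamma)}_{\phi+\gamma/2}}$ exactly rather than up to a global phase. This is harmless at the level of density operators in any case, and it in fact holds on the nose by the normalization argument above.
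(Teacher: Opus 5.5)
Your proposal is correct and follows the paper's proof essentially step for step: the one-qubit computation for (i), the action of $W_{\gamma/2}$ on $\ket{\Sigma}$ and $\ket{\eta_{t,0}}$ for (ii), the intertwining relation $W_{\gamma/2}P_{t,0}=P_{t,\gamma}W_{\gamma/2}$ together with norm preservation for (iii), and conjugation of the channel by $W_{\gamma/2}$ for the equivalence. The only difference is cosmetic. You explicitly build the general-$\gamma$ channel from the $\gamma=0$ one, whereas the paper explicitly builds the $\gamma=0$ channel from the general-$\gamma$ one; each direction is the other with $W_{\gamma/2}$ replaced by its adjoint.
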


\begin{proof}
Part (i) is immediate from
\[
U_\beta\ket{\psi_\phi}=\frac{\ket{0}+e^{i(\phi+\beta)}\ket{1}}{\sqrt2}=\ket{\psi_{\phi+\beta}}.
\]
For part (ii), recall that
\[
P_{t,0}=\proj{\Sigma}+\proj{\eta_{t,0}},
\qquad
\ket{\eta_{t,0}}=\cos t\ket{00}+\sin t\ket{11}.
\]
Now
\[
W_{\gamma/2}\ket{\Sigma}=e^{i\gamma/2}\ket{\Sigma},
\qquad
W_{\gamma/2}\ket{\eta_{t,0}}=\cos t\ket{00}+e^{i\gamma}\sin t\ket{11}=\ket{\eta_{t,\gamma}},
\]
which proves part (ii). For part (iii), apply part (ii) and then part (i):
\[
W_{\gamma/2}P_{t,0}\ket{\Psi_\phi}
=
P_{t,\gamma}W_{\gamma/2}\ket{\Psi_\phi}
=
P_{t,\gamma}\ket{\Psi_{\phi+\gamma/2}}.
\]
Since $W_{\gamma/2}$ is unitary, the norms on both sides agree, proving Eq.~\eqref{eq:chi-gamma-shift}.

If a channel $\Phi_\gamma$ sends $\proj{\Psi_\phi}$ to $\proj{\chi^{(t,\gamma)}_\phi}$ for all $\phi$, then
\[
\Phi_0:=\Ad_{W_{\gamma/2}^\dagger}\circ \Phi_\gamma \circ \Ad_{W_{\gamma/2}}
\]
sends $\proj{\Psi_\phi}$ to $\proj{\chi^{(t,0)}_\phi}$ by Eq.~\eqref{eq:chi-gamma-shift}. The converse is obtained by reversing the conjugation.
\end{proof}

By Lemma~\ref{lem:gamma-to-zero}, it is enough to treat the case $\gamma=0$. Write
\[
P_t:=P_{t,0},
\qquad
\ket{\eta_t}:=\ket{\eta_{t,0}}=\cos t\ket{00}+\sin t\ket{11}.
\]

\subsection{The case $0\le \sin 2t<1$}\label{subsec:case-sin-less-than-1}

Set
\[
s:=\sin 2t,
\qquad
c:=\cos 2t.
\]
For the projector $P_t=\proj{\Sigma}+\proj{\eta_t}$, define the normalized projected outputs
\[
\ket{\chi^{(t)}_\phi}:=\frac{P_t\ket{\Psi_\phi}}{\norm{P_t\ket{\Psi_\phi}}}.
\]
Using Eq.~\eqref{eq:Psi-expansion},
\[
P_t\ket{\Psi_\phi}
=
\frac{\cos t+e^{2i\phi}\sin t}{2}\ket{\eta_t}+\frac{e^{i\phi}}{\sqrt2}\ket{\Sigma},
\]
so
\begin{equation}\label{eq:PtPsi-norm}
\norm{P_t\ket{\Psi_\phi}}^2 = \frac{3+s\cos(2\phi)}{4}.
\end{equation}
In particular,
\[
\ket{\chi^{(t)}_0} = \frac{(\cos t+\sin t)\ket{\eta_t}+\sqrt2\ket{\Sigma}}{\sqrt{3+s}},
\]
\[
\ket{\chi^{(t)}_{\pi/2}} = \frac{(\cos t-\sin t)\ket{\eta_t}+i\sqrt2\ket{\Sigma}}{\sqrt{3-s}},
\]
\[
\ket{\chi^{(t)}_{\pi}} = \frac{(\cos t+\sin t)\ket{\eta_t}-\sqrt2\ket{\Sigma}}{\sqrt{3+s}},
\]
\[
\ket{\chi^{(t)}_{3\pi/2}} = \frac{(\cos t-\sin t)\ket{\eta_t}-i\sqrt2\ket{\Sigma}}{\sqrt{3-s}}.
\]
From these formulas one computes
\begin{equation}\label{eq:chi-inner-products-case1}
\braket{\chi^{(t)}_0}{\chi^{(t)}_{\pi/2}} = \frac{c+2i}{\sqrt{9-s^2}},
\qquad
\braket{\chi^{(t)}_0}{\chi^{(t)}_{\pi}} = \frac{s-1}{s+3},
\qquad
\braket{\chi^{(t)}_0}{\chi^{(t)}_{3\pi/2}} = \frac{c-2i}{\sqrt{9-s^2}}.
\end{equation}
Now choose the four phases
\[
\phi_1=0,
\qquad
\phi_2=\frac{\pi}{2},
\qquad
\phi_3=\pi,
\qquad
\phi_4=\frac{3\pi}{2}.
\]
For the inputs,
\begin{equation}\label{eq:input-inner-products-case1}
\braket{\Psi_0}{\Psi_{\pi/2}}=\frac{i}{2},
\qquad
\braket{\Psi_0}{\Psi_{\pi}}=0,
\qquad
\braket{\Psi_0}{\Psi_{3\pi/2}}=-\frac{i}{2},
\end{equation}
and cyclic symmetry determines the rest of the $4\times 4$ Gram matrix. If a channel mapped the four input states $\proj{\Psi_{\phi_j}}$ to the four pure output states $\proj{\chi^{(t)}_{\phi_j}}$, then Theorem~\ref{thm:exact-transform} would force a positive semidefinite matrix $M_t$ with diagonal entries $1$ such that
\[
G_\Psi = M_t\circ G_{\chi^{(t)}}.
\]
For these four phases all off-diagonal entries are uniquely determined, and one obtains
\begin{equation}\label{eq:Mt-matrix}
M_t=
\begin{pmatrix}
1 & h & 0 & \overline h \\
\overline h & 1 & h & 0 \\
0 & \overline h & 1 & h \\
h & 0 & \overline h & 1
\end{pmatrix},
\qquad
h
=
\frac{\tfrac{i}{2}}{\tfrac{c+2i}{\sqrt{9-s^2}}}
=
\frac{\sqrt{9-s^2}(2+ic)}{2(5-s^2)}.
\end{equation}
Now apply $M_t$ to the vector
\[
v:=(1,-1,1,-1)^\top.
\]
A direct multiplication gives
\[
M_t v = (1-h-\overline h)v = \left(1-\frac{2\sqrt{9-s^2}}{5-s^2}\right)v.
\]
For $0\le s<1$ we have
\[
2\sqrt{9-s^2}\ge 2\sqrt8 > 5 \ge 5-s^2,
\]
so
\[
1-\frac{2\sqrt{9-s^2}}{5-s^2}<0.
\]
Hence $M_t$ has a negative eigenvalue and cannot be positive semidefinite, contradicting Theorem~\ref{thm:exact-transform}. We have proved the following.

\begin{proposition}\label{prop:case1-impossible}
If $0\le \sin 2t<1$, then no CPTP map sends every $\proj{\Psi_\phi}$ to $\proj{\chi^{(t,0)}_\phi}$.
\end{proposition}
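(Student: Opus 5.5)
The plan is to argue by contradiction via the Gram-matrix criterion of Theorem~\ref{thm:exact-transform}, applied to finitely many phases. If a CPTP map sent every $\proj{\Psi_\phi}$ to $\proj{\chi^{(t,0)}_\phi}$, it would in particular do so for the four phases $\phi_j\in\{0,\pi/2,\pi,3\pi/2\}$. The criterion then forces a positive semidefinite $M_t$ with unit diagonal and $G_\Psi=M_t\circ G_{\chi^{(t)}}$.

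The first step is to compute the relevant inner products. I would expand $P_t\ket{\Psi_\phi}$ in the orthonormal pair $\{\ket{\eta_t},\ket{\Sigma}\}$ using Eq.~\eqref{eq:Psi-expansion}. This gives the coefficient $(\cos t+e^{2i\phi}\sin t)/2$ on $\ket{\eta_t}$ and $e^{i\phi}/\sqrt2$ on $\ket{\Sigma}$, with squared norm $(3+s\cos 2\phi)/4$. From these I would read off the normalized vectors at the four phases, and then compute the inner products $\braket{\chi_0}{\chi_{\pi/2}}$, $\braket{\chi_0}{\chi_\pi}$ and $\braket{\chi_0}{\chi_{3\pi/2}}$. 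The remaining entries follow by the cyclic structure: the shift $\phi\mapsto\phi+\pi/2$ maps the family to itself up to phases and conjugation, so I would check each one, or simply compute all six directly.

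The input Gram entries are $\braket{\Psi_0}{\Psi_\beta}=\bigl((1+e^{i\beta})/2\bigr)^2$. These give $i/2$, $0$ and $-i/2$, and every pair has the same form. Because no entry of $G_{\chi^{(t)}}$ vanishes when $s<1$ (in particular $\braket{\chi_0}{\chi_\pi}=(s-1)/(s+3)\neq0$), the Hadamard equation determines $M_t$ uniquely. Its entries are $h=(i/2)/\braket{\chi_0}{\chi_{\pi/2}}$ on the cyclic neighbours, $0$ on the opposite pairs, and $\overline h$ in the reverse direction. This produces the circulant structure of Eq.~\eqref{eq:Mt-matrix}.

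To finish, I would test positivity on the alternating vector $v=(1,-1,1,-1)^\top$, which is an eigenvector of any such circulant with eigenvalue $1-2\Re h$. With $\Re h=\sqrt{9-s^2}/(5-s^2)$, the bounds $2\sqrt{9-s^2}\ge2\sqrt8>5\ge5-s^2$ give a negative eigenvalue, contradicting positive semidefiniteness. The main obstacle is the bookkeeping in the inner products: getting the phases and the normalization $\sqrt{(3+s)(3-s)}$ right, and confirming that every neighbouring pair produces the same $h$ rather than its conjugate. That consistency is needed for $M_t$ to be circulant, so that $v$ is exactly an eigenvector. If it failed, I would instead evaluate the quadratic form $v^\dagger M_t v=4-4\Re h$ directly, which needs only the real parts and gives the same conclusion.
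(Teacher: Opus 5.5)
Your proposal is correct and follows the paper's proof essentially step for step. It uses the same four phases, the same uniquely forced circulant $M_t$ (uniqueness relies on $s<1$, so that $\braket{\chi^{(t)}_0}{\chi^{(t)}_{\pi}}\neq 0$), and the same alternating vector with eigenvalue $1-2\Re h<0$. One slip in your optional fallback: the quadratic form is $v^\dagger M_t v=4-8\Re h=4(1-2\Re h)$, not $4-4\Re h$, which is positive throughout $0\le s<1$ and would give no contradiction. The fallback is unnecessary anyway, because direct computation shows every neighbouring pair gives $h$, even though $G_{\chi^{(t)}}$ itself is not circulant for $s>0$.
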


\subsection{The case $\sin 2t=1$}\label{subsec:case-sin-equals-1}

Now suppose $\sin 2t=1$, i.e. $t=\pi/4$. Then
\[
\ket{\eta_{\pi/4}} = \frac{\ket{00}+\ket{11}}{\sqrt2}.
\]
A short calculation shows that
\[
P_{\pi/4}\ket{\Psi_\phi} = \frac{e^{i\phi}}{\sqrt2}\bigl(\cos\phi\,\ket{\eta_{\pi/4}}+\ket{\Sigma}\bigr),
\]
so the normalized projected outputs are
\begin{equation}\label{eq:chi-pi-over-4}
\ket{\chi^{(\pi/4)}_\phi} = e^{i\phi}\,\frac{\cos\phi\,\ket{\eta_{\pi/4}}+\ket{\Sigma}}{\sqrt{1+\cos^2\phi}}.
\end{equation}
Choose the three phases
\[
\phi_1=0,
\qquad
\phi_2=\frac{\pi}{6},
\qquad
\phi_3=\frac{5\pi}{6}.
\]
If a channel mapped the corresponding input states to the three pure outputs from Eq.~\eqref{eq:chi-pi-over-4}, then Theorem~\ref{thm:exact-transform} would again force a positive semidefinite matrix $M$ with unit diagonal satisfying
\[
G_\Psi = M\circ G_{\chi^{(\pi/4)}}.
\]
A direct computation of the forced entrywise quotients gives
\begin{equation}\label{eq:M-pi-over-4}
M=
\begin{pmatrix}
1 & \sqrt{14}/4 & \sqrt{14}/4 \\
\sqrt{14}/4 & 1 & 7/4 \\
\sqrt{14}/4 & 7/4 & 1
\end{pmatrix}.
\end{equation}
Applying $M$ to $(0,1,-1)^\top$ gives
\[
M(0,1,-1)^\top = -\frac34(0,1,-1)^\top,
\]
so $M$ has a negative eigenvalue and is not positive semidefinite. Therefore:

\begin{proposition}\label{prop:case2-impossible}
If $\sin 2t=1$, then no CPTP map sends every $\proj{\Psi_\phi}$ to $\proj{\chi^{(t,0)}_\phi}$.
\end{proposition}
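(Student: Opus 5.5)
We apply the Gram-matrix criterion of Theorem~\ref{thm:exact-transform} to a small finite subfamily of the phase family. If a CPTP map sent every $\proj{\Psi_\phi}$ to $\proj{\chi^{(\pi/4,0)}_\phi}$, it would in particular do so for any three chosen phases $\phi_1,\phi_2,\phi_3$. So it suffices to exhibit three phases for which no positive semidefinite unit-diagonal $M$ satisfies $G_\Psi=M\circ G_{\chi}$. Since $\sin 2t=1$ forces $t=\pi/4$, the outputs are given explicitly by Eq.~\eqref{eq:chi-pi-over-4}, and everything reduces to a $3\times 3$ computation.

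First we compute both Gram matrices in closed form for general phases $a,b$, with $\delta:=b-a$. For the inputs,
\[
\braket{\Psi_a}{\Psi_b}=\braket{\psi_a}{\psi_b}^2=\Bigl(\tfrac{1+e^{i\delta}}{2}\Bigr)^2=e^{i\delta}\cos^2(\delta/2).
\]
For the outputs, Eq.~\eqref{eq:chi-pi-over-4} together with the orthonormality of $\ket{\eta_{\pi/4}}$ and $\ket{\Sigma}$ gives
\[
\braket{\chi_a}{\chi_b}=e^{i\delta}\,\frac{1+\cos a\cos b}{\sqrt{(1+\cos^2 a)(1+\cos^2 b)}}.
\]
Whenever $1+\cos a\cos b\neq 0$, the Hadamard relation forces the entry
\[
M_{ab}=\cos^2(\delta/2)\,\frac{\sqrt{(1+\cos^2a)(1+\cos^2b)}}{1+\cos a\cos b}.
\]
The phase factors $e^{i\delta}$ cancel, so $M$ is real and entirely determined by the chosen phases.

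Next we take $\phi_1=0$, $\phi_2=\pi/6$ and $\phi_3=5\pi/6$, where all denominators are nonzero, and evaluate the entries.
\begin{itemize}
\item For the pair $(0,\pi/6)$: $\cos^2(\pi/12)=(2+\sqrt3)/4$, $\sqrt{2\cdot 7/4}=\sqrt{14}/2$ and $1+\sqrt3/2=(2+\sqrt3)/2$. Hence $M_{12}=\sqrt{14}/4$.
\item For the pair $(0,5\pi/6)$: the same computation with $\cos(5\pi/6)=-\sqrt3/2$ gives $M_{13}=\sqrt{14}/4$.
\item For the pair $(\pi/6,5\pi/6)$: $\cos^2(\pi/3)=1/4$, $\sqrt{(7/4)^2}=7/4$ and $1-3/4=1/4$. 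Hence $M_{23}=7/4$.
\end{itemize}
This reproduces Eq.~\eqref{eq:M-pi-over-4}. Applying $M$ to $(0,1,-1)^\top$ gives eigenvalue $1-7/4=-3/4<0$, so $M$ is not positive semidefinite. Theorem~\ref{thm:exact-transform} then rules out the transformation already on these three states, and hence on the whole family.

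The main obstacle is choosing the phases, not the algebra. For the four equally spaced phases used in Proposition~\ref{prop:case1-impossible}, the denominator $1+\cos a\cos b$ vanishes for $t=\pi/4$ at pairs such as $(0,\pi)$. The output Gram entry then becomes $0$, the corresponding entry of $M$ is no longer forced, and the argument stalls. So the first thing to check is that every chosen pair has a nonzero output overlap and that one forced off-diagonal entry exceeds $1$. An entry larger than $1$ immediately contradicts positive semidefiniteness of a unit-diagonal matrix via a $2\times 2$ minor. The triple $0,\pi/6,5\pi/6$ achieves this with $M_{23}=7/4$.
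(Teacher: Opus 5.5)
Your proposal is correct and matches the paper's proof: the same three phases $0,\pi/6,5\pi/6$, the same forced unit-diagonal matrix $M$ from the Gram-matrix criterion (with $M_{12}=M_{13}=\sqrt{14}/4$ and $M_{23}=7/4$), and the same witness vector $(0,1,-1)^\top$ with eigenvalue $-3/4$. Your general formula for $M_{ab}$, and your remark that the equally spaced phases leave the $(0,\pi)$ entry unconstrained at $t=\pi/4$, usefully spell out the ``direct computation'' that the paper leaves implicit.
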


\subsection{Conclusion of the counterexample}\label{subsec:counterexample-conclusion}

Combining Lemma~\ref{lem:gamma-to-zero} and Propositions~\ref{prop:case1-impossible} and \ref{prop:case2-impossible} gives the following impossibility statement.

\begin{corollary}\label{cor:isometric-upper-not-attained}
No encoder $\Enc$ and isometry $V$ can satisfy
\[
F(\Enc,\Dec_V)=\frac34.
\]
Equivalently, the upper bound in Eq.~\eqref{eq:isometric-decoder-upper} is never attained.
\end{corollary}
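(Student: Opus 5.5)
The plan is to argue by contradiction and combine the results already established in this appendix. Suppose some encoder $\Enc$ and isometry $V$ satisfy $F(\Enc,\Dec_V)=\tfrac34$. The first step uses Section~\ref{subsec:equality-reduction}. There, equality in the chain Eq.~\eqref{eq:isometric-upper-start}--\eqref{eq:isometric-upper} forces two things. First, $P=VV^\dagger$ must be one of the maximizing projectors $P_{t,\gamma}$ of Eq.~\eqref{eq:Ptg}. Second, the pointwise bound $\bra{\Psi_\phi}\rho_\phi\ket{\Psi_\phi}\le s_\phi$ must be saturated for every $\phi$. The pointwise equality follows from continuity and nonnegativity of the integrand's gap. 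Saturation in turn forces $\rho_\phi=\proj{\chi^{(t,\gamma)}_\phi}$, because $\rho_\phi\le P_{t,\gamma}$ and $s_\phi>0$. So $\Phi=\Dec_V\circ\Enc$ would be a CPTP map sending every $\proj{\Psi_\phi}$ to $\proj{\chi^{(t,\gamma)}_\phi}$.

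The second step reduces to $\gamma=0$ using Lemma~\ref{lem:gamma-to-zero}. Conjugating by $W_{\gamma/2}$ transports such a channel to one sending each $\proj{\Psi_\phi}$ to $\proj{\chi^{(t,0)}_\phi}$, for the same $t\in[0,\pi/2]$.

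The third step splits on $s=\sin 2t\in[0,1]$. If $0\le s<1$, Proposition~\ref{prop:case1-impossible} rules out such a channel. If $s=1$, Proposition~\ref{prop:case2-impossible} rules it out. The key point is that a channel defined on the whole continuous family restricts to any finite subfamily. Theorem~\ref{thm:exact-transform} is therefore applicable to the four phases $0,\pi/2,\pi,3\pi/2$ in the first case and to the three phases $0,\pi/6,5\pi/6$ in the second. This is exactly how those propositions were proved. Either way we reach a contradiction, so the bound $\tfrac34$ in Eq.~\eqref{eq:isometric-decoder-upper} is never attained.

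I expect the main obstacle to be the first step: making the passage from equality of averages to equality for every $\phi$ airtight. The difference $\bra{\Psi_\phi}(P-\rho_\phi)\ket{\Psi_\phi}$ is continuous in $\phi$, since $\Enc$ is linear and $\ket{\Psi_\phi}$ is smooth. It is also nonnegative and has zero integral, so it vanishes everywhere. I also need to check that the case split covers all maximizers, namely $t\in[0,\pi/2]$ and hence $s\in[0,1]$. The case analyses themselves are already done in the preceding propositions.
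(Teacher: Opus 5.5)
Your proposal is correct and follows essentially the same route as the paper. The equality analysis of Section~\ref{subsec:equality-reduction} reduces attainment of $3/4$ to a CPTP map sending each $\proj{\Psi_\phi}$ to $\proj{\chi^{(t,\gamma)}_\phi}$, Lemma~\ref{lem:gamma-to-zero} removes $\gamma$, and Propositions~\ref{prop:case1-impossible} and \ref{prop:case2-impossible} (using the Gram-matrix criterion restricted to finitely many phases) rule out both $\sin 2t<1$ and $\sin 2t=1$; your checks on continuity in $\phi$ and on $t\in[0,\pi/2]$, so that $\sin 2t\in[0,1]$, are the same points the paper uses.
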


The set of encoders is compact in finite dimensions (for example via Choi matrices), and the set of isometries $V:\C^2\to (\C^2)^{\otimes 2}$ is compact as well. Since $(\Enc,V)\mapsto F(\Enc,\Dec_V)$ is continuous, the maximum over all encoder-isometric-decoder pairs is attained. By Corollary~\ref{cor:isometric-upper-not-attained}, that maximum is strictly smaller than $3/4$. Together with Eq.~\eqref{eq:general-sup-lower-bound}, this proves Proposition~\ref{prop:counterexample}.

\subsection{Detailed restatement and proof of Observation~\ref{obs:finite_eval}}
\label{sec:proof_finite_eval}

\myBlue
Let $\Phi_{U,V}:=\Dec_V\circ \Enc_U$,
\begin{equation}
\label{eq:finite_eval_setup}
f_{U,V}(\phi):=\bra{\Psi_\phi}\Phi_{U,V}(\proj{\Psi_\phi})\ket{\Psi_\phi},
\qquad
\tilde{F}_{\mu_{\mathrm{ph}}}(U,V):=\frac{1}{2\pi}\int_0^{2\pi} f_{U,V}(\phi)\,d\phi.
\end{equation}

\begin{observation}
\label{obs:finite_eval_formal}
Set $\phi_j:=2\pi j/5$ for $j=0,1,2,3,4$. Then
\begin{equation}
\label{eq:finite_eval_statement}
\tilde{F}_{\mu_{\mathrm{ph}}}(U,V)=\frac15\sum_{j=0}^4 f_{U,V}(\phi_j).
\end{equation}
In particular, the phase integral in Eq.~\eqref{eq:finite_eval_setup} can be replaced exactly by five evaluations of the fidelity integrand.
\end{observation}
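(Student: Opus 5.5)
The plan is to show that $f_{U,V}(\phi)$ is a trigonometric polynomial in $\phi$ of degree at most $4$. Once that is established, the claim follows from an exact quadrature identity. For an integer $\ell$ with $|\ell|\le 4$, the average $\frac15\sum_{j=0}^4 e^{i\ell\phi_j}$ equals $1$ if $5\mid \ell$ and $0$ otherwise. In the range $|\ell|\le 4$ the only multiple of $5$ is $\ell=0$. Hence the five-point average reproduces exactly the constant Fourier coefficient, and that coefficient is $\frac{1}{2\pi}\int_0^{2\pi} f_{U,V}(\phi)\,d\phi$.

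For the degree bound, expand $\ket{\Psi_\phi}$ as in Eq.~\eqref{eq:Psi-expansion}:
\[
\ket{\Psi_\phi}=\sum_{a=0}^{2} e^{ia\phi}\ket{w_a},
\qquad
\ket{w_0}=\tfrac12\ket{00},\quad \ket{w_1}=\tfrac{1}{\sqrt2}\ket{\Sigma},\quad \ket{w_2}=\tfrac12\ket{11}.
\]
Then $\proj{\Psi_\phi}=\sum_{a,b} e^{i(a-b)\phi}\ketbra{w_a}{w_b}$. The map $\Phi_{U,V}$ is a fixed linear map that does not depend on $\phi$, so
\[
f_{U,V}(\phi)=\sum_{a,b,c,e\in\{0,1,2\}} e^{i(a-b+e-c)\phi}\,\bra{w_c}\Phi_{U,V}\bigl(\ketbra{w_a}{w_b}\bigr)\ket{w_e}.
\]
Every frequency $a-b+e-c$ lies in $\{-4,\dots,4\}$. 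So $f_{U,V}(\phi)=\sum_{\ell=-4}^{4}\hat f_\ell e^{i\ell\phi}$ with coefficients $\hat f_\ell$ independent of $\phi$.

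Next apply the root-of-unity sum. Since $\phi_j=2\pi j/5$, we have $\frac15\sum_j e^{i\ell\phi_j}=\frac15\sum_j \omega^{\ell j}$ with $\omega=e^{2\pi i/5}$. This geometric sum equals $1$ when $\omega^\ell=1$, that is when $5\mid\ell$, and $0$ otherwise. Therefore $\frac15\sum_j f_{U,V}(\phi_j)=\hat f_0=\tilde F_{\mu_{\mathrm{ph}}}(U,V)$.

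No step is a real obstacle. The only point needing care is that $f_{U,V}$ has degree $4$, not $2$: the state's phases reach $e^{2i\phi}$, and the fidelity is quadratic in the state, which doubles the degree. This is exactly why five nodes are needed. With four equally spaced nodes the frequencies $\pm4$ would alias onto the constant term.
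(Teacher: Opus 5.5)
Your proposal is correct and is essentially the paper's proof. The paper also expands $\ket{\Psi_\phi}=\sum_{r=0}^{2}e^{ir\phi}\ket{v_r}$ and uses the linearity of $\Phi_{U,V}$ to show that $f_{U,V}$ is a trigonometric polynomial with frequencies in $\{-4,\dots,4\}$, after which the fifth-root-of-unity identity keeps exactly the zero mode. Your closing remark, that four nodes would alias the $\pm 4$ modes onto the constant term, is a useful justification of why five nodes are needed that the paper does not spell out.
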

\color{black}

\begin{proof}
Write
\begin{equation}
\ket{\Sigma}:=\frac{\ket{01}+\ket{10}}{\sqrt{2}},
\qquad
\ket{v_0}:=\frac12\ket{00},\qquad
\ket{v_1}:=\frac{1}{\sqrt2}\ket{\Sigma},\qquad
\ket{v_2}:=\frac12\ket{11}.
\end{equation}
From Eq.~\eqref{eq:dec_counter_source} and the definition of $\ket{\Sigma}$,
\begin{equation}
\label{eq:finite_eval_phase_expansion}
\ket{\Psi_\phi}
=
\frac12\ket{00}
+\frac{e^{i\phi}}{\sqrt2}\ket{\Sigma}
+\frac{e^{2i\phi}}2\ket{11}
=
\sum_{r=0}^2 e^{ir\phi}\ket{v_r}.
\end{equation}
Therefore
\begin{equation}
\proj{\Psi_\phi}
=
\sum_{a,b=0}^2 e^{i(a-b)\phi}\ket{v_a}\!\bra{v_b}.
\end{equation}
By linearity of $\Phi_{U,V}$,
\begin{equation}
\label{eq:finite_eval_linearity}
\Phi_{U,V}(\proj{\Psi_\phi})
=
\sum_{a,b=0}^2 e^{i(a-b)\phi}\,
\Phi_{U,V}\!\left(\ket{v_a}\!\bra{v_b}\right).
\end{equation}
Also,
\begin{equation}
\label{eq:finite_eval_braket}
\bra{\Psi_\phi}=\sum_{c=0}^2 e^{-ic\phi}\bra{v_c},
\qquad
\ket{\Psi_\phi}=\sum_{d=0}^2 e^{id\phi}\ket{v_d}.
\end{equation}
Substituting Eqs.~\eqref{eq:finite_eval_linearity} and \eqref{eq:finite_eval_braket} into the definition of $f_{U,V}(\phi)$ gives
\begin{align}
f_{U,V}(\phi)
&=
\sum_{a,b,c,d=0}^2
e^{i(a-b-c+d)\phi}\,
\bra{v_c}\Phi_{U,V}\!\left(\ket{v_a}\!\bra{v_b}\right)\ket{v_d}.
\end{align}
Now $a,b,c,d\in\{0,1,2\}$, so the integer
\[
m:=a-b-c+d
\]
always lies in $\{-4,-3,\dots,4\}$. Hence $f_{U,V}$ is a trigonometric polynomial of degree at most $4$:
\[
f_{U,V}(\phi)=\sum_{m=-4}^4 c_m(U,V)e^{im\phi}
\]
for suitable coefficients $c_m(U,V)$.

Averaging over $[0,2\pi)$ keeps only the zero Fourier mode. Thus
\begin{align}
\tilde{F}_{\mu_{\mathrm{ph}}}(U,V)
&=c_0(U,V)
=
\sum_{a,b,c,d=0}^2
\delta_{a-b-c+d,\,0}\;
\bra{v_c}\Phi_{U,V}\!\left(\ket{v_a}\!\bra{v_b}\right)\ket{v_d}.
\end{align}

Now let $\omega:=e^{2\pi i/5}$ and $\phi_j:=2\pi j/5$. The root-of-unity identity gives
\[
\frac15\sum_{j=0}^4 \omega^{jm}
=
\begin{cases}
1,&m\equiv 0\pmod 5,\\[2mm]
0,&m\not\equiv 0\pmod 5.
\end{cases}
\]
Because $m\in\{-4,-3,\dots,4\}$, the congruence $m\equiv 0\pmod 5$ is equivalent here to $m=0$. Therefore
\begin{align}
\frac15\sum_{j=0}^4 f_{U,V}(\phi_j)
&=
\sum_{a,b,c,d=0}^2
\left(
\frac15\sum_{j=0}^4
e^{2\pi i j(a-b-c+d)/5}
\right)
\bra{v_c}\Phi_{U,V}\!\left(\ket{v_a}\!\bra{v_b}\right)\ket{v_d}
\nonumber\\
&=
\sum_{a,b,c,d=0}^2
\delta_{a-b-c+d,\,0}\;
\bra{v_c}\Phi_{U,V}\!\left(\ket{v_a}\!\bra{v_b}\right)\ket{v_d}
\nonumber\\
&=
\tilde{F}_{\mu_{\mathrm{ph}}}(U,V).
\end{align}
This proves Eq.~\eqref{eq:finite_eval_statement}.
\end{proof}

\begin{remark}
The proof uses only the phase expansion in Eq.~\eqref{eq:finite_eval_phase_expansion} and the linearity of $\Phi_{U,V}$. Hence the same argument works for any fixed linear map
$
\Phi:L\bigl((\mathbb C^2)^{\otimes 2}\bigr)\to L\bigl((\mathbb C^2)^{\otimes 2}\bigr).
$
\end{remark}

\subsection{Proof of Theorem~\ref{thm:decoder-k1-threshold}}
\label{subsec:k1-extension}

\myBlue
We recall and prove Theorem~\ref{thm:decoder-k1-threshold}.
\DecoderSharpKOne*
\color{black}

\begin{proof}
We first prove the lower bound on the unrestricted optimum. Let
$\Enc_{\mathrm{ph}} : L(H_4) \to L(H_2)$ and
$\Dec_{\mathrm{ph}} : L(H_2) \to L(H_4)$ be the encoder and decoder constructed in
Appendix~F for the source $\mu_{\mathrm{ph}}$, so that
\[
F_{\mu_{\mathrm{ph}}}(\Enc_{\mathrm{ph}},\Dec_{\mathrm{ph}})=\frac{3}{4}.
\]
Define
\[
\Enc^{(n)} := \Enc_{\mathrm{ph}} \circ \operatorname{Tr}_R,
\qquad
\Dec^{(n)}(\rho) := \Dec_{\mathrm{ph}}(\rho) \otimes |0_R\rangle\langle 0_R|.
\]
Then $\Enc^{(n)} : L(H_{2^n}) \to L(H_2)$ and
$\Dec^{(n)} : L(H_2) \to L(H_{2^n})$ are CPTP, and for every $\phi$ we have
\[
(\Dec^{(n)} \circ \Enc^{(n)})(|\Omega_{\phi}\rangle\langle \Omega_{\phi}|)
=
(\Dec_{\mathrm{ph}} \circ \Enc_{\mathrm{ph}})(|\Psi_{\phi}\rangle\langle \Psi_{\phi}|)
\otimes |0_R\rangle\langle 0_R|.
\]
Therefore
\[
F_{\mu_{\mathrm{ph}}^{(n)}}(\Enc^{(n)},\Dec^{(n)})
=
F_{\mu_{\mathrm{ph}}}(\Enc_{\mathrm{ph}},\Dec_{\mathrm{ph}})
=
\frac{3}{4},
\]
which proves
\[
\sup_{\Enc,\Dec} F_{\mu_{\mathrm{ph}}^{(n)}}(\Enc,\Dec) \ge \frac{3}{4}.
\]

We next bound the performance of isometric decoders. Fix an encoder
$\Enc : L(H_{2^n}) \to L(H_2)$ and an isometry
$V : H_2 \to H_{2^n}$, and let $P := VV^{\dagger}$, so $P$ is a rank-two
projector on $H_{2^n}$. For each $\phi$, set
\[
\rho_{\phi} := V\,\Enc(|\Omega_{\phi}\rangle\langle \Omega_{\phi}|)V^{\dagger}.
\]
Since $\Enc(|\Omega_{\phi}\rangle\langle \Omega_{\phi}|)$ is a qubit density
operator, it is bounded above by $I_2$, and hence
\[
0 \le \rho_{\phi} \le V I_2 V^{\dagger} = P.
\]
Thus
\[
\langle \Omega_{\phi}|\rho_{\phi}|\Omega_{\phi}\rangle
\le
\langle \Omega_{\phi}|P|\Omega_{\phi}\rangle.
\]
Averaging over $\phi$ gives
\[
F_{\mu_{\mathrm{ph}}^{(n)}}(\Enc,\Dec_V)
\le
\operatorname{Tr}(P\,\overline{\rho}_{\mathrm{ph}}^{(n)}),
\qquad
\overline{\rho}_{\mathrm{ph}}^{(n)}
:=
\mathbb{E}_{\phi}[|\Omega_{\phi}\rangle\langle \Omega_{\phi}|].
\]
By the calculation already carried out in Eq.~\eqref{eq:rho-average},
\[
\overline{\rho}_{\mathrm{ph}}
:=
\mathbb{E}_{\phi}[|\Psi_{\phi}\rangle\langle \Psi_{\phi}|]
=
\frac{1}{4}|00\rangle\langle 00|
+
\frac{1}{2}|\Sigma\rangle\langle \Sigma|
+
\frac{1}{4}|11\rangle\langle 11|,
\]
so
\[
\overline{\rho}_{\mathrm{ph}}^{(n)}
=
\overline{\rho}_{\mathrm{ph}} \otimes |0_R\rangle\langle 0_R|.
\]
Hence the nonzero eigenvalues of $\overline{\rho}_{\mathrm{ph}}^{(n)}$ are
$1/2$, $1/4$, and $1/4$. Since $P$ has rank two,
\[
\operatorname{Tr}(P\,\overline{\rho}_{\mathrm{ph}}^{(n)}) \le \frac{1}{2}+\frac{1}{4}=\frac{3}{4}.
\]
Therefore
\[
\sup_{\Enc,\,V^{\dagger}V=I_2} F_{\mu_{\mathrm{ph}}^{(n)}}(\Enc,\Dec_V) \le \frac{3}{4}.
\]

To rule out equality, assume that some encoder $\Enc$ and isometry $V$ satisfy
\[
F_{\mu_{\mathrm{ph}}^{(n)}}(\Enc,\Dec_V)=\frac{3}{4}.
\]
Then the preceding inequalities are tight, so in particular
\[
\operatorname{Tr}(P\,\overline{\rho}_{\mathrm{ph}}^{(n)})=\frac{3}{4}.
\]
Because the only nonzero eigenvalues of $\overline{\rho}_{\mathrm{ph}}^{(n)}$ are
$1/2$, $1/4$, and $1/4$, equality implies that
$\operatorname{im}(P)$ is spanned by an eigenvector of eigenvalue $1/2$ together
with a unit vector from the $1/4$-eigenspace. In particular,
\[
\operatorname{im}(P) \subseteq \operatorname{supp}(\overline{\rho}_{\mathrm{ph}}^{(n)})
\subseteq
(\mathbb{C}^2)^{\otimes 2} \otimes \operatorname{span}\{|0_R\rangle\}.
\]
Hence there exists an isometry
$W : H_2 \to (\mathbb{C}^2)^{\otimes 2}$ such that
\[
V = W \otimes |0_R\rangle.
\]
Now define a two-qubit encoder
$\Enc^{\sharp} : L(H_4) \to L(H_2)$ by
\[
\Enc^{\sharp}(\rho) := \Enc(\rho \otimes |0_R\rangle\langle 0_R|).
\]
Then for every $\phi$,
\[
(\Dec_V \circ \Enc)(|\Omega_{\phi}\rangle\langle \Omega_{\phi}|)
=
(\Dec_W \circ \Enc^{\sharp})(|\Psi_{\phi}\rangle\langle \Psi_{\phi}|)
\otimes |0_R\rangle\langle 0_R|,
\]
and therefore
\[
F_{\mu_{\mathrm{ph}}}(\Enc^{\sharp},\Dec_W)
=
F_{\mu_{\mathrm{ph}}^{(n)}}(\Enc,\Dec_V)
=
\frac{3}{4}.
\]
This contradicts Corollary~\ref{cor:isometric-upper-not-attained}. Hence no encoder and isometry can attain
fidelity $3/4$ for $\mu_{\mathrm{ph}}^{(n)}$.

The sets of encoders $\Enc : L(H_{2^n}) \to L(H_2)$ and isometries
$V : H_2 \to H_{2^n}$ are compact in finite dimensions, and the map
$(\Enc,V) \mapsto F_{\mu_{\mathrm{ph}}^{(n)}}(\Enc,\Dec_V)$ is continuous. Therefore the
maximum over encoder-isometric-decoder pairs is attained. Since the value
$3/4$ is impossible, that maximum is strictly smaller than $3/4$:
\[
\sup_{\Enc,\,V^{\dagger}V=I_2} F_{\mu_{\mathrm{ph}}^{(n)}}(\Enc,\Dec_V) < \frac{3}{4}.
\]

Finally, consider an optimal $(n,1,n_B,n_E)$-QAE for the source
$\mu_{\mathrm{ph}}^{(n)}$. By Definition~\ref{def:qae} we must have $n_E \ge n-1$.
If $n_E < n$, then necessarily $n_E = n-1$. But then the discarded decoder
register $G$ has
$
1+n_E-n = 0
$
qubits, so the decoder channel is isometric. This is impossible for an
optimal QAE, because the globally achievable fidelity is at least $3/4$ while
all isometric decoders achieve fidelity strictly below $3/4$. Therefore every optimal
$(n,1,n_B,n_E)$-QAE for $\mu_{\mathrm{ph}}^{(n)}$ must satisfy $n_E \ge n$.
\end{proof}

\section{Non-isometric decoders can outperform isometric ones for a fixed suboptimal encoder}\label{sec:suboptimal-encoder-appendix}

The counterexample in Appendix~\ref{sec:counterexample-appendix} is about \emph{globally optimal} encoder--decoder pairs. A much simpler phenomenon already appears if the encoder is held fixed but suboptimal.

Let $n=2$ and $k=1$, and consider the source distribution consisting of
\[
\ket{\psi_0}=\ket{00},
\qquad
\ket{\psi_1}=\ket{11},
\]
sampled with probabilities $1-\eps$ and $\eps$, where $0<\eps<1/2$. With an optimal encoder one can achieve perfect reconstruction using, for example, the encoder whose Kraus operators are
\[
K_0=\ketbra{0}{00}+\ketbra{1}{11},
\qquad
K_1=\ketbra{0}{01}+\ketbra{1}{10},
\]
and the isometric decoder
\[
\Dec(\rho)=W\rho W^\dagger,
\qquad
W=\ketbra{00}{0}+\ketbra{11}{1}.
\]
Now instead fix the suboptimal encoder that discards the input and always outputs the maximally mixed qubit state $I_2/2$. If the decoder is allowed to be non-isometric, it can ignore its input and deterministically output $\ket{00}$, giving expected fidelity
\[
(1-\eps)\cdot 1 + \eps\cdot 0 = 1-\eps > \frac12.
\]
Suppose instead that the decoder is restricted to be an isometry $W$. Then the reconstructed output state is always
\[
W\frac{I_2}{2}W^\dagger = \frac12 P,
\qquad
P:=WW^\dagger,
\]
where $P$ is a rank-$2$ projector. For any pure state $\ket{\psi}$,
\[
\bra{\psi}\frac12 P\ket{\psi}\le \frac12.
\]
Hence the expected fidelity is at most $1/2$. So for this fixed suboptimal encoder, non-isometric decoders strictly outperform all isometric decoders. This is weaker than Proposition~\ref{prop:counterexample}, but it pinpoints the failure of a tempting proof strategy already noted in spirit by \citeMy{barnum1996general}.

% \myBlue
\section{A source-dependent multiplicative guarantee for isometric decoders}\label{subsec:isometric-multiplicative}

Isometric decoders are not universally sufficient for exact optimality, but one can still prove a source-dependent approximation guarantee in the same blind single-copy setting.
Let $d:=2^n$ and $m:=2^k$.
For a source distribution $\mu$ over pure states $\ket{\psi}\in \C^d$, define
\[
F^\star(\mu)
:=
\sup_{\calE,\calD} F_\mu(\calE,\calD),
\qquad
F^\star_{\mathrm{iso}}(\mu)
:=
\sup_{\calE,\,V^\dagger V=I_m} F_\mu(\calE,\calD_V).
\]
Also let
\[
\bar\rho
:=
\mathbb{E}_{\ket{\psi}\sim\mu}[\proj{\psi}],
\qquad
s_m
:=
\sum_{i=1}^m \lambda_i(\bar\rho),
\qquad
\eta_m
:=
1-s_m,
\]
where all eigenvalues are listed in nonincreasing order and padded with zeros when necessary.
For each rank-$m$ orthogonal projector $P$ on $\C^d$, define
\[
p_P(\psi):=\bra{\psi}P\ket{\psi},
\qquad
Y_P
:=
\mathbb{E}_{\ket{\psi}\sim\mu}\Bigl[(1-p_P(\psi))P\proj{\psi}P\Bigr].
\]

\begin{definition}
For finite-dimensional Hilbert spaces $A$ and $B$, we say that a state
$\rho\in L(A\otimes B)$ is separable if it can be written as
\[
\rho=\sum_{x=1}^N p_x\,\sigma_x\otimes \tau_x,
\qquad
p_x\ge 0,
\qquad
\sum_{x=1}^N p_x=1,
\]
where each $\sigma_x$ is a density operator on $A$ and each $\tau_x$ is a density operator on $B$.
\end{definition}

Finally, define
\[
B_\mu
:=
\sup_{\substack{P=P^\dagger=P^2\\ \rank(P)=m}}
\left(
\mathbb{E}_{\ket{\psi}\sim\mu}\bigl[p_P(\psi)^2\bigr]
+
\lambda_{\max}(Y_P)
\right)
\]
and
\[
U_\mu
:=
\sup
\left\{
\sum_{i=1}^m \lambda_i(\Omega)
:
\begin{array}{l}
\Omega\in L(\C^d\otimes H_E),\ \Omega\ge 0,\ \Tr(\Omega)=1,\\[0.2em]
\Tr_E(\Omega)=\bar\rho,\ \Omega\text{ separable},\\[0.2em]
H_E\text{ finite dimensional}
\end{array}
\right\}.
\]

We now give a detailed restatement and proof of Theorem~\ref{thm:informal_multiplicative}.

\begin{theorem}\label{thm:isometric-multiplicative}
With the notation above,
\[
\frac{F^\star_{\mathrm{iso}}(\mu)}{F^\star(\mu)}
\ge
\frac{B_\mu}{U_\mu}.
\]
Moreover,
\[
U_\mu\le s_m,
\qquad
B_\mu\ge \left(1-\frac{1}{m}\right)s_m^2+\frac{s_m}{m},
\]
and therefore
\[
\frac{F^\star_{\mathrm{iso}}(\mu)}{F^\star(\mu)}
\ge
\frac{B_\mu}{s_m}
\ge
1-\left(1-\frac{1}{m}\right)\eta_m.
\]
\end{theorem}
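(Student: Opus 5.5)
The proof splits into four inequalities: $F^\star(\mu)\le U_\mu$, $F^\star_{\mathrm{iso}}(\mu)\ge B_\mu$, $U_\mu\le s_m$, and the lower bound on $B_\mu$. Combining them gives $F^\star_{\mathrm{iso}}/F^\star\ge B_\mu/U_\mu\ge B_\mu/s_m$. The final scalar bound then follows from
\[
\frac{(1-1/m)s_m^2+s_m/m}{s_m}=(1-1/m)s_m+1/m=1-(1-1/m)\eta_m .
\]

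\textbf{Upper bound $F^\star\le U_\mu$.} Fix CPTP $\Enc,\Dec$ and write $\Dec(\sigma)=\Tr_E(W\sigma W^\dagger)$ with a Stinespring isometry $W:\C^m\to\C^d\otimes H_E$. Then
\[
F_\mu=\mathbb{E}_\psi\,\Tr\bigl[(\proj{\psi}\otimes I_E)\,W\Enc(\proj\psi)W^\dagger\bigr].
\]
Since $0\le\Enc(\proj\psi)\le I_m$, we have $W\Enc(\proj\psi)W^\dagger\le \Pi:=WW^\dagger$, a rank-$m$ projector on $\C^d\otimes H_E$. Hence
\[
F_\mu\le \Tr\bigl[\Pi\,\mathbb{E}_\psi(\proj\psi\otimes \omega_\psi)\bigr]
\]
for suitable states $\omega_\psi$. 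I expect the main obstacle to be choosing $\omega_\psi$ so that the $\psi$-dependence on $E$ is kept without exceeding the bound.

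The first thing I would try is to dilate the encoder as well. Take $\omega_\psi$ to be a state on $E$ built from $\Enc(\proj\psi)$, for instance the normalized $W^\dagger$-pullback. Then $\Omega:=\mathbb{E}_\psi\proj\psi\otimes\omega_\psi$ is separable by construction and satisfies $\Tr_E\Omega=\bar\rho$. The fidelity is at most $\Tr(\Pi\Omega)\le\sum_{i\le m}\lambda_i(\Omega)$ by Ky Fan's principle, and the right-hand side is at most $U_\mu$. If this does not close cleanly, I would use the weaker route: bound $\Tr(\Pi\Omega)$ by Ky Fan's principle for any admissible $\Omega$, and absorb the $\psi$-dependence via an operator inequality of the form $\proj\psi\otimes X_\psi\le$ (something separable).

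\textbf{$U_\mu\le s_m$.} For separable $\Omega=\sum_x p_x\sigma_x\otimes\tau_x$ and a rank-$m$ projector $Q$ on $\C^d\otimes H_E$, I would use the known fact that separable states satisfy the reduction-type majorization $\lambda(\Omega)\prec\lambda(\Tr_E\Omega)$ (Nielsen--Kempe). This gives $\sum_{i\le m}\lambda_i(\Omega)\le\sum_{i\le m}\lambda_i(\bar\rho)=s_m$. I will cite it as a black box.

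\textbf{Lower bound $F^\star_{\mathrm{iso}}\ge B_\mu$.} Fix a rank-$m$ projector $P=V_PV_P^\dagger$ and a unit vector $\xi\in\im(P)$. Let the encoder be
\[
\Enc_{P,\xi}(\rho)=V_P^\dagger P\rho PV_P+\Tr[(I-P)\rho]\,V_P^\dagger\proj\xi V_P,
\]
which is CPTP, and let the decoder be $\Dec_{V_P}$. The fidelity is
\[
\mathbb{E}\bigl[p_P(\psi)^2+(1-p_P(\psi))\,|\langle\psi|\xi\rangle|^2\bigr]=\mathbb{E}[p_P^2]+\langle\xi|Y_P|\xi\rangle .
\]
Choosing $\xi$ as the top eigenvector of $Y_P$, which lives in $\im P$, and taking the supremum over $P$ gives $B_\mu$.

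\textbf{Lower bound on $B_\mu$.} Take $P$ to be the top-$m$ eigenprojector of $\bar\rho$, so that $\mathbb{E}p_P=s_m$. Then
\[
\lambda_{\max}(Y_P)\ge\Tr(Y_P)/m=\bigl(\mathbb{E}p_P-\mathbb{E}p_P^2\bigr)/m .
\]
Therefore $B_\mu\ge(1-1/m)\mathbb{E}p_P^2+s_m/m\ge(1-1/m)s_m^2+s_m/m$, where the last step uses Jensen's inequality.
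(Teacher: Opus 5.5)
Your arguments for $U_\mu\le s_m$, for $F^\star_{\mathrm{iso}}(\mu)\ge B_\mu$, and for the lower bound on $B_\mu$ are sound. Citing the Nielsen--Kempe majorization $\lambda(\Omega)\prec\lambda(\Tr_E\Omega)$ for separable $\Omega$ is legitimate; the paper instead proves the needed top-$m$ inequality directly, by writing the Gram matrix of $\Omega$ as a Schur product $H\circ K$ with a correlation matrix $K$. Your encoder is the paper's $\Enc_{P,\xi}$ without the convention $V_P\ket{0}=\ket{\xi}$.

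The gap is in $F^\star(\mu)\le U_\mu$, exactly at the step you flag. The operator inequality $W\Enc(\proj{\psi})W^\dagger\le\Pi$ by itself only gives $F_\mu\le\mathbb{E}_\psi\Tr[\Pi(\proj{\psi}\otimes I_E)]$. This has the unnormalized $I_E$ in place of a state on $E$, so it is far too weak (it can even exceed $1$). Your ``hence $\dots$ for suitable states $\omega_\psi$'' therefore does not follow. The ``normalized $W^\dagger$-pullback'' is also not a well-defined state on $E$, since pulling back by $W^\dagger$ lands in $L(\C^m)$, not $L(H_E)$.

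The missing idea is to use that $X_\psi:=W\Enc(\proj{\psi})W^\dagger$ has unit trace and is supported on $\operatorname{im}\Pi$. This gives $\Tr[(\proj{\psi}\otimes I_E)X_\psi]\le\lambda_{\max}\bigl(\Pi(\proj{\psi}\otimes I_E)\Pi\bigr)$. Set $R_\psi:=(\bra{\psi}\otimes I_E)\Pi$. Then this operator is $R_\psi^\dagger R_\psi$, whose nonzero spectrum equals that of $R_\psi R_\psi^\dagger=(\bra{\psi}\otimes I_E)\Pi(\ket{\psi}\otimes I_E)\in L(H_E)$. Moreover, $\lambda_{\max}(R_\psi R_\psi^\dagger)=\max_\tau\Tr[\Pi(\proj{\psi}\otimes\tau)]$ over states $\tau$ on $H_E$. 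A top eigenvector $\tau_\psi$ then serves as your $\omega_\psi$, and $\Omega=\mathbb{E}_\psi[\proj{\psi}\otimes\tau_\psi]$ is separable with $\Tr_E\Omega=\bar\rho$. Your Ky Fan step then finishes the bound.

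A ``conditional state'' version of your first idea can also be made rigorous. Let $f_\psi:=\Tr[(\proj{\psi}\otimes I_E)X_\psi]>0$ and $\omega_\psi:=(\bra{\psi}\otimes I_E)X_\psi(\ket{\psi}\otimes I_E)/f_\psi$. Decompose $X_\psi=\sum_j x_j\proj{\chi_j}$ with $\chi_j\in\operatorname{im}\Pi$. Using $\Pi\ge\proj{\chi_j}$ together with Jensen's inequality gives $\Tr[\Pi(\proj{\psi}\otimes\omega_\psi)]\ge f_\psi$. Some such argument must be supplied, and your proposal gives none.
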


\begin{proof}
We divide the proof into four steps.

\paragraph{Step 1: the unrestricted optimum is at most $U_\mu$.}
Fix any encoder $\calE:L(\C^d)\to L(\C^m)$ and decoder $\calD:L(\C^m)\to L(\C^d)$.
Choose a Stinespring isometry $W:\C^m\to \C^d\otimes H_E$ such that
\[
\calD(\sigma)=\Tr_E\bigl(W\sigma W^\dagger\bigr),
\]
where $H_E$ is finite dimensional, and set
\[
P:=WW^\dagger.
\]
Then $P$ is a rank-$m$ orthogonal projector on $\C^d\otimes H_E$.
For a pure input state $\proj{\psi}$ and a latent state $\sigma\in L(\C^m)$,
\[
\bra{\psi}\calD(\sigma)\ket{\psi}
=
\Tr\bigl[(\proj{\psi}\otimes I_E)W\sigma W^\dagger\bigr].
\]
As $\sigma$ ranges over all density operators on $\C^m$, the state $W\sigma W^\dagger$ ranges over all density operators supported on $\im(P)$.
Hence
\[
\sup_{\sigma\ge 0,\,\Tr\sigma=1}\bra{\psi}\calD(\sigma)\ket{\psi}
=
\lambda_{\max}\bigl(P(\proj{\psi}\otimes I_E)P\bigr).
\]
Now define
\[
R_\psi := (\bra{\psi}\otimes I_E)P.
\]
Then
\[
P(\proj{\psi}\otimes I_E)P = R_\psi^\dagger R_\psi,
\qquad
(\bra{\psi}\otimes I_E) P (\ket{\psi}\otimes I_E) = R_\psi R_\psi^\dagger.
\]
Therefore these two operators have the same nonzero eigenvalues, and so
\[
\lambda_{\max}\bigl(P(\proj{\psi}\otimes I_E)P\bigr)
=
\lambda_{\max}\bigl( (\bra{\psi}\otimes I_E) P (\ket{\psi}\otimes I_E) \bigr).
\]
Since $(\bra{\psi}\otimes I_E) P (\ket{\psi}\otimes I_E)$ is positive semidefinite on $H_E$,
\[
\lambda_{\max}\bigl( (\bra{\psi}\otimes I_E) P (\ket{\psi}\otimes I_E) \bigr)
=
\sup_{\tau\ge 0,\,\Tr\tau=1}
\Tr\bigl[P(\proj{\psi}\otimes \tau)\bigr].
\]
Consequently,
\[
\bra{\psi}(\calD\circ \calE)(\proj{\psi})\ket{\psi}
\le
\sup_{\tau\ge 0,\,\Tr\tau=1}
\Tr\bigl[P(\proj{\psi}\otimes \tau)\bigr].
\]
Averaging over $\mu$ and taking the supremum over all measurable state-valued maps $\ket{\psi}\mapsto \tau_\psi$ gives
\[
F_\mu(\calE,\calD)
\le
\sup_{\ket{\psi}\mapsto \tau_\psi}
\Tr(P\Omega),
\qquad
\Omega
:=
\mathbb{E}_{\ket{\psi}\sim\mu}[\proj{\psi}\otimes \tau_\psi].
\]
Every such $\Omega$ is separable and satisfies $\Tr_E(\Omega)=\bar\rho$.
Since $P$ has rank $m$,
\[
\Tr(P\Omega)
\le
\sum_{i=1}^m \lambda_i(\Omega)
\le
U_\mu.
\]
Thus $F_\mu(\calE,\calD)\le U_\mu$ for every CPTP pair $(\calE,\calD)$, and therefore
\[
F^\star(\mu)\le U_\mu.
\]

\paragraph{Step 2: the auxiliary upper bound $U_\mu$ is at most $s_m$.}
Fix a finite-dimensional auxiliary space $H_E$ and a separable state $\Omega\in L(\C^d\otimes H_E)$ with $\Tr_E(\Omega)=\bar\rho$.
Because the ambient spaces are finite dimensional, $\Omega$ admits a finite pure-product decomposition,
\[
\Omega
=
\sum_{x=1}^N p_x\,\proj{a_x}\otimes \proj{b_x},
\qquad
p_x\ge 0,
\quad
\sum_{x=1}^N p_x=1.
\]
Set
\[
u_x:=\sqrt{p_x}\ket{a_x}\in \C^d,
\qquad
w_x:=u_x\otimes \ket{b_x}\in \C^d\otimes H_E.
\]
Let $A$ be the $d\times N$ matrix whose $x$-th column is $u_x$, and let $B$ be the $(d\dim H_E)\times N$ matrix whose $x$-th column is $w_x$.
Then
\[
\bar\rho = AA^\dagger,
\qquad
\Omega = BB^\dagger.
\]
Define the Gram matrices
\[
H:=A^\dagger A,
\qquad
G:=B^\dagger B,
\qquad
K_{xy}:=\braket{b_x}{b_y}.
\]
The nonzero eigenvalues of $AA^\dagger$ and $A^\dagger A$ coincide, and likewise for $BB^\dagger$ and $B^\dagger B$.
Hence the nonzero eigenvalues of $\bar\rho$ are the eigenvalues of $H$, and the nonzero eigenvalues of $\Omega$ are the eigenvalues of $G$.
Moreover,
\[
G_{xy}
=
\braket{w_x}{w_y}
=
\braket{u_x}{u_y}\braket{b_x}{b_y}
=
H_{xy}K_{xy},
\]
so
\[
G=H\circ K=K\circ H,
\]
where $\circ$ denotes the entrywise product.
Because $K$ is a Gram matrix of unit vectors, it is positive semidefinite and has diagonal entries equal to $1$.
Thus the Schur multiplier
\[
\Phi_K(X):=K\circ X
\]
is completely positive by the Schur product theorem, and it is both unital and trace preserving because $K_{xx}=1$ for all $x$.
Its Hilbert--Schmidt adjoint is
\[
\Phi_K^\dagger(X)=\overline K\circ X,
\]
and $\overline K$ is again a correlation matrix, so $\Phi_K^\dagger$ is also positive, unital, and trace preserving.
Now let $X$ be any Hermitian matrix.
By the Ky Fan variational principle,
\[
\sum_{i=1}^m \lambda_i(X)
=
\max_{\substack{0\le Q\le I\\ \Tr(Q)=m}} \Tr(QX).
\]
Applying this to $G=\Phi_K(H)$ yields
\[
\begin{aligned}
\sum_{i=1}^m \lambda_i(G)
&=
\max_{\substack{0\le Q\le I\\ \Tr(Q)=m}} \Tr\bigl(Q\,\Phi_K(H)\bigr) \\
&=
\max_{\substack{0\le Q\le I\\ \Tr(Q)=m}} \Tr\bigl(\Phi_K^\dagger(Q)\,H\bigr) \\
&\le
\max_{\substack{0\le Y\le I\\ \Tr(Y)=m}} \Tr(YH)
=
\sum_{i=1}^m \lambda_i(H).
\end{aligned}
\]
Because the eigenvalues of $H$ are the nonzero eigenvalues of $\bar\rho$, the right-hand side equals $s_m$.
Therefore
\[
\sum_{i=1}^m \lambda_i(\Omega)
=
\sum_{i=1}^m \lambda_i(G)
\le s_m.
\]
As this holds for every feasible $\Omega$, we conclude that
\[
U_\mu\le s_m.
\]

\paragraph{Step 3: the isometric optimum is at least $B_\mu$.}
Fix a rank-$m$ orthogonal projector $P$ on $\C^d$ and a unit vector $\ket{\xi}\in \im(P)$.
Choose an isometry $V_P:\C^m\to \C^d$ such that
\[
V_PV_P^\dagger=P,
\qquad
V_P\ket{0}=\ket{\xi}.
\]
Define an encoder and decoder by
\[
\calE_{P,\xi}(\rho)
:=
V_P^\dagger P\rho P V_P + \Tr\bigl((I-P)\rho\bigr)\proj{0},
\qquad
\calD_{V_P}(\sigma):=V_P\sigma V_P^\dagger.
\]
The map $E_{P,\xi}$ is CPTP because it is the sum of two completely positive maps and
\[
\Tr\bigl(\calE_{P,\xi}(\rho)\bigr)
=
\Tr(P\rho P)+\Tr\bigl((I-P)\rho\bigr)
=
\Tr(\rho).
\]
For a pure input state $\proj{\psi}$, write $p:=p_P(\psi)=\bra{\psi}P\ket{\psi}$.
Using $V_PV_P^\dagger=P$ and $V_P\proj{0}V_P^\dagger=\proj{\xi}$, we obtain
\[
(\calD_{V_P}\circ \calE_{P,\xi})(\proj{\psi})
=
P\proj{\psi}P + (1-p)\proj{\xi}.
\]
Therefore
\[
\bra{\psi}(\calD_{V_P}\circ \calE_{P,\xi})(\proj{\psi})\ket{\psi}
=
p^2 + (1-p)\abs{\braket{\xi}{\psi}}^2.
\]
Averaging over $\mu$ gives
\[
F_\mu(\calE_{P,\xi},\calD_{V_P})
=
\mathbb{E}_{\ket{\psi}\sim\mu}[p_P(\psi)^2]
+
\bra{\xi}Y_P\ket{\xi}.
\]
Since $Y_P$ is supported on $\im(P)$,
\[
\sup_{\substack{\ket{\xi}\in \im(P)\\ \norm{\xi}=1}}
F_\mu(\calE_{P,\xi},\calD_{V_P})
=
\mathbb{E}_{\ket{\psi}\sim\mu}[p_P(\psi)^2]
+
\lambda_{\max}(Y_P).
\]
Taking the supremum over all rank-$m$ projectors $P$ proves that
\[
F^\star_{\mathrm{iso}}(\mu)\ge B_\mu.
\]

\paragraph{Step 4: an explicit lower bound on $B_\mu$.}
Let $P_\star$ be the orthogonal projector onto the span of the eigenvectors of $\bar\rho$ corresponding to its $m$ largest eigenvalues.
Then
\[
\mathbb{E}_{\ket{\psi}\sim\mu}[p_{P_\star}(\psi)]
=
\Tr(P_\star\bar\rho)
=
s_m.
\]
Because $Y_{P_\star}\ge 0$ and $Y_{P_\star}$ is supported on the $m$-dimensional space $\im(P_\star)$,
\[
\lambda_{\max}(Y_{P_\star})
\ge
\frac{1}{m}\Tr(Y_{P_\star})
=
\frac{1}{m}
\mathbb{E}_{\ket{\psi}\sim\mu}\bigl[p_{P_\star}(\psi)-p_{P_\star}(\psi)^2\bigr].
\]
Hence
\begin{align}
\label{eq:B_mu_bound_almost}
B_\mu
&\ge
\mathbb{E}_{\ket{\psi}\sim\mu}\bigl[p_{P_\star}(\psi)^2\bigr]
+
\lambda_{\max}(Y_{P_\star}) \nonumber \\
&\ge
\left(1-\frac{1}{m}\right)
\mathbb{E}_{\ket{\psi}\sim\mu}\bigl[p_{P_\star}(\psi)^2\bigr]
+
\frac{1}{m}
\mathbb{E}_{\ket{\psi}\sim\mu}[p_{P_\star}(\psi)].
\end{align}
By Jensen's inequality,
\[
\mathbb{E}_{\ket{\psi}\sim\mu}\bigl[p_{P_\star}(\psi)^2\bigr]
\ge
\left(
\mathbb{E}_{\ket{\psi}\sim\mu}[p_{P_\star}(\psi)]
\right)^2
=
s_m^2.
\]
Substituting this and $\mathbb{E}[p_{P_\star}(\psi)]=s_m$ into Eq.~\eqref{eq:B_mu_bound_almost} yields
\[
B_\mu
\ge
\left(1-\frac{1}{m}\right)s_m^2+\frac{s_m}{m}.
\]

Combining Steps~1--4, we obtain
\[
\frac{F^\star_{\mathrm{iso}}(\mu)}{F^\star(\mu)}
\ge
\frac{B_\mu}{U_\mu}
\ge
\frac{B_\mu}{s_m}
\ge
\frac{\left(1-\frac{1}{m}\right)s_m^2+\frac{s_m}{m}}{s_m}
=
\left(1-\frac{1}{m}\right)s_m+\frac{1}{m}
=
1-\left(1-\frac{1}{m}\right)\eta_m.
\]
\end{proof}

\begin{remark}\label{rem:sm-optimal-average-state}
The converse $F^\star(\mu)\le s_m$ is optimal as a function of $\bar\rho$ alone.
Indeed, if
\[
\bar\rho = \sum_{i=1}^d \lambda_i(\bar\rho)\proj{i}
\]
in an orthonormal eigenbasis and $\mu$ is the orthogonal eigenensemble that draws $\ket{i}$ with probability $\lambda_i(\bar\rho)$, then a blind encoder can transmit the states $\ket{1},\ldots,\ket{m}$ perfectly through the $m$-dimensional bottleneck and map every $\ket{i}$ with $i>m$ to a fixed latent basis state.
With the obvious decoder, the resulting average fidelity is exactly $\sum_{i=1}^m \lambda_i(\bar\rho)=s_m$.
Thus any further improvement of the explicit bound
\[
\frac{F^\star_{\mathrm{iso}}(\mu)}{F^\star(\mu)}
\ge
1-\left(1-\frac{1}{m}\right)\eta_m
\]
must use information beyond the spectrum of $\bar\rho$ or must strengthen the lower bound on $F^\star_{\mathrm{iso}}(\mu)$.
\end{remark}
% \color{black}

\section{MNIST data preparation}
\label{sec:appendix_mnist_data_prep}

\paragraph{Objects and dimensions.}
Each (rescaled) MNIST image is represented as a real matrix
\[
x \in [0,1]^{28\times 28}.
\]
Fix integers $n>k\ge 1$, and let $d=2^n<784$. We map each image to a pure $n$-qubit state represented by a complex vector
\[
\psi \in \mathbb{C}^{d},\qquad \|\psi\|_2=1.
\]
Our goal is to concentrate most of the probability mass $\sum_{j=1}^{d}|\psi_j|^2$ into the first $m$ entries.
This data preparation scheme is designed to facilitate near-perfect reconstruction via a QAE with a bottleneck dimension of $2^k$. By concentrating the signal power into the low-frequency subspace, we ensure that the reconstruction infidelity is small. This approach enhances the sensitivity of our benchmarks, as small variations in model performance become more discernible when the total error magnitude is low.

\paragraph{Unitary-scaled 2D FFT feature map.}
We compute a 2D discrete Fourier transform of the image with scaling \citeMy{cooley1965algorithm}:
\[
F := \frac{1}{\sqrt{784}}\texttt{FFT2}(x)\ \in \mathbb{C}^{28\times 28}.
\]
We then apply an FFT shift
\[
\tilde F := \texttt{fftshift}(F),
\]
which permutes frequency bins so that the DC (zero-frequency) component is moved to the center of the $28\times 28$ grid.

\paragraph{Low-frequency-first ordering.}
Let the center index be $(c_y,c_x)=(14,14)$. For each frequency bin $(y,x)\in\{0,\dots,27\}^2$, define the squared radius from the center
\[
d(y,x)^2 := (y-c_y)^2 + (x-c_x)^2.
\]
We define an ordering (permutation) $\pi$ of the $784$ grid points by sorting bins in increasing order of $d(y,x)^2$ (with stable tie-breaking). Flattening $\tilde F$ into a length-$784$ vector and reordering by $\pi$ yields a sequence
\[
(a_1,a_2,\dots,a_{784}) \in \mathbb{C}^{784},
\]
where $a_1$ is the DC component, and increasing indices correspond to increasing spatial frequency. This low-frequency-first rule reflects the empirical fact that low spatial frequencies often capture most of the reconstruction-relevant content of images.
We retain only the first $d$ coefficients $(a_1,\dots,a_d)$. This discards higher-frequency coefficients (lossy truncation).

\paragraph{Head/tail placement and energy allocation.}
Split the retained coefficients into a ``head'' and ``tail'':
\[
u := (a_1,\dots,a_m)\in\mathbb{C}^m,\qquad
v := (a_{m+1},\dots,a_d)\in\mathbb{C}^{d-m}.
\]
We build an unnormalized state $\widehat\psi\in\mathbb{C}^d$ by placing the head coefficients in the first $m$ amplitudes and the tail coefficients next (optionally rescaled):
\[
\widehat\psi_j =
\begin{cases}
u_{j}, & 1\le j \le m,\\
\alpha\, v_{j-m}, & m+1 \le j \le d.
\end{cases}
\]
Given a target head energy fraction $e\in(0,1)$, we choose $\alpha\ge 0$ so that, after normalization, the fraction of probability mass in the head equals $e$. Specifically, we choose $e=0.9$ and set
\[
\alpha = \sqrt{\frac{1-e}{e}} \sqrt{\frac{\|u\|^2}{\|v\|^2}}.
\]

\paragraph{Normalization and output state.}
Finally we normalize to obtain a valid quantum state
\[
\psi := \frac{\widehat\psi}{\|\widehat\psi\|_2}.
\]
In non-degenerate cases, the construction ensures
$
\sum_{j=1}^{m} |\psi_j|^2 = e,
$
so a projective measurement onto the head subspace (spanned by the first $2^k$ computational basis vectors) succeeds with probability $e$. Moreover, because the head entries store the lowest-frequency Fourier coefficients, truncating to the first $2^k$ amplitudes corresponds to keeping a low-frequency approximation of the image, which is a simple proxy for preserving reconstruction-relevant information.

% \clearpage
\section{Training time}
\label{sec:training-time}

\begin{table}[h]
\caption{Training time (in seconds) across different configurations. For $\mu_{1,0.1}$, cfg1--cfg3 denote $(n,k,0,n-k)$, $(n,k,0,n)$, and $(n,k,k,n)$-QAEs, respectively. For $\mu_2$, they denote $(n,k,0,n-k)$, $(n,k,k,n-k)$, and $(n,k,k,n-k+1)$-QAEs.}
\vspace{3mm}
\label{tab:training-times-simple}
\centering
\begin{tabular}{lcc|ccc}
\toprule
Dataset & $n$ & $k$ & cfg1 & cfg2 & cfg3 \\
\midrule
$\mu_{1,0.1}$ & 2 & 1 & 129 & 102 & 115\\
      & 3 & 1 & 108 & 121 & 126\\
      & 3 & 2 & 105 & 127 & 145\\
      & 4 & 1 & 121 & 134 & 146\\
      & 4 & 2 & 121 & 152 & 182\\
      & 4 & 3 & 121 & 198 & 276\\
\midrule
$\mu_2$ & 2 & 1 & 154 & 119 & 122\\
        & 3 & 1 & 121 & 126 & 128\\
        & 3 & 2 & 120 & 134 & 136\\
        & 4 & 1 & 126 & 137 & 148\\
        & 4 & 2 & 126 & 156 & 166\\
        & 4 & 3 & 126 & 194 & 204\\
\bottomrule
\end{tabular}
\end{table}

\end{document}